\newtcolorbox{defn}[1]{breakable,
colbacktitle=gray!50!white, fonttitle=\bfseries, coltitle=black, title=Definition: {#1}}
\newcommand{\RN}[1]{%
	\textup{\uppercase\expandafter{\romannumeral#1}}%
}
\def\bfs{\boldsymbol}
\def\wt{\widetilde}
\def\Mot{ \mathrm{Mot}}
\def\C{\mathbb{C}}
\def\R{\mathbb{R}}
\def\wt{ \mathrm{wt}}
\newcommand{\im}{\operatorname{Im}}
\newcommand{\supp}{\operatorname{supp}}
\newcommand{\qbinom}[2]{{\begin{bmatrix}#1\\#2\end{bmatrix}}_q}
\newcommand{\floor}[1]{\left\lfloor#1\right\rfloor}
\newcommand{\set}[1]{\left\{#1\right\}}
\theoremstyle{plain}
\newtheorem{thm}{Theorem}[section]
\newtheorem{lem}[thm]{Lemma}
\newtheorem{prop}[thm]{Proposition}
\theoremstyle{remark}
\newtheorem{rem}{Remark}
\newcommand{\abs}[1]{\lvert#1\rvert}
\numberwithin{equation}{section}
\begin{document}

\title[Spectral analysis of $q$-deformed Al-Salam--Carlitz unitary ensembles]{Spectral analysis of $q$-deformed unitary ensembles \\ with the Al-Salam--Carlitz weight}
\author{Sung-Soo Byun}
\address{Department of Mathematical Sciences and Research Institute of Mathematics, Seoul National University, Seoul 151-747, Republic of Korea}
\email{sungsoobyun@snu.ac.kr}

\author{Yeong-Gwang Jung}
\address{Department of Mathematical Sciences, Seoul National University, Seoul 151-747, Republic of Korea}
\email{wollow21@snu.ac.kr} 

\author{Jaeseong Oh}
\address{June E Huh Center for Mathematical Challenges, Korea Institute for Advanced Study, 85 Hoegiro, Dongdaemun-gu, Seoul 02455, Republic of Korea}
\email{jsoh@kias.re.kr}


\begin{abstract} We study $q$-deformed random unitary ensembles associated with the weight function of the Al-Salam--Carlitz orthogonal polynomials, indexed by a parameter $a < 0$. In the special case $a = -1$, the model reduces to the $q$-deformed Gaussian unitary ensemble. Employing the Flajolet--Viennot theory together with the combinatorics of matchings, we derive an explicit positive-sum expression for the spectral moments. In the double-scaling regime $q = e^{-\lambda/N}$, where $N$ denotes the ensemble size and $\lambda > 0$ is fixed, we derive the first two terms in the large-$N$ expansion of the spectral moments. As a consequence, we obtain a closed-form expression for the limiting spectral density. Notably, this density exhibits two successive phase transitions as $\lambda$ increases, characterised by a reduction in the number of soft edges from two, to one, and eventually to none. Furthermore, we show that the limiting density coincides with the limiting zero distribution of the Al-Salam--Carlitz orthogonal polynomials under the same scaling.
\end{abstract}

\maketitle 

\section{Introduction and results}
The concept of \( q \)-deformation, where \( q \in (0,1) \), offers a powerful framework for constructing quantised analogues of classical mathematical structures. A fundamental example is the \( q \)-integer,
\begin{equation} 
[n]_q = \frac{1 - q^n}{1 - q},
\end{equation} 
which recovers the standard integer \( n \) in the limit \( q \to 1 \). Such deformations play a central role in the theory of orthogonal polynomials--most notably in the \( q \)-Askey scheme \cite{As80, KLS10,ISV87}--and are deeply connected with exactly solvable models in mathematical physics. In particular, they lead to rich families of determinantal point processes that serve as \( q \)-analogues of classical random matrix ensembles. The spectral properties of these models reflect \( q \)-dependent structures and reveal phenomena absent in the classical continuum limit \( (q = 1) \).

In this work, we investigate a natural $q$-deformation of a classical random matrix ensemble \cite{Fo10}. To introduce this model, we begin by defining the associated weight function.
Given parameters $a < 0$ and $q \in (0,1)$, we consider the weight function
 \begin{equation} \label{def of AlSalam weight}
\omega_U^{(a)}(x) \equiv \omega_U^{(a)}(x;q) := \frac{ (qx,qx/a;q)_\infty }{ (q,a,q/a;q)_\infty}, \qquad (a<x<1),  
 \end{equation}
where 
\begin{align} 
(a_1,\dots, a_k ; q)_\infty = \prod_{l=0}^\infty (1-a_1 q^l) \dots (1-a_k q^l).
\end{align}
The weight function \eqref{def of AlSalam weight} defines the orthogonality measure for the Al-Salam--Carlitz polynomials, see Subsection~\ref{Subsection_Al-Salam poly} for details.
For the special case $a=-1$, the weight \eqref{def of AlSalam weight} corresponds to that of the discrete $q$-Hermite polynomials.  
It is often convenient, particularly when analysing the continuum limit \( q \to 1 \), to adopt the following scaling:
\begin{equation} \label{def of AlSalam weight hat}
\widehat{\omega}_U^{ (a) } (x) \equiv \widehat{\omega}_U^{ (a) } (x;q) :=  \sqrt{1-q}\,\omega_U^{  (a) } ( \sqrt{1-q}\,x ). 
\end{equation}
Then we have 
\begin{equation} \label{limit of weight q to 1}
\lim_{ q \to 1^- }   \widehat{w}_U^{(a)}(x;q) \Big|_{a=-q^r} = \frac{1}{\sqrt{2\pi}} e^{ -(x-r)^2/2 },
\end{equation}
thereby recovering the (shifted) Gaussian weight in the continuum limit, see e.g. \cite[Eq.~(2.11)]{BF00}. 

We consider the system $\bfs{x}=\{x_j\}_{j=1}^N$ supported on the $q$-lattice $\{1,q,q^2,\dots \} \cup \{a,aq,aq^2,\dots \}$ having the joint distribution of the form (see e.g. \cite{Ol21,BF25a})
\begin{equation} \label{def of jpdf}
\frac{1}{Z_N}\prod_{1\le j<k \le N}(x_j-x_k)^2 \prod_{j=1}^N \omega_U^{(a)}(x_j), 
\end{equation}
where $Z_N$ is the normalisation constant. We refer this ensemble as the \textit{$q$-Al-Salam--Carlitz unitary ensemble}.
The special case $a = -1$, known as the \textit{$q$-deformed Gaussian unitary ensemble} (GUE), has been extensively studied in recent literature  \cite{BFO24,FLSY23,Co21,CCO20, Le05, LSYF23,MPS20} as a quantised version of the classical GUE. 
In this work, we extend previous results to the more general ensemble \eqref{def of jpdf} with arbitrary \( a < 0 \). Unlike the special case \( a = -1 \), where underlying symmetries simplify the analysis, the general case introduces new structural features. Before providing the details, we briefly summarise our main results.

\begin{itemize}
    \item \textbf{(Closed form of spectral moments; Theorem~\ref{Thm_spectral moments})} We derive closed-form expressions for the spectral moments of the ensemble using techniques from algebraic combinatorics, particularly the Flajolet--Viennot theory.
    \smallskip
    \item \textbf{(Large-$N$ expansion of spectral moments; Theorem~\ref{Thm_genus expansion})} Based on the positive-sum formula in Theorem~\ref{Thm_spectral moments}, we establish the large-$N$ expansion of the spectral moments. 
    \smallskip
    \item \textbf{(Limiting spectral density; Theorem~\ref{Thm_limiting density})} We determine the limiting spectral density on the exponential lattice. The result reveals a phase transition, with the density exhibiting three distinct regimes---by contrast with the two distinct regimes observed in the \( q \)-deformed GUE \cite{BFO24}. Moreover, we verify that the limiting spectral density agrees with the asymptotic zero distribution of the Al-Salam--Carlitz polynomials (Proposition~\ref{prop:Asymptotic zero distribution}).
\end{itemize}

Let us be more precise in introducing our main results. In the spectral analysis of random matrix ensembles, the fundamental observable is the \textit{spectral moments} defined by 
\begin{equation}
m_{N,p}^{ (a) }:= \mathbb{E} \Big[ \sum_{j=1}^N x_j^{p} \Big],
\end{equation}
where the expectation is taken with respect to \eqref{def of jpdf}. The importance of spectral moments in random matrix theory dates back to the foundational work of Wigner \cite{Wi55}, where the spectral moments of the GUE were analysed via moment-counting methods, leading to the celebrated semicircle law. Since then, there has been extensive research on spectral moments in various exactly solvable models. We refer the reader to the recent, though not complete, list of related works \cite{ABGS21a,CDO21,GGR21,CMOS19,RF21,BF24,By24b,ABO25}, as well as the references therein.

To analyse the system \eqref{def of jpdf}, an effective approach is to make use of the family of \( q \)-orthogonal polynomials associated with the weight \eqref{def of AlSalam weight}. These are the Al-Salam--Carlitz polynomials $U_n^{(a)}(x) \equiv U_n^{(a)}(x;q)$, defined by the recurrence relation
\begin{equation} \label{def of three term AlSalam}
x\,U_n^{(a)}(x) = U_{n+1}^{(a)}(x) +(a+1)q^n U_n^{(a)}(x) -a q^{n-1} (1-q^n)U_{n-1}^{(a)}(x)
\end{equation}
for $n \ge 1$, with initial conditions $U_{-1}^{(a)}(x)=0$ and $U_0^{(a)}(x)=1$. (From \eqref{def of three term AlSalam}, one can observe a notable simplification when \( a = -1 \), in which case the second term on the right-hand side vanishes.)
Using the theory of determinantal point processes, the spectral moments can be expressed in terms of the averaged density
\begin{equation} \label{def of 1pt density}
\rho_N^{(a)}(x) :=  \sum_{ j=0 }^{N-1} \frac{ U_{j}^{(a)}(x;q)^2  }{ (-a)^{j}(1-q)(q;q)_{j} q^{\frac{j(j-1)}{2}} }  \,\omega_U^{(a)}(x), 
\end{equation}
where  
$(a;q)_n  := (1-a)(1-aq) \dots (1-aq^{n-1})$. It then follows that 
\begin{equation} \label{def of moment in terms of q-integral}
m_{N,p}^{ (a) } =  \int_a^1 x^{p} \rho_N^{(a)}(x)\,d_q x, 
\end{equation}
where $d_q x$ denotes the Jackson $q$-integral, see Subsection~\ref{Subsection_Al-Salam poly}.  

In the analysis of spectral moments, one often aims to derive a closed-form expression to facilitate asymptotic analysis. For instance, explicit formulas \eqref{def of GUE moments} for the spectral moments of the GUE can be found in \cite{HT03,Ke99}, from which one can recover not only Wigner’s semicircle law \cite{Wi55}, but also the well-known Harer--Zagier recursion formula \cite{HZ86}. (See also \cite{MS11,MS12,MS13,CMSV16} for further applications.) However, such derivations are usually significantly more challenging for \( q \)-deformed models, as the relevant Jackson $q$-integrals can be evaluated explicitly only in special cases. In our first main result, we obtain a closed-form expression for \( m_{N,p}^{(a)} \) in the general case.
To describe it, we first introduce the following notation: for $b,c \in \mathbb{N}$, 
    \begin{equation}\label{def of H}
        \mathsf{H}(b,c) :=\sum_{0\leq j_{1}\leq j_{2}\leq\cdots\leq j_{c}\leq b}\prod_{k=1}^{c}\frac{[2j_{k}+k-2]_{q}!!}{[2j_{k}+k-1]_{q}!!}.
    \end{equation} 
Here, in the case $b=0$ or $c=0$, $\mathsf{H}(b,c)$ is defined by
\begin{equation}\label{def of H special cases}
\mathsf{H}(b,0)=1, \qquad \mathsf{H}(0,c)=\frac{1}{[c-1]_{q}!!}.
\end{equation}
Then we have the following.

\begin{thm}[\textbf{Spectral moments of Al-Salam--Carlitz unitary ensemble}] \label{Thm_spectral moments}
For any $p, N \in \mathbb{N}$ and $a<0$, we have 
\begin{align}
\begin{split} \label{moment closed main}
m_{N,p}^{ (a) }  &=  \sum_{j=0}^{N-1}\sum_{k=0}^{\floor{p/2}} \frac{    (-a)^{k} (1-q)^k}{ (a+1)^{2k-p} } \sum_{l=0}^k     q^{-l(p-l)+\frac{l(l-1)}{2}}\frac{[p]_{q}!}{[p-2l]_{q}!! \, [l]_{q}!}\mathsf{H}(k-l,p-2k)q^{j(p-l)}\qbinom{j}{l}.
\end{split}
\end{align}
Furthermore, we have  
\begin{equation}\label{eqn:symmetry of spectral moment}
m_{N,p}^{(1/a)} = \frac{1}{a^p} m_{N,p}^{(a)}. 
\end{equation}
\end{thm}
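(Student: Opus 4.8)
The plan is to derive the closed form \eqref{moment closed main} by evaluating the Jackson $q$-integral in \eqref{def of moment in terms of q-integral} combinatorially, via the Flajolet--Viennot correspondence between orthogonal-polynomial moments and weighted lattice paths, and then to read off the symmetry \eqref{eqn:symmetry of spectral moment} directly from the resulting expression.

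First I would reduce the moment to a path count. Writing $h_j := (-a)^j (1-q)(q;q)_j q^{j(j-1)/2}$ for the squared $q$-norm appearing in \eqref{def of 1pt density}, equations \eqref{def of moment in terms of q-integral} and \eqref{def of 1pt density} give $m_{N,p}^{(a)} = \sum_{j=0}^{N-1} h_j^{-1}\int_a^1 x^p\,U_j^{(a)}(x)^2\,\omega_U^{(a)}(x)\,d_q x$. Since the $U_n^{(a)}$ are monic and obey the three-term recurrence \eqref{def of three term AlSalam} with level weights $b_n = (a+1)q^n$ and down weights $\lambda_n = -a\,q^{n-1}(1-q^n)$, the Flajolet--Viennot theory identifies each normalized integral $h_j^{-1}\int x^p U_j^{(a)}(x)^2\,\omega_U^{(a)}\,d_q x$ with the generating function of weighted Motzkin paths of length $p$ starting and ending at height $j$ and never going below height $0$, where an up-step has weight $1$, a level-step at height $h$ has weight $b_h$, and a down-step from height $h$ has weight $\lambda_h$. (The total mass $\int \omega_U^{(a)}\,d_q x = h_0 = 1-q$ cancels exactly against the $(1-q)$ in $h_j$.) Hence $m_{N,p}^{(a)}$ is the weighted count of such paths, summed over base heights $j=0,\dots,N-1$.

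Next I would classify the paths and convert to matchings. A path returning to its start has $k$ up-steps, $k$ down-steps, and $p-2k$ level-steps for some $0\le k\le\floor{p/2}$, which produces the sum over $k$. The $a$-dependence factors out at once: the $k$ down-steps contribute $(-a)^k$ and, after writing $1-q^h=(1-q)[h]_q$, an overall $(1-q)^k$; the $p-2k$ level-steps contribute $(a+1)^{p-2k}=(a+1)^{-(2k-p)}$. This yields the prefactor $(-a)^k(1-q)^k/(a+1)^{2k-p}$, and leaves the purely $q$-theoretic weight $q^{-k}\,q^{\sum_{\mathrm{lev}} h + \sum_{\mathrm{down}} h}\prod_{\mathrm{down}}[h]_q$ to be summed over all admissible height profiles. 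Pairing each up-step with the down-step closing it turns the path into a partial matching of $[p]$ with $k$ arcs and $p-2k$ fixed points, so this residual weight is a $q$-statistic on matchings, which is where the combinatorics of matchings enters.

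The crux is the remaining double summation --- over height profiles and over $j$. Writing every visited height as $j$ plus a relative height pulls out a factor $q^{j\times(\text{displacement})}$; organizing the descent structure by an index $l$ that records how far the path dips below its base level $j$, and summing the resulting $q$-geometric series over the admissible heights in $\{0,1,\dots,j-1\}$, produces the $q$-binomial $\qbinom{j}{l}$ together with the powers $q^{j(p-l)}$ and $q^{-l(p-l)+l(l-1)/2}$. The sum over the heights of the remaining $k-l$ arcs and the $p-2k$ level-steps, listed in weakly increasing order, collapses to $\mathsf{H}(k-l,p-2k)$ by its very definition \eqref{def of H}--\eqref{def of H special cases}, while the $q$-weighted count of interleavings of the up-, down- and level-steps supplies the factor $[p]_q!/([p-2l]_q!!\,[l]_q!)$. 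Collecting these gives \eqref{moment closed main}. I expect this resummation to be the main obstacle: cleanly isolating the index $l$ so that the deepest excursions decouple from the rest, and then recognizing the two residual sums as exactly $\qbinom{j}{l}$ and $\mathsf{H}(k-l,p-2k)$, demands careful, induction-friendly bookkeeping of the $q$-powers, and this is the heart of the argument.

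Finally, the symmetry \eqref{eqn:symmetry of spectral moment} follows immediately from \eqref{moment closed main}, since the only $a$-dependence sits in the prefactor $(-a)^k/(a+1)^{2k-p}$. Using $(-1/a)^k=(-a)^k a^{-2k}$ and $(1/a+1)^{2k-p}=(a+1)^{2k-p}a^{\,p-2k}$ gives $(-1/a)^k/(1/a+1)^{2k-p}=a^{-p}\,(-a)^k/(a+1)^{2k-p}$; as every other factor in \eqref{moment closed main} is independent of $a$, we conclude $m_{N,p}^{(1/a)}=a^{-p}m_{N,p}^{(a)}$. (Equivalently, one checks by induction from \eqref{def of three term AlSalam} the scaling $U_n^{(1/a)}(x)=a^{-n}U_n^{(a)}(ax)$ and substitutes it into \eqref{def of moment in terms of q-integral}.)
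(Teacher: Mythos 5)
Your high-level strategy coincides with the paper's: reduce $m_{N,p}^{(a)}$ to Motzkin-path sums via Flajolet--Viennot with the recurrence weights $b_n=(a+1)q^n$, $\lambda_n=-aq^{n-1}(1-q^n)$, stratify by the number $k$ of up-steps to pull out $(-a)^k(1-q)^k(a+1)^{p-2k}$, and pass to matchings; your prefactor bookkeeping is correct, and your proof of the symmetry \eqref{eqn:symmetry of spectral moment} from the closed form (or from $U_n^{(1/a)}(x)=a^{-n}U_n^{(a)}(ax)$) is complete and is exactly what the paper's formula yields.

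However, the heart of the argument is missing, in two concrete respects. First, you assert that the sum of residual $q$-weights over weakly increasing height profiles ``collapses to $\mathsf{H}(k-l,p-2k)$ by its very definition.'' It does not: identifying a matching generating function with $\mathsf{H}$ is the paper's main lemma (Lemma~\ref{lem:alpha(a,b,c)}), which proves
\begin{equation*}
\sum_{M\in\mathrm{Mat}_{a,b,c}}q^{\mathrm{cr}(M)+2\,\mathrm{ne}(M)}=\qbinom{a}{2b+c}[2b+c-1]_{q}!!\,\mathsf{H}(b,c)
\end{equation*}
by verifying a four-term recurrence from first-vertex deletion; moreover the statistic $\mathrm{cr}+2\,\mathrm{ne}$ must first be constructed (via the bijection with labeled ``Al-Salam--Carlitz histories'') so that it reproduces the path weights step by step --- none of this is routine $q$-power bookkeeping. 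Second, and more seriously, your index $l$ is not a well-defined path statistic. You describe $l$ as recording ``how far the path dips below its base level $j$,'' but in the correct accounting $l$ counts the down-steps whose label, in the expansion $[h]_q=1+q+\cdots+q^{h-1}$ (equivalently, in the history-to-matching bijection), closes one of the $j$ initial vertices. Different labelings of the \emph{same} path land in different values of $l$, so a single path's weight $\prod\lambda_h$ splits across several $l$'s: for instance, the path $\mathrm{NE}\,\mathrm{SE}$ based at height $j$, which never dips below $j$, contributes to both $l=0$ and $l=1$ (the down-step at height $j+1$ may close either the fresh vertex or any of the $j$ initial ones). A stratification of \emph{paths} by dip depth, followed by geometric series over heights in $\{0,\dots,j-1\}$, therefore cannot produce \eqref{moment closed main}. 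The paper instead obtains $q^{(j-l)(p-l)}\qbinom{j}{l}$ from a projection map deleting the $j-l$ initial isolated vertices, and $q^{l(l-1)/2}[p]_q!/[p-l]_q!$ from sequentially inserting the $l$ arcs anchored in the initial segment (factors $q^{m-1}[p-l+m]_q$); these insertion and projection arguments, together with Lemma~\ref{lem:alpha(a,b,c)}, are precisely the content your sketch defers.
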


\begin{rem}[Moments of the Al-Salam--Carlitz polynomial]
Note that by \eqref{def of 1pt density} and \eqref{def of moment in terms of q-integral}, Theorem~\ref{Thm_spectral moments} yields  
\begin{align}
\begin{split}
 \int_a^1  x^p \, U_{j}^{(a)}(x;q)^2   \,\omega_U^{(a)}(x) \,d_qx 
&= (-a)^{j}(1-q)(q;q)_{j} q^{\frac{j(j-1)}{2}} \sum_{k=0}^{\floor{p/2}}  \frac{    (-a)^{k} (1-q)^k}{ (a+1)^{2k-p} }
\\
&\quad \times   \sum_{l=0}^k     q^{-l(p-l)+\frac{l(l-1)}{2}}\frac{[p]_{q}!}{[p-2l]_{q}!! \, [l]_{q}!}\mathsf{H}(k-l,p-2k)q^{j(p-l)}\qbinom{j}{l}. 
\end{split}
\end{align}
In particular, for the special case $j=0$, it gives rise to  
\begin{align}
\begin{split}
 \int_a^1  x^p   \,\omega_U^{(a)}(x) \,d_qx 
&= (1-q)  \sum_{k=0}^{\floor{p/2}}  \frac{    (-a)^{k} (1-q)^k}{ (a+1)^{2k-p} }  \frac{[p]_{q}!}{[p]_{q}!! }\mathsf{H}(k,p-2k). 
\end{split}
\end{align}
Such a moment has been studied in the literature both from a combinatorial perspective \cite{Kim97} and as the moment of a certain random variable \cite{AY25}.
Furthermore, for the special case $a=-1$, we have 
\begin{equation}
 \int_{-1}^1  x^{2p}   \,\omega_U^{(-1)}(x) \,d_qx =  (1-q)^{p+1} [2p-1]_{q}!!,
\end{equation}
which serves as a \( q \)-analogue of the classical Gaussian integral, see also \cite[Eq.~(1.41)]{BFO24}.  
\end{rem}

\begin{rem}[Comparison with the formulas form Jack/Macdonald polynomial theory]
In the recent work \cite{BF25a}, a $(q,t)$-extension of the Gaussian $\beta$-ensemble was investigated (see \cite{Le09,WF14} for the continuum case). In particular, the spectral moments were computed using symmetric function theory, specifically through the framework of Macdonald polynomials. This approach yields an alternative representation in the form of an alternating series. For the first few values, we are able to perform a consistency check with our results. To be precise, the first few spectral moments in Theorem~\ref{Thm_spectral moments} are given as follows:
\begin{equation*}
m_{N,0}^{(a)}=N, \qquad m_{N,1}^{(a)}=(a+1)\frac{1-q^{N}}{1-q},
    \qquad 
    m_{N,2}^{(a)}=\frac{1-q^{N}}{q(1-q^{2})}\Big((a^{2}+1)q+q^{N}(q+a(1+2q+q^2+aq))\Big). 
\end{equation*} 
These are consistent with \cite[Eqs.~(4.29)--(4.31)]{BF25a}, upon setting $t = q$ and $u = q^N$.
\end{rem}

Next, we investigate the large-\( N \) asymptotic expansion of the spectral moments. For the GUE spectral moments
\begin{equation}  \label{def of GUE moments}
 m_{N,2p}^{ \rm (GUE) }:= (2p-1)!! \sum_{ l=0 }^p \binom{N}{l+1}\binom{p}{l} 2^l ,
\end{equation}
the expansion occurs in powers of \( 1/N^2 \), and takes the form
\begin{equation} 
m_{N,2p}^{ \rm (GUE) }  =   C_p N^{p+1} + \sum_{g=1}^{\lfloor (p+1)/2 \rfloor}  \mathcal{E}_g(p)  N^{p+1-2g}, 
\end{equation}
where \( C_p := \frac{1}{p+1} \binom{2p}{p} \) is the Catalan number. The coefficients \( \mathcal{E}_g(p) \) can be expressed in terms of the Stirling numbers of the first kind, see \cite[Eq.~(1.27)]{BFO24}. Note that the Catalan number $C_p$ coincides with the $2p$-th moment of the semicircle law, that is,
\begin{equation} \label{def of semi-circle}
C_p = \int_{-2}^2 x^{2p} \rho_{\mathrm{sc}}(x)\,dx, \qquad \rho_{\mathrm{sc}}(x) := \frac{\sqrt{4 - x^2}}{2\pi} \mathbbm{1}_{[-2,2]}(x).
\end{equation}
This identity plays a fundamental role in the derivation of the semicircle law~\cite{Wi55}.

To analyse the large-$N$ behaviour of the $q$-deformed model, we properly scale the parameter $q$. In particular, obtaining a non-trivial limiting behaviour requires a double scaling limit in which $q \to 1$ at an appropriate rate as $N \to \infty$. Following the scaling regimes used in the Stieltjes--Wigert ensemble~\cite{Fo21} and the $q$-deformed Gaussian ensemble~\cite{BFO24}, we consider the following scaling:
\begin{equation} \label{def of q scaling}
    q=e^{-\frac{\lambda}{N}}, \qquad \lambda\geq0.
\end{equation} 
Recall also that the regularised incomplete beta function is given by 
\begin{equation} \label{def of beta ftn}
I_x(a,b)=  \frac{ \Gamma(a+b) }{ \Gamma(a)\Gamma(b) } \int_0^x t^{a-1} (1-t)^{b-1}\,dt.
\end{equation}
Then we have the following. 
 
\begin{thm}[\textbf{Large-$N$ expansion of spectral moments}] \label{Thm_genus expansion}
    Let $a < 0$ be fixed, and let $q$ be scaled according to \eqref{def of q scaling}.
 Then as $N\rightarrow\infty$, we have
    \begin{equation}\label{eqn:genus expansion}
      q^{ p/2} m_{N,p}^{ (a) }=\mathcal{M}_{p,0}N+\frac{\mathcal{M}_{p,1}}{N}+O\big(\frac{1}{N^3}\big),
    \end{equation}
    where  
    \begin{align}
    \begin{split}
     \label{def of mathcal Mp0}
        \mathcal{M}_{p,0}
       & =\frac{1}{\lambda}\sum_{l=0}^{\floor{p/2}}(a+1)^{p-2l}(-a)^{l}\frac{(p-l-1)!}{l!(p-2l)!}I_{1-\mathsf{s} }(l+1,p-l),
    \end{split}
    \\
     \begin{split}\label{def of mathcal Mp1}
    \mathcal{M}_{p,1} &=-\frac{\lambda p}{12}
        \sum_{l=0}^{\floor{p/2}}\frac{(a+1)^{p-2l}(-a)^{l}}{(p-2l)! \, l!}
        \\
        &\quad \times \bigg( {\frac{p}{2}(p-l-1)!}I_{1-\mathsf{s}  }(l+1,p-l)
        +\frac{(p-1)!}{(l-1)!}  \mathsf{s}^{p-l} (1-\mathsf{s}  )^{l-1}(p-l+2-(p+1) \mathsf{s} )\bigg).     
    \end{split}
    \end{align}
Here, $\mathsf{s}=e^{-\lambda}$. 
\end{thm}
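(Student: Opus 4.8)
The plan is to start from the exact identity \eqref{moment closed main} and perform a careful term-by-term asymptotic analysis under the double-scaling \eqref{def of q scaling}, writing $\mathsf{s}=q^{N}=e^{-\lambda}$, which stays fixed as $N\to\infty$. The decisive structural observation is that, although $q\to 1$, the quantity $q^{j}=e^{-\lambda j/N}$ remains of order one once $j$ is comparable to $N$: writing $x=j/N$, it tends to $e^{-\lambda x}$. Consequently the two $q$-deformed ingredients that carry powers of $N$, namely $(1-q)^{k}$ and the $q$-binomial $\qbinom{j}{l}$, scale oppositely and partially cancel. Using $(1-q)\sim \lambda/N$ together with $\qbinom{j}{l}=\prod_{i=1}^{l}\frac{1-q^{j-l+i}}{1-q^{i}}\sim \frac{(1-e^{-\lambda x})^{l}}{l!}\,(N/\lambda)^{l}$, one checks that the $(k,l)$-summand of \eqref{moment closed main} is of order $N^{\,1-(k-l)}$. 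Hence, to reach the two-term expansion \eqref{eqn:genus expansion}, only the three diagonals $l\in\{k,k-1,k-2\}$ contribute, each expanded to the appropriate subleading order.

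For the leading term I would isolate the diagonal $l=k$. There $\mathsf{H}(k-l,p-2k)=\mathsf{H}(0,p-2k)=1/[p-2k-1]_{q}!!$ by \eqref{def of H special cases}, and the identity $[p-2k]_{q}!!\,[p-2k-1]_{q}!!=[p-2k]_{q}!$ collapses the $q$-factorial prefactor to $[p]_{q}!/\big([p-2k]_{q}!\,[k]_{q}!\big)\to p!/\big((p-2k)!\,k!\big)$. Combining $(1-q)^{k}\qbinom{j}{k}\to (1-e^{-\lambda x})^{k}/k!$ with $q^{j(p-k)}\to e^{-\lambda x(p-k)}$ and $q^{p/2}\to1$, the sum over $j$ becomes a Riemann sum converging to $N\int_{0}^{1}(1-e^{-\lambda x})^{k}e^{-\lambda x(p-k)}\,dx$. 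The substitution $t=1-e^{-\lambda x}$ turns this into $\tfrac1\lambda\int_{0}^{1-\mathsf{s}}t^{k}(1-t)^{p-k-1}\,dt$, which is exactly the Beta integral underlying \eqref{def of beta ftn}; reorganising the constants reproduces $\mathcal{M}_{p,0}$ as in \eqref{def of mathcal Mp0}.

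The subleading term $\mathcal{M}_{p,1}/N$ is the heart of the proof and requires collecting every contribution of relative order $1/N$ and $1/N^{2}$, from four sources: the next Taylor coefficients of the purely $q$-dependent factors $(1-q)^{k}$, $q^{-l(p-l)+l(l-1)/2}$ and the $q$-factorial/$\mathsf{H}$ ratios; the off-diagonal summands $l=k-1$ (order $N^{0}$) and $l=k-2$ (order $N^{-1}$); the Euler--Maclaurin corrections to the Riemann sum, whose boundary terms $\tfrac12\big(f(0)-f(1)\big)$ and $\tfrac1{12N}\big(f'(1)-f'(0)\big)$ enter at orders $N^{0}$ and $N^{-1}$; and the expansion $q^{p/2}=1-\tfrac{\lambda p}{2N}+\tfrac{\lambda^{2}p^{2}}{8N^{2}}-\cdots$ of the centering prefactor. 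The first task is to verify that all $N^{0}$ contributions cancel: this is precisely the role of the factor $q^{p/2}$, the $-\tfrac{\lambda p}{2N}\,\mathcal{M}_{p,0}N$ piece offsetting the constant produced by the $l=k-1$ diagonal together with the $\tfrac12\big(f(0)-f(1)\big)$ boundary term. The surviving $N^{-1}$ terms then split into a bulk part, again evaluated through $I_{1-\mathsf{s}}(l+1,p-l)$, and a boundary part carrying $\mathsf{s}^{\,p-l}(1-\mathsf{s})^{\,l-1}$; the latter is recognisable as the Euler--Maclaurin derivative term evaluated at $x=1$ (where $q^{j}\to\mathsf{s}$ and $\tfrac{d}{dx}(1-e^{-\lambda x})^{l}\propto l(1-e^{-\lambda x})^{l-1}$), which is why the coefficient $-\tfrac{\lambda p}{12}$ in \eqref{def of mathcal Mp1} carries the Euler--Maclaurin constant $\tfrac1{12}$.

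The main obstacle is exactly this second-order bookkeeping: one must expand $\qbinom{j}{l}$ and $\mathsf{H}(k-l,p-2k)$ to two nontrivial orders uniformly in $j$, align the nested sums over $k,l$ after the Beta-integral substitution, and establish the algebraic cancellation of the entire $N^{0}$ layer. I expect the cleanest route is to first perform the sum over $j$ in closed form---a terminating $q$-binomial/$q$-Vandermonde summation yielding an explicit function of $q$ and $\mathsf{s}=q^{N}$---and only then Taylor expand in $\lambda/N$; this replaces the Euler--Maclaurin analysis by a single expansion and makes transparent the vanishing of the even-order terms, the jump from $1/N$ directly to $O(1/N^{3})$ reflecting the $N\mapsto -N$ (equivalently $q\mapsto q^{-1}$, with $q^{N}$ fixed) symmetry of $q^{p/2}m_{N,p}^{(a)}$. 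Finally, uniform control of the tails and of the Euler--Maclaurin (or Taylor) remainders, routine since the $k,l$ sums are finite and the integrands smooth, delivers the error bound $O(1/N^{3})$ in \eqref{eqn:genus expansion}.
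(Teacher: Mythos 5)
Your proposal is correct and follows essentially the same route as the paper: start from the exact formula \eqref{moment closed main}, use the order counting $(1-q)^k\sim (\lambda/N)^k$ versus $\sum_j q^{j(p-l)}\qbinom{j}{l}\sim N^{l+1}$ to reduce to the diagonals $k\in\{l,l+1,l+2\}$, identify the leading term with the incomplete beta function via the substitution $t=1-e^{-\lambda x}$, and then collect all $O(1/N)$ and $O(1/N^2)$ corrections. The only real difference is organisational: the paper imports the asymptotics of $\sum_{j=0}^{N-1}q^{j(p-l)}\qbinom{j}{l}$ as a ready-made lemma from \cite{BFO24} (with explicit Stirling-number coefficients) and expands $\mathfrak{F}_l(l)$, $\mathfrak{F}_l(l+1)$, $\mathfrak{F}_l(l+2)$ separately, whereas you propose to derive the $j$-sum asymptotics directly by Euler--Maclaurin (or a closed-form $q$-Vandermonde summation), which amounts to the same bookkeeping.
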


\begin{rem}[Alternative representations]
    From \cite[Eq.~(8.17.5)]{NIST} we have
    \begin{equation} \label{def of Mp0 v2}
        \mathcal{M}_{p,0}
        =\frac{1}{\lambda}\sum_{l=0}^{\floor{p/2}}(a+1)^{p-2l}(-a)^{l}\frac{(p-l-1)!}{l!(p-2l)!}\sum_{j=l+1}^{p}\binom{p}{j}(1- \mathsf{s}  )^{j} \mathsf{s}^{p-j}. 
    \end{equation}
    In addition, by using \eqref{def of beta ftn}, we have 
    \begin{equation} \label{def of Mp0 v3}
       \mathcal{M}_{p,0}= \frac{1}{\lambda}\int_{0}^{1-\mathsf{s} } (a+1)^{p}(1-t)^{p-1} {{}_2F_{1}}\Big(\frac{1-p}{2},-\frac{p}{2}\, ; \, 1 \, ; \, \frac{-4at}{(1-t)(a+1)^2}\Big)\, dt, 
    \end{equation}
    where ${}_2F_1$ is the hypergeometric function \cite[Chapter 15]{NIST}. 
\end{rem}

In our next result, we derive the limiting spectral density. 

\begin{thm}[\textbf{Limiting spectral density}] \label{Thm_limiting density}
Let $a < 0$ be fixed, and let $q$ be scaled according to \eqref{def of q scaling}.
As $N \to \infty$, in the sense of integration against continuous test functions $f \in C([a,1])$, we have 
\begin{equation}
\frac{1}{N}\rho^{ (a) }_N( x ) \to  \rho^{ (a) }( x ), \qquad (a<x<1).
\end{equation}
Here, $\rho^{(a)}$ satisfies the symmetry
\begin{equation} \label{symmetry of rho a}
\rho^{(1/a)}(x) = -\frac{1}{a}\, \rho^{(a)}\Big( \frac{x}{a} \Big).
\end{equation}
For $a \in [-1,0)$, the density $\rho^{(a)}$ admits the following explicit expression. (The case $a \le -1$ then follows by applying the symmetry relation~\eqref{symmetry of rho a}.)
\begin{itemize}
    \item \textup{\textbf{(Support of $\rho^{ (a) }$)}} 
    Let 
     \begin{equation} \label{def of u v(lambda) main}
    \mathsf{u} \equiv    \mathsf{u}(\lambda):= (1+a)e^{-\lambda} , \qquad \mathsf{v} \equiv  \mathsf{v}(\lambda):= 2\sqrt{-a(1-e^{-\lambda})e^{-\lambda}} .
    \end{equation} 
    Then we have    
    \begin{equation}\label{def of support}
    \supp( \rho^{(a)}(x) )= \begin{cases}
    (\mathsf{u}-\mathsf{v},\mathsf{u}+\mathsf{v} ) & \textup{if } \lambda \in (0, \log(1-a) ) , 
    \smallskip 
    \\ 
    (\mathsf{u}-\mathsf{v},1 ) &\textup{if }\lambda \in (\log(1-a),\log(1-a)-\log(-a) ) ,
    \smallskip 
    \\
    (a,1) & \textup{if }\lambda \in (\log(1-a)-\log(-a),\infty) .
    \end{cases}
    \end{equation}
     \item \textup{\textbf{(Density of $\rho^{ (a) }$)}} Let 
\begin{equation}\label{eqn:def of x0x1}
        x_{0}=\frac{a^{2}+1-x(a+1)}{(a-1)^{2}},\qquad x_{1}=\frac{\sqrt{4a(x-a)(x-1)}}{(a-1)^{2}} . 
    \end{equation}
   Then we have 
       \begin{align} \label{def of limiting density}
    \begin{split}
    \rho^{(a)}(x) & = \frac{2}{\pi\lambda\abs{x} }\arctan{\sqrt{\frac{1-x_0-x_1}{1-x_0+x_1}\frac{1-e^{-\lambda}-x_0+x_1}{x_0+x_1-1+e^{-\lambda}}}}\, \mathbbm{1}_{(\mathsf{u}-\mathsf{v},\mathsf{u}+\mathsf{v} ) }( x )  
    \\
    & \quad + \begin{cases}
    0 &\textup{if } \lambda \in (0, \log(1-a) ) , 
    \smallskip 
    \\
    \displaystyle 
    \frac{1}{\lambda |x|} \, \mathbbm{1}_{(\mathsf{u}+\mathsf{v},1)} (x)
    &\textup{if }\lambda \in ( \log(1-a), \log(1-a)-\log(-a)) , 
    \smallskip 
    \\
   \displaystyle \frac{1}{\lambda |x|} \, \mathbbm{1}_{(a,\mathsf{u}-\mathsf{v}) \cup (\mathsf{u}+\mathsf{v},1) } (x) 
    &\textup{if }\lambda \in (\log(1-a)-\log(-a),\infty) . 
    \end{cases}
    \end{split}
    \end{align} 
\end{itemize} 
\end{thm}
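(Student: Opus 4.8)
The plan is to recover $\rho^{(a)}$ from its moments. By Theorem~\ref{Thm_genus expansion} together with the scaling \eqref{def of q scaling}, we have $q^{p/2}=e^{-\lambda p/(2N)}=1+O(1/N)$, so dividing \eqref{eqn:genus expansion} by $N$ gives $\tfrac1N m_{N,p}^{(a)}\to\mathcal{M}_{p,0}$ for every $p\in\mathbb{N}$. In view of \eqref{def of moment in terms of q-integral}, the probability measures $\mu_N:=\tfrac1N\rho_N^{(a)}(x)\,d_qx$ on the compact interval $[a,1]$ satisfy $\int x^p\,d\mu_N\to\mathcal{M}_{p,0}$. Since a measure on a compact interval is determined by its moments, the method of moments yields $\mu_N\Rightarrow\mu$ weakly, where $\mu$ is the unique probability measure on $[a,1]$ with moments $\mathcal{M}_{p,0}$; as polynomials are dense in $C([a,1])$, this is exactly the asserted convergence against test functions. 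It then remains to identify $\mu$ with $\rho^{(a)}(x)\,dx$, i.e.\ to verify $\int_a^1 x^p\rho^{(a)}(x)\,dx=\mathcal{M}_{p,0}$ for the explicit density in \eqref{def of limiting density}.

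I would carry out this identification through the Cauchy transform $G(z)=\sum_{p\ge0}\mathcal{M}_{p,0}\,z^{-p-1}$. Starting from the representation \eqref{def of Mp0 v3}, expanding the ${}_2F_1$ as a terminating series and using the elementary generating functions $\sum_m\binom{m+2k}{m}u^m=(1-u)^{-2k-1}$ and $\sum_k\binom{2k}{k}y^k=(1-4y)^{-1/2}$, the resulting double sum collapses; after the substitution $y=1-t$ one obtains the closed form
\begin{equation*}
G(z)=\frac{1}{\lambda}\int_{\mathsf s}^{1}\frac{dy}{y\sqrt{(1-a)^{2}y^{2}-2\big(z(1+a)-2a\big)y+z^{2}}},\qquad \mathsf s=e^{-\lambda},
\end{equation*}
which behaves like $1/z$ at infinity, consistent with unit mass since $\int_{\mathsf s}^1 dy/y=\lambda$. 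Write $Q_z(y)$ for the quadratic under the root.

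The density then follows from Stieltjes inversion, $\rho^{(a)}(x)=-\tfrac1\pi\lim_{\varepsilon\to0^+}\im G(x+i\varepsilon)$. For real $x\in(a,1)$ the contribution to $\im G$ comes precisely from the sub-interval of $[\mathsf s,1]$ on which $Q_x(y)<0$, namely $(y_-,y_+)\cap[\mathsf s,1]$, where $y_\pm=1-x_0\pm x_1$ are the roots of $Q_x$ with $x_0,x_1$ as in \eqref{eqn:def of x0x1}. Two identities organise the case analysis: $Q_x(1)=(x-1-a)^2\ge0$, so $1\notin(y_-,y_+)$, and $Q_x(\mathsf s)=(x-(\mathsf u-\mathsf v))(x-(\mathsf u+\mathsf v))$, so $\mathsf s\in(y_-,y_+)$ exactly when $x\in(\mathsf u-\mathsf v,\mathsf u+\mathsf v)$, with $\mathsf u,\mathsf v$ as in \eqref{def of u v(lambda) main}. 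When $\mathsf s$ lies between the roots, integrating $1/(y\sqrt{-Q_x})$ from $\mathsf s$ to $y_+$ produces, because the constant term of $Q_x$ equals $z^2=x^2$, the factor $1/|x|$ together with the $\arctan$ of \eqref{def of limiting density}; when both roots lie in $(\mathsf s,1)$ the complete integral $\int_{y_-}^{y_+}dy/(y\sqrt{-Q_x})=\pi/\sqrt{y_-y_+}=\pi(1-a)/|x|$ yields the flat term $\tfrac{1}{\lambda|x|}$. Combining the two contributions as $x$ ranges over $(a,1)$ reproduces \eqref{def of limiting density}, and the three regimes in \eqref{def of support} correspond to the soft edges $\mathsf u\pm\mathsf v$ crossing the hard walls $1$ and $a$, i.e.\ $\mathsf u+\mathsf v=1$ at $\lambda=\log(1-a)$ and $\mathsf u-\mathsf v=a$ at $\lambda=\log(1-a)-\log(-a)$.

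Finally, the symmetry \eqref{symmetry of rho a} is inherited from the moment symmetry \eqref{eqn:symmetry of spectral moment}: passing to the limit shows the limiting moments for parameter $1/a$ equal $a^{-p}$ times those for $a$, and the change of variables $x\mapsto ax$ turns this into the stated relation between densities, which also reduces the explicit computation to $a\in[-1,0)$. The step I expect to be the main obstacle is the Stieltjes inversion: carrying out the branch-cut bookkeeping to determine which portion of $[\mathsf s,1]$ actually contributes, so as to reproduce the precise $\arctan$ argument $\sqrt{\tfrac{y_-(y_+-\mathsf s)}{y_+(\mathsf s-y_-)}}$ and the exact prefactors, and to pin down the two phase boundaries; by comparison, the collapse of the generating function to the closed-form integral and the elementary evaluation of the flat part are routine.
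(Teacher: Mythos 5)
Your proposal follows essentially the same route as the paper: sum the leading moment coefficients $\mathcal{M}_{p,0}$ into a Stieltjes transform, collapse the double sum with the two generalised binomial series into a one-dimensional integral (your $y=1-t$ form is exactly the paper's integral in the variable $t$), and recover the density and its three-phase support by Stieltjes inversion, with the symmetry inherited from \eqref{eqn:symmetry of spectral moment}; your explicit method-of-moments justification of the weak convergence is a welcome elaboration of a step the paper compresses into one sentence. The only blemish is the intermediate identity $\int_{y_-}^{y_+}dy/(y\sqrt{-Q_x})=\pi/\sqrt{y_-y_+}$, which omits the factor $1/(1-a)$ coming from $\sqrt{-Q_x}=(1-a)\sqrt{(y-y_-)(y_+-y)}$; since $\sqrt{y_-y_+}=|x|/(1-a)$, the two slips cancel and your stated flat term $1/(\lambda|x|)$ is nevertheless correct.
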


See Figure~\ref{Fig_limiting density} for the graphs of the limiting density $\rho^{(a)}$. 
Perhaps the most remarkable feature of the limiting spectral density is the presence of a phase transition. 
\begin{itemize}
    \item For $\lambda < \log(1 - a)$, the density exhibits square-root decay at both spectral edges $\mathsf{u} \pm \mathsf{v}$, reminiscent of the semicircle law \eqref{def of semi-circle}. See Figure~\ref{Fig_limiting density} (A). 
    \smallskip 
    \item In the intermediate regime $\lambda \in (\log(1 - a), \log(1 - a) - \log(-a))$, the right endpoint of the spectrum reaches the hard edge at $x = 1$, resulting in one soft edge and one hard edge. Notably, unlike the classical hard-edge behaviour in continuous random matrix ensembles, the density remains bounded at the hard edge. See Figure~\ref{Fig_limiting density} (B).
    \smallskip 
    \item For $\lambda > \log(1 - a) - \log(-a)$, the left endpoint also reaches the hard edge at $x = a$, so that both edges become hard. See Figure~\ref{Fig_limiting density} (C).
\end{itemize}
Another remarkable feature in the presence of the hard edge is the emergence of a pronounced peak in the bulk of the spectrum—similar to what has been observed in the Stieltjes--Wigert ensemble~\cite{Fo21} and the $q$-deformed Gaussian ensemble~\cite{BFO24}.

\begin{figure}[t]
	\begin{subfigure}{0.3\textwidth}
		\begin{center}	
		 \includegraphics[width=\textwidth]{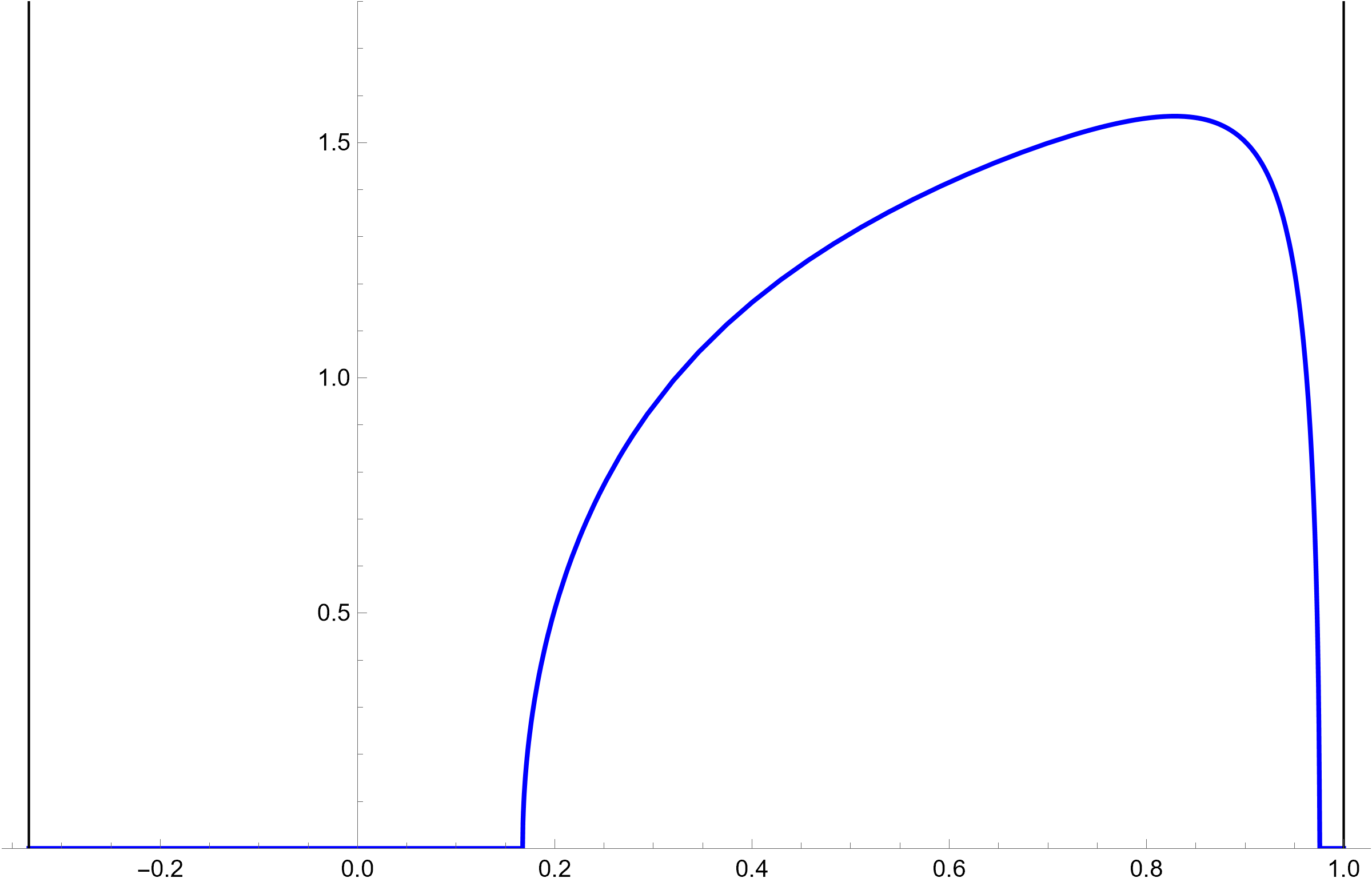}
		\end{center}
		\subcaption{$\log(7/6)$}
	\end{subfigure}	
\quad
	  	\begin{subfigure}{0.3\textwidth}
		\begin{center}	
		    \includegraphics[width=\textwidth]{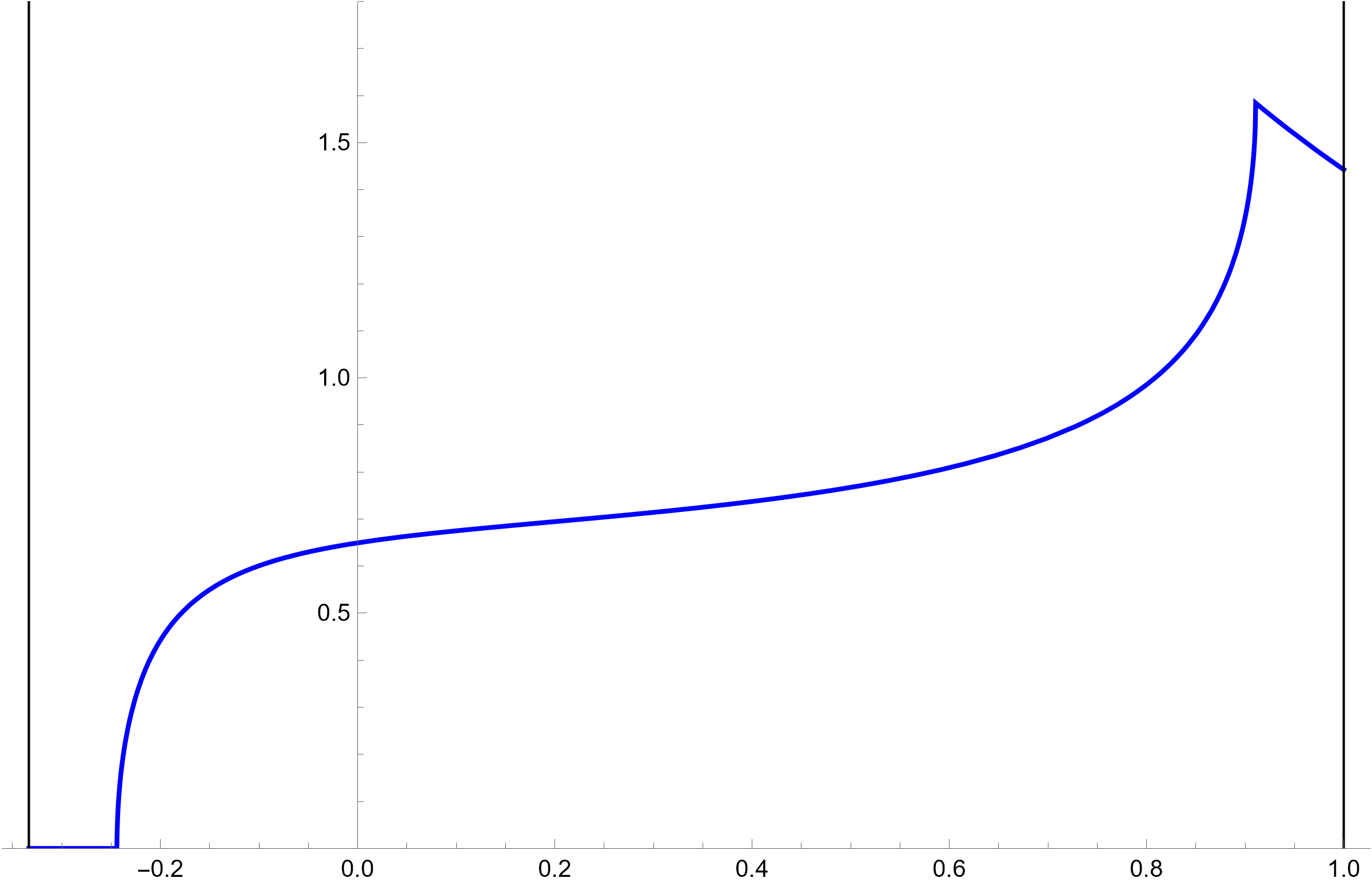}
		\end{center}
		\subcaption{$\log(2)$}
	\end{subfigure}	
 \quad
 \begin{subfigure}{0.3\textwidth}
		\begin{center}	
	   \includegraphics[width=\textwidth]{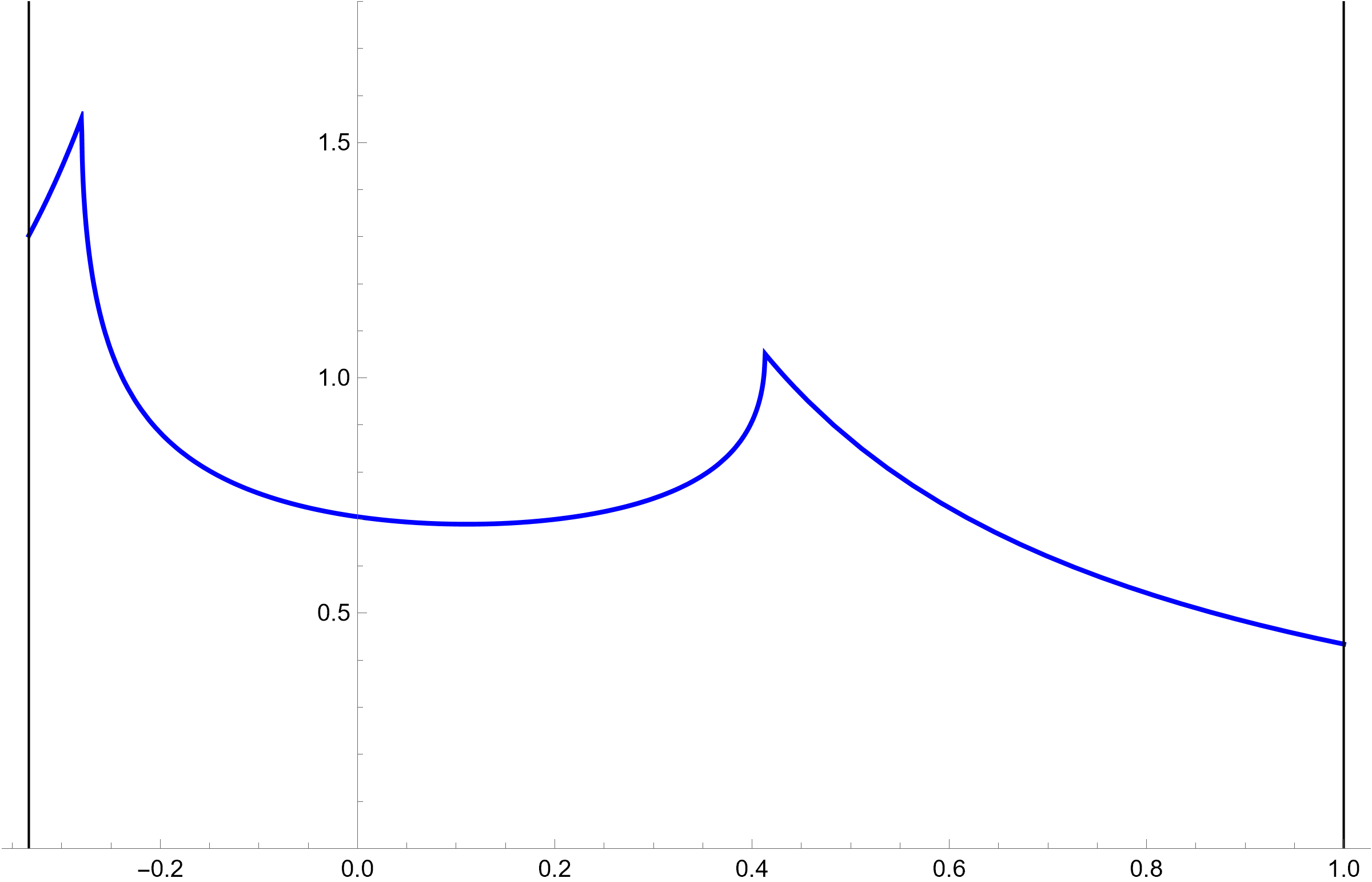}
		\end{center}
		\subcaption{ $\log(10)$}
	\end{subfigure}

	\begin{subfigure}{0.3\textwidth}
		\begin{center}	
		    \includegraphics[width=\textwidth]{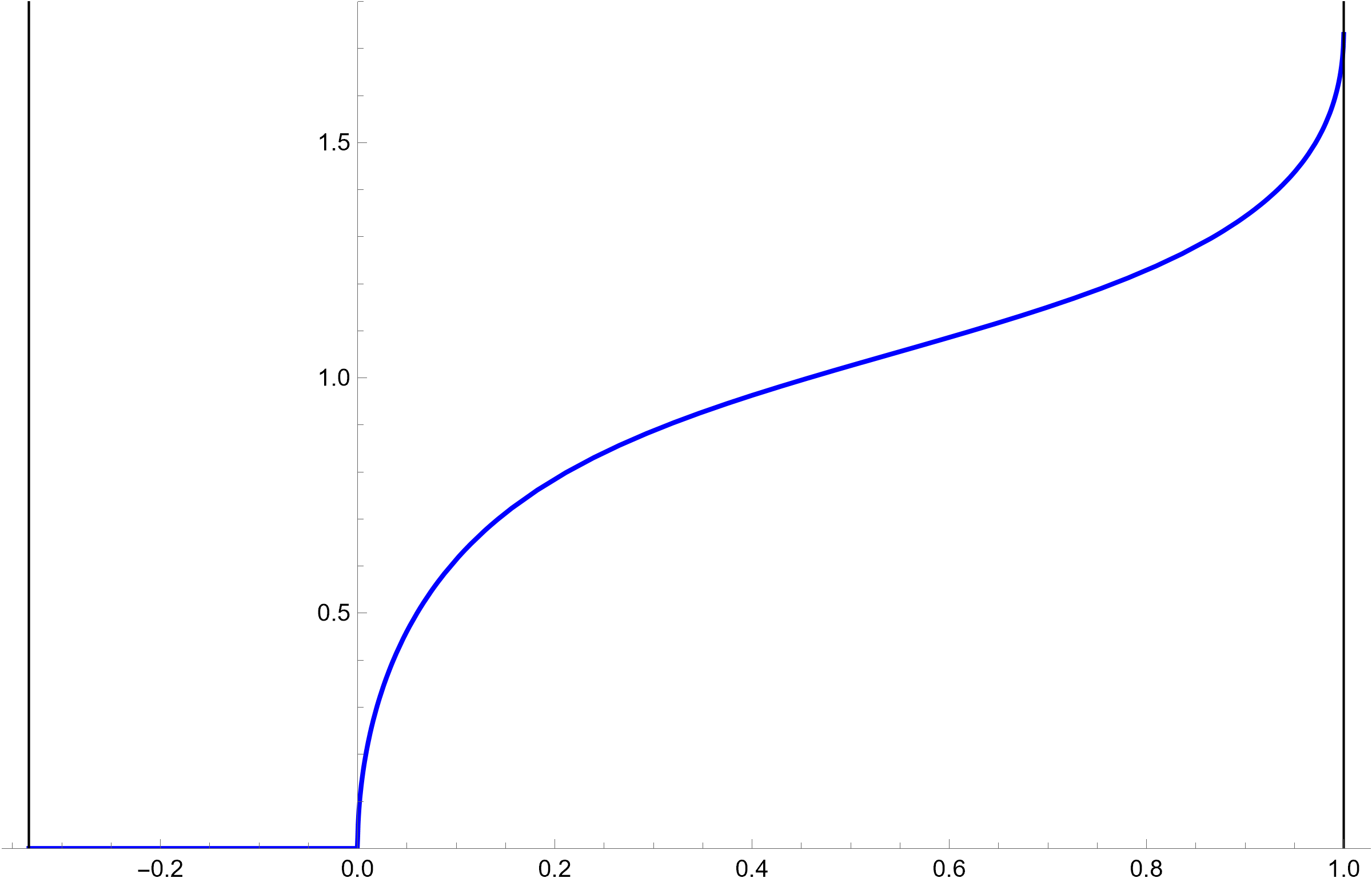}
		\end{center}
		\subcaption{$\log(4/3)$}
	\end{subfigure}
\quad
		\begin{subfigure}{0.3\textwidth}
		\begin{center}	
	  \includegraphics[width=\textwidth]{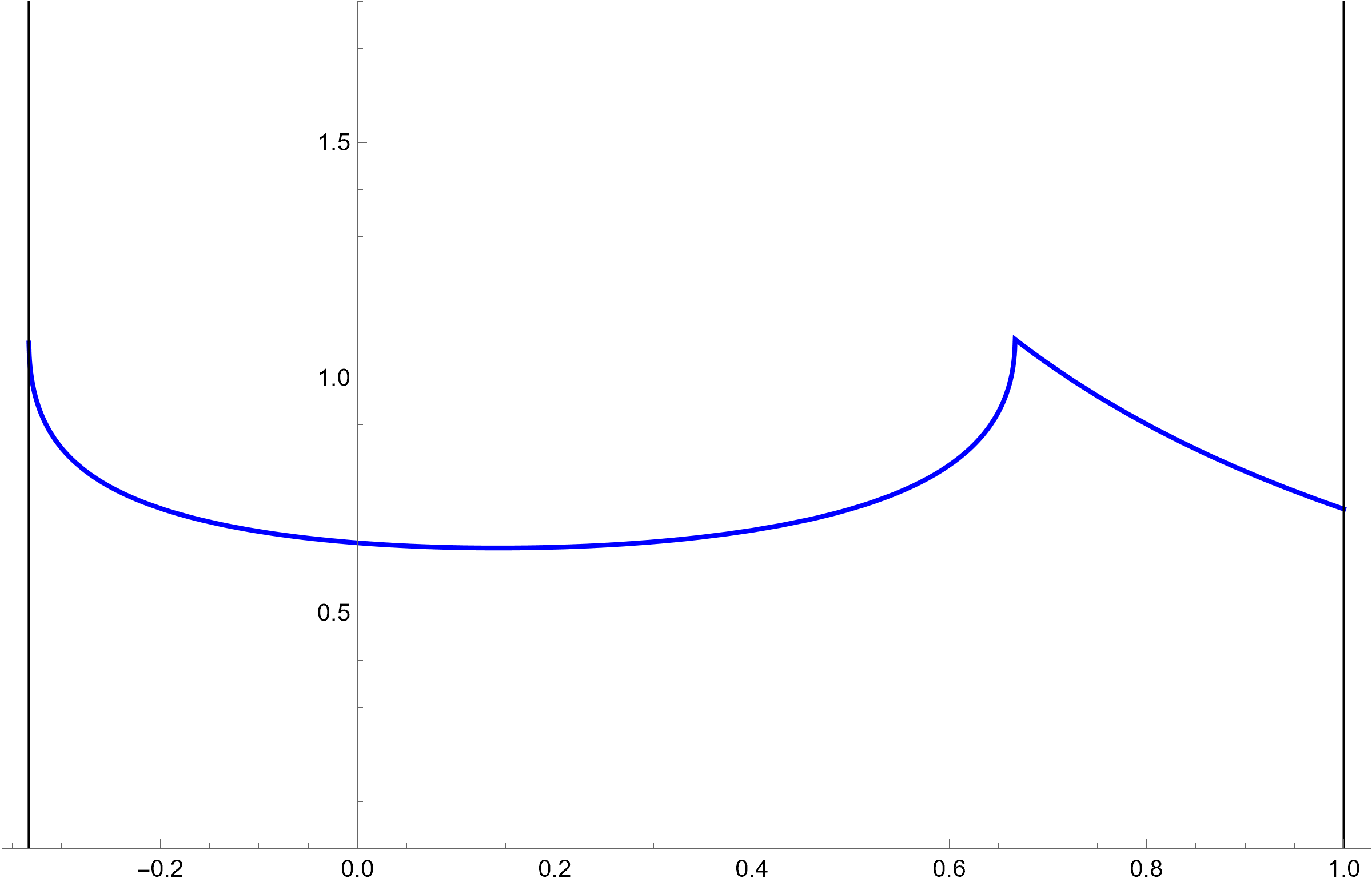}
		\end{center}
		\subcaption{$\log(4)$}
	\end{subfigure}	  
	\caption{ The plots display the density $x \mapsto \rho^{(a)}(x)$, given by \eqref{def of limiting density}, for $x \in [a,1]$ with $a = -1/3$. The critical regimes correspond to the values $\lambda = \log(4/3)$ and $\lambda = \log 4$, shown in (D) and (E) of the second row, respectively.
 } \label{Fig_limiting density}
\end{figure}

\begin{rem}[Spectral moments of the $q$-deformed GUE \cite{BFO24}: the special case $a=-1$] 
We discuss the special case $a=-1$ of our theorems. Due to the term $(a+1)^{p-2k}$ in the second summation of \eqref{moment closed main}, the summation becomes non-trivial only when $k=p/2$ for $p$ even. Then by \eqref{def of H special cases}, it follows that 
\begin{equation}
m_{N,2p}^{ (-1) } =   (1-q)^p \sum_{j=0}^{N-1}    \sum_{l=0}^k     q^{-l(2p-l)+\frac{l(l-1)}{2}}\frac{[2p]_{q}!}{[2p-2l]_{q}!! \, [l]_{q}!} q^{j(2p-l)}\qbinom{j}{l}.
\end{equation}
This recovers the recent finding \cite[Theorem 2.1]{BFO24}.  
Similarly, it follows from \eqref{def of mathcal Mp0} and \eqref{def of mathcal Mp1} that 
\begin{align}
\mathcal{M}_{2p,0}\Big|_{ a=-1 } &= \frac{1}{\lambda\, p}  I_{1-\mathsf{s} }(p+1,p),
\\
 \mathcal{M}_{2p,1}\Big|_{ a=-1 }  &=-\frac{\lambda p}{6}
          \bigg(  I_{1-\mathsf{s}  }(p+1,p)
        +\frac{(2p-1)!}{p!\,(p-1)!}  \mathsf{s}^{p} (1-\mathsf{s}  )^{p-1}(2+p-(2p+1) \mathsf{s} )\bigg).  
\end{align}
This recovers \cite[Theorem 2.3]{BFO24}.
On the other hand, Theorem~\ref{Thm_limiting density} generalises the finding of \cite[Theorem 2.6]{BFO24}, where, in the special case $a = -1$, the two transitions coincide, yielding only two distinct phases. To be more precise, it immediately follows from Theorem~\ref{Thm_limiting density} that  
\begin{align}
\begin{split}
\rho^{(-1)}(x) &= \frac{2}{ \pi \lambda |x| } \arctan \sqrt{ \frac{ 1-\sqrt{1-x^2} }{ 1+\sqrt{1-x^2}  }  \frac{ \sqrt{1-x^2}+1-2e^{-\lambda}  }{  \sqrt{1-x^2}-1+2e^{-\lambda}  }  }   \mathbbm{1}_{(-b(\lambda),b(\lambda))}(x) 
\\
&\quad +
\begin{cases}
0 &\textup{if } \lambda \le \log 2,
\smallskip 
\\
\displaystyle 
\frac{1}{\lambda |x|} \Big( \mathbbm{1}_{ (-1,1) }(x) -\mathbbm{1}_{ (-b(\lambda),b(\lambda)) }(x) \Big) &\textup{if } \lambda > \log 2 ,
\end{cases}
\end{split}
\end{align}
where $b(\lambda) :=  2 \sqrt{ (1-e^{ -\lambda }) e^{-\lambda} } $.  
This recovers the limiting density in \cite[Theorem 2.6]{BFO24}. 
\end{rem}

\begin{rem}[Continuum limit]
    Both Theorems~\ref{Thm_genus expansion} and~\ref{Thm_limiting density} are valid in the regime where $q \to 1$ while $a$ remains fixed. Nevertheless, one can formally consider the continuum limit of $\mathcal{M}_{p,0}$ by allowing $a \to -1$ under a suitable scaling, consistent with the continuum limit of the Al-Salam--Carlitz weight \eqref{limit of weight q to 1}. In particular, under the scaling $a = -1 + r\sqrt{\lambda}$, it follows from \eqref{def of Mp0 v2} that
    \begin{equation}
     \lim_{ \lambda \to 0 }   \frac{1}{\lambda^{p/2}}\mathcal{M}_{p,0} \Big|_{ a = -1 + r\sqrt{\lambda} }
        =\sum_{l=0}^{\floor{p/2}}\binom{p}{2l}r^{p-2l} C_{l}, 
    \end{equation}
    where $C_l$ is the $l$-th Catalan number. Here, the right-hand side of the equation can be interpreted as the $p$-th moment of the shifted semicircle law $\rho_{\mathrm{sc}}(x - r)$. This is consistent with the limiting behaviour \eqref{limit of weight q to 1} of the weight function.

    This phenomenon can also be observed at the level of the limiting spectral density. 
To analyse the behaviour as $\lambda \to 0$, we consider the following rescaling
    \begin{equation}
    \widehat{\rho}^{(a)}(y):= \sqrt{\lambda} \rho^{(a)}( \sqrt{\lambda} y ). 
    \end{equation}
    Then for $a \in [-1,0)$ and $\lambda < \log(1-a)$, by Theorem~\ref{Thm_limiting density}, we have 
    \begin{align}
   \widehat{\rho}^{(a)}(y) = \frac{2}{\pi\lambda\abs{y}}\arctan{\sqrt{\frac{1-y_0-y_1}{1-y_0+y_1}\frac{1-e^{-\lambda}-y_0+y_1}{y_0+y_1-1+e^{-\lambda}}}} \, \mathbbm{1}_{(\mathsf{u}-\mathsf{v},\mathsf{u}+\mathsf{v} ) }( \sqrt{\lambda} y ), 
\end{align}
    where
    \begin{equation} \label{def of y0 y1}
        y_{0}:=\frac{-\sqrt{\lambda}y(a+1)+a^{2}+1}{(a-1)^{2}},\qquad y_{1}:=\frac{\sqrt{4a(\sqrt{\lambda}y-a)(\sqrt{\lambda}y-1)}}{(a-1)^{2}} .
    \end{equation}
    Note that with scaling $a=-1+r \sqrt{\lambda}$, we have 
    \begin{align*}
    y_0= \frac12+\frac1{8} ( r^2-2r y ) \lambda +O( \lambda^{3/2}), \qquad  y_0= \frac12-\frac1{8} ( r^2-2r y +2y^2 ) \lambda +O( \lambda^{3/2}),
    \end{align*}
    as $\lambda \to 0$. This in turn implies that for $|y-r|<2$, 
    \begin{equation}
  \lim_{ \lambda \to 0 }  \widehat{\rho}^{(a)}(y) \Big|_{ a = -1 + r\sqrt{\lambda} }
  = \rho_{\mathrm{sc}}(y - r).  
    \end{equation}
    Again, this is consistent with the limiting behaviour \eqref{limit of weight q to 1}. 
\end{rem}

We conclude this section by discussing the appearance of the limiting spectral density $\rho^{(a)}$ from Theorem~\ref{Thm_limiting density} in connection with the zeros of Al-Salam--Carlitz polynomial \eqref{def of three term AlSalam}. In the continuum case when $q = 1$, it is a well-known fact that the limiting spectral distribution of a random Hermitian unitary ensemble coincides with the limiting zero distribution of the corresponding orthogonal polynomial. From the perspective of statistical physics, this correspondence reflects the interpretation of the zeros of orthogonal polynomials as Fekete points, or equivalently, as the zero-temperature ($\beta = \infty$) limit of the $\beta$-ensemble. The key feature is that the limiting equilibrium measure of the $\beta$-ensemble is independent of the specific value of $\beta$, except in the high-temperature regime where $\beta \to 0$. (We refer the reader to \cite{ABG12} for the high-temperature crossover in the classical log-gas setting, and to the recent work \cite{CD25} for its discrete analogue.)

In contrast to the classical ($q=1$) setting, no general result is known regarding the coincidence of the limiting spectral and zero distributions in the $q$-deformed case. However, for our specific setting, a relevant result is established in \cite[Proposition 4.2]{BF25a}, where it is shown that
\begin{equation}
\mathbb{E}\bigg[\prod_{j=1}^N (x - x_j)\bigg] = U_N^{(a)}(x).
\end{equation}
Here, the expectation is taken with respect to the joint density \eqref{def of jpdf}. This identity suggests that the limiting zero distribution of the Al-Salam--Carlitz polynomial $U_N^{(a)}$ coincides with the limiting spectral distribution described in Theorem~\ref{Thm_limiting density}. In the following proposition, we confirm that this is indeed the case, cf. Figure~\ref{Fig_diagram for OP}.

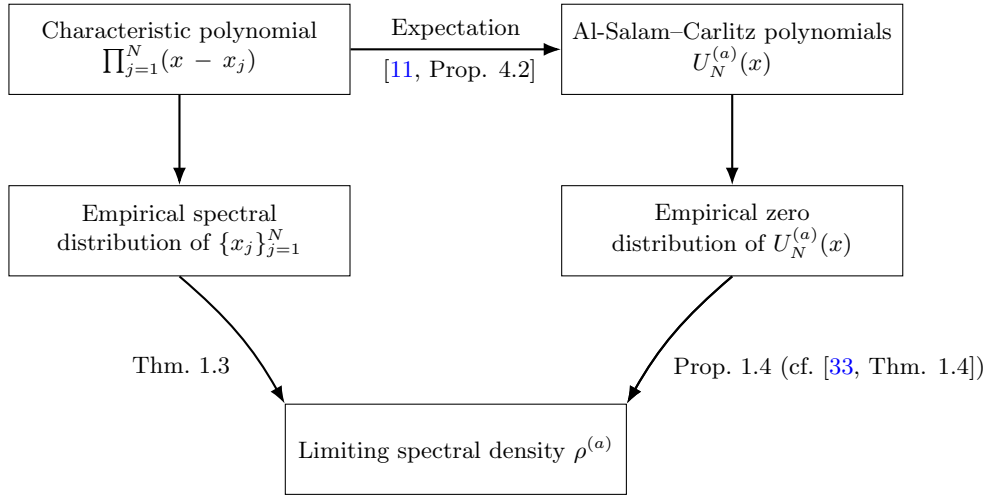
\begin{figure}[ht]
\centering
\begin{tikzpicture}[
  box/.style={
    draw, rectangle, minimum height=1.2cm, text width=4.3cm, align=center
  },
  arr/.style={-Latex, thick},
  node distance=1.2cm and 2.8cm,
  every node/.style={font=\small}
  ]
 
\node[box] (guepoly) {Characteristic polynomial \\ $ \prod_{j=1}^N  (x-x_j) $};
\node[box, right=of guepoly] (hermite) {Al-Salam--Carlitz polynomials \\ $U_N^{(a)}(x)$  };
 
\draw[arr] (guepoly.east) -- node[above]{Expectation}  node[below]{ \cite[Prop. 4.2]{BF25a} }(hermite.west);
 
\node[box, below=of guepoly] (gueemp) {Empirical spectral \\ distribution of $\{x_j\}_{j=1}^N$};
\node[box, below=of hermite] (hermzero) {Empirical zero \\ distribution of $U_N^{(a)}(x)$ };
 
\draw[arr] (guepoly.south) -- (gueemp.north);
\draw[arr] (hermite.south) -- (hermzero.north);
 
\path (gueemp) -- (hermzero) coordinate[midway] (midpoint);
 
\node[box, below=2.3cm of midpoint] (limit) 
{Limiting spectral density $\rho^{(a)}$};
 
\draw[arr] 
  (gueemp.south) to[bend left=10] 
  node[pos=0.6, below left, align=center] 
  { Thm.~\ref{Thm_limiting density} } 
  (limit.north west);

\draw[arr] 
  (hermzero.south) to[bend right=10] 
  (limit.north east);

\draw[arr] 
  (hermzero.south) to[bend right=10] 
  node[pos=0.6, below right, align=center]
  { Prop.~\ref{prop:Asymptotic zero distribution} (cf. \cite[Thm. 1.4]{KA99}) } 
  (limit.north east);

\end{tikzpicture}

\caption{ Schematic diagram illustrating the dual route to the limiting spectral density $\rho^{(a)}$ via characteristic polynomials and Al-Salam–Carlitz orthogonal polynomials.} \label{Fig_diagram for OP}
\end{figure}

To state the result, we define the empirical zero distribution of the Al-Salam--Carlitz polynomial by
\begin{equation} \label{def of ESD for AlSalam}
\nu_N := \frac{1}{N} \sum_{j=1}^N \delta_{x_{N,j}},
\end{equation}
where ${x_{N,1}, \dots, x_{N,N}}$ are the zeros of $U_N^{(a)}(x;q)$ and $\delta_x$ denotes the Dirac mass at $x$. Then we have the following.

\begin{prop}[\textbf{Limiting zero distribution of Al-Salam--Carlitz polynomial}]\label{prop:Asymptotic zero distribution}
Let $a < 0$ be fixed, and let $q$ be scaled according to \eqref{def of q scaling}. Let $\nu_N$ be the empirical zero distribution \eqref{def of ESD for AlSalam}. As $N \to \infty$, in the sense of integration against continuous test functions $f \in C([a,1])$, we have  
\begin{equation}
d\nu_N(x) \to  \rho^{(a)}(x)\,dx,
\end{equation} 
where $\rho^{(a)}$ is given in Theorem~\ref{Thm_limiting density}. 
\end{prop}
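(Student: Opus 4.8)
The plan is to establish convergence of every moment $\int x^p\,d\nu_N$ to the corresponding moment of $\rho^{(a)}$, and then upgrade this to weak convergence by moment determinacy. The first step is to recognise the zeros of $U_N^{(a)}$ as eigenvalues of a Jacobi matrix. Reading the monic recurrence coefficients off \eqref{def of three term AlSalam}, namely $b_n=(a+1)q^n$ and $c_n=-aq^{n-1}(1-q^n)$ (with $c_n>0$ since $a<0$), I symmetrise to obtain the symmetric tridiagonal operator $J$ on $\ell^2(\mathbb{Z}_{\ge 0})$ with diagonal entries $b_n$ and off-diagonal entries $\sqrt{c_n}$. The zeros $x_{N,1},\dots,x_{N,N}$ of $U_N^{(a)}$ are precisely the eigenvalues of the top-left $N\times N$ truncation $J_N=P_N J P_N$, where $P_N$ projects onto $\{e_0,\dots,e_{N-1}\}$; hence $\int x^p\,d\nu_N(x)=\tfrac1N\operatorname{Tr}(J_N^p)$.

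The crucial point is that the already-computed spectral moments are governed by the same operator. Writing $\widetilde U_j=U_j^{(a)}/\sqrt{h_j}$ for the orthonormal polynomials, where $h_j=(-a)^j(1-q)(q;q)_j q^{j(j-1)/2}$ is the normalisation appearing in \eqref{def of 1pt density}, multiplication by $x$ acts on the orthonormal basis $\{\widetilde U_j\}$ exactly as $J$. Consequently $\int x^p\,\widetilde U_j^2\,\omega_U^{(a)}\,d_qx=(J^p)_{jj}$, and summing over $j<N$ yields
\[ m_{N,p}^{(a)}=\sum_{j=0}^{N-1}(J^p)_{jj}=\operatorname{Tr}(P_N J^p P_N). \]
Expanding each trace as a weighted sum over closed length-$p$ walks on the half-line, the quantities $\operatorname{Tr}(P_N J^p P_N)$ and $\operatorname{Tr}((P_N J P_N)^p)=\operatorname{Tr}(J_N^p)$ agree on every walk that never visits an index $\ge N$, so their difference is carried by the $O(p)$ starting indices within distance $\lfloor p/2\rfloor$ of $N$. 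Since $q=e^{-\lambda/N}\in(0,1)$ forces the uniform bounds $|b_n|\le|a+1|$ and $\sqrt{c_n}\le\sqrt{-a}$, each diagonal entry $(J^p)_{jj}$ is bounded by a constant depending only on $p$, whence
\[ \int x^p\,d\nu_N(x)=\tfrac1N\, m_{N,p}^{(a)}+O_p\!\Big(\tfrac1N\Big). \]

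It remains to identify the limit of $\tfrac1N m_{N,p}^{(a)}$. By Theorem~\ref{Thm_limiting density}, $\tfrac1N\rho_N^{(a)}\to\rho^{(a)}$ tested against continuous functions, so
\[ \frac1N m_{N,p}^{(a)}=\int_a^1 x^p\,\frac1N\rho_N^{(a)}(x)\,d_qx\longrightarrow\int_a^1 x^p\,\rho^{(a)}(x)\,dx=\mathcal{M}_{p,0}, \]
the same limit being readable from Theorem~\ref{Thm_genus expansion} once one notes $q^{p/2}\to1$. Combining with the previous display gives $\int x^p\,d\nu_N\to\int_a^1 x^p\,\rho^{(a)}\,dx$ for every $p\in\mathbb{N}$. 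As $\rho^{(a)}$ is supported in the compact interval $[a,1]$, its moment problem is determinate, and convergence of all moments therefore upgrades to weak convergence $d\nu_N\to\rho^{(a)}(x)\,dx$, which is the assertion of Proposition~\ref{prop:Asymptotic zero distribution}.

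I expect the only delicate point to be the boundary comparison in the second step: one must verify that replacing $J$ by its $N\times N$ corner perturbs each fixed moment by only $O_p(1)$ uniformly in the scaling $q=e^{-\lambda/N}$, which is exactly where the uniform boundedness of the recurrence coefficients enters. A more classical alternative would dispense with the trace identity altogether: one checks that, as functions of $t=n/N$, the rescaled coefficients converge to the continuous profiles $B(t)=(a+1)e^{-\lambda t}$ and $A(t)=\sqrt{-a\,e^{-\lambda t}(1-e^{-\lambda t})}$, and then applies the theorem of Kuijlaars--Van Assche \cite{KA99} on zero distributions of orthogonal polynomials with varying recurrence coefficients. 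That route presents the limit as the superposition $\int_0^1\omega_t\,dt$ of arcsine laws $\omega_t$ on $[B(t)-2A(t),\,B(t)+2A(t)]$, after which one verifies directly that this superposition reproduces the density \eqref{def of limiting density}.
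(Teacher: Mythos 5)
Your argument is correct, but your primary route is genuinely different from the paper's; in fact the ``more classical alternative'' you sketch in your closing paragraph \emph{is} the paper's proof. The paper passes to the orthonormal polynomials, reads off the recurrence coefficients $a_n=\sqrt{-a(1-q^n)q^{n-1}}$ and $b_n=(a+1)q^n$, observes that under the scaling \eqref{def of q scaling} these converge to the continuous profiles you denote $A(t)$ and $B(t)$, and invokes Kuijlaars--Van Assche \cite[Theorem~1.4]{KA99} to write the limit as the superposition $\frac{1}{\lambda}\int_0^\lambda \omega_{[\alpha(s),\beta(s)]}\,ds$ of arcsine laws, which is then matched against the integral representation \eqref{eqn:integral rep of density} of $\rho^{(a)}$. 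Your main argument instead identifies the zeros with the eigenvalues of the truncated Jacobi matrix $J_N$ and the spectral moments with $\operatorname{Tr}(P_NJ^pP_N)$, and compares the two traces via the closed-walk expansion: only walks starting within $\lfloor p/2\rfloor$ of the boundary index $N$ can distinguish them, and the uniform bounds $|b_n|\le|a+1|$, $\sqrt{c_n}\le\sqrt{-a}$ make each such contribution $O_p(1)$, so $\int x^p\,d\nu_N=\tfrac1N m_{N,p}^{(a)}+O_p(1/N)\to\mathcal{M}_{p,0}$; moment determinacy on the compact interval $[a,1]$ (which contains all zeros, being the convex hull of the orthogonality lattice) then gives weak convergence. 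This is sound and fully self-contained, avoiding the potential-theoretic black box, at the price of leaning on Theorem~\ref{Thm_genus expansion} --- whereas the paper deliberately keeps Proposition~\ref{prop:Asymptotic zero distribution} ``largely independent'' of the moment analysis so that it serves as an independent confirmation of the density formula. Your route buys a transparent explanation of \emph{why} the spectral and zero distributions coincide (same Jacobi operator up to an $O_p(1)$ boundary effect); the paper's route buys independence and the arcsine-superposition representation for free.
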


This proposition is largely independent of the preceding analysis leading to Theorem \ref{Thm_limiting density}, which derives the moments of the limiting spectral density via an algebraic--combinatorial approach. Instead, it is based on a general result of Kuijlaars and Van Assche~\cite[Theorem 1.4]{KA99}, obtained through potential-theoretic methods, concerning the limiting zero distribution of orthogonal polynomials---applicable also in the \( q \)-deformed setting. See also~\cite{DM98} for related developments in the continuum limit of the Toda lattice.
By applying~\cite[Theorem 1.4]{KA99} to our setting under appropriate scaling, we verify that the same limiting density \( \rho^{(a)} \) arises as the zero distribution. It is particularly noteworthy that the same limiting object emerges from two seemingly distinct perspectives. Moreover, as exemplified by the Tricomi--Carlitz polynomials in~\cite{KA99}, the appearance of a non-smooth peak is a typical feature of the limiting zero distribution of \( q \)-orthogonal polynomials in the double scaling limit \( q \to 1 \).
For further developments on the connection between recurrence coefficients and limiting zero distributions---particularly their interpretation via natural operations in free probability---we refer the reader to the recent works~\cite{JKM25,AFPU24} and the references therein.

\medskip

\subsection*{Organisation of the paper} The remainder of the paper is organised as follows. In the next section, we present the necessary background for our analysis, including the Al-Salam--Carlitz orthogonal polynomials on discrete spaces and the Flajolet--Viennot theory. In Section~\ref{Section_spectral moments}, we prove Theorem~\ref{Thm_spectral moments} via a double counting argument, identifying the enumeration of Motzkin paths with a certain matching problem that allows for explicit evaluation. Section~\ref{Section_largeN asymptotic} is devoted to the large-\( N \) asymptotic analysis, where we establish Theorems~\ref{Thm_genus expansion} and~\ref{Thm_limiting density}. In addition, we provide the proof of Proposition~\ref{prop:Asymptotic zero distribution}.

\medskip

 \subsection*{Acknowledgements} We are grateful to Peter J.~Forrester for his valuable feedback on an earlier draft of this work. Sung-Soo Byun was supported by the National Research Foundation of Korea grant (RS-2023-00301976, RS-2025-00516909). Jaeseong Oh was partially supported by a KIAS Individual Grant (HP083401) via the Center for Mathematical Challenges at Korea Institute for Advanced Study.

\medskip

\section{Preliminaries}\label{Section_Prelim}

In this section, we recall the background material required for the proof of our main results.

\subsection{Al-Salam--Carlitz polynomials} \label{Subsection_Al-Salam poly}

The Al-Salam--Carlitz polynomials $U_n^{(a)}(x)$ form families of orthogonal polynomials in the discrete space. Let us first recall the notion of the Jackson $q$-integral, which is defined by
\begin{align}
\int_0^\infty f(x)\,d_qx  :=(1-q) \sum_{k=-\infty}^\infty q^k \, f(q^k) ,
\qquad 
\int_0^\alpha f(x)\,d_qx  :=(1-q)\sum_{k=0}^\infty \alpha  q^k\, f(\alpha q^k),
\end{align}
and 
\begin{equation}
\int_a^b f(x)\,d_qx  := \int_0^b f(x)\,d_q x - \int_0^a f(x)\,d_q x. 
\end{equation} 
Then the orthogonality of Al-Salam Carlitz polynomials reads as
\begin{equation} \label{def of orthogonality AlSalam}
    \int_{a}^{1}(qx,qx/a;q)_{\infty}U_{m}^{(a)}(x;q)U_{n}^{(a)}(x;q) \, d_{q}x
    =(-a)^{n}(1-q)(q;q)_{n}(q,a,q/a;q)_{\infty}q^{\frac{n(n-1)}{2}}\delta_{m,n}, 
\end{equation}
where $\delta$ is the Kronecker delta, see e.g. \cite[Section 14.24]{KLS10}.

According to \eqref{def of AlSalam weight hat}, it is also convenient to consider the scaling 
\begin{equation} \label{def of Un hat}
\widehat{U}_n^{(a)}(x;q):= \frac{1}{(1-q)^{n/2}} U_n^{(a)} ( \sqrt{1-q}x;q ). 
\end{equation}
Then it follows from \eqref{def of three term AlSalam} that 
\begin{equation} \label{def of three term AlSalam rescaling}
x\,\widehat{U}_n^{(a)}(x;q) = \widehat{U}_{n+1}^{(a)}(x;q) +\frac{a+1}{ \sqrt{1-q} }\, q^n \widehat{U}_n^{(a)}(x;q) -a q^{n-1} [n]_q \widehat{U}_{n-1}^{(a)}(x;q). 
\end{equation}
On the other hand, for general \( a <0 \),  we have 
\begin{equation}
\lim_{q \to 1} \widehat{U}_n^{(a)}(x;q)\Big|_{ a=-q^r } = He_n(x-r),   
\end{equation}
where $He_n$ is the classical Hermite polynomial, see \cite[Eq.~(2.11)]{BF00}.  

\medskip 

We write 
\begin{equation}  
\mathfrak{m}_{p,j}^{(a)}:= \int_a^1 \frac{ x^p \, U_{j}^{(a)}(x;q)^2  }{ (-a)^{j}(1-q)(q;q)_{j} q^{\frac{j(j-1)}{2}} }  \,\omega_U^{(a)}(x) \,d_qx,
\end{equation}
where the weight $\omega_U^{(a)}$ is given by \eqref{def of AlSalam weight}.
Then by \eqref{def of 1pt density} and \eqref{def of moment in terms of q-integral}, it follows that 
\begin{equation} \label{m pj as a sum of mathfrak m}
m_{N,p}^{ (a) } =  \sum_{j=0}^{N-1} \mathfrak{m}_{p,j}^{(a)}. 
\end{equation}
In addition, we define the rescaled moment
\begin{equation} \label{def of mathfrak m pj hat}
\widehat{\mathfrak{m}}_{p,j}^{(a)} := \frac{1}{(1-q)^{p/2}} \mathfrak{m}_{p,j}^{(a)},
\end{equation}
which corresponds to the scaling in \eqref{def of Un hat}. This normalisation arises from the change of variables in the $q$-Jackson integral:
\begin{equation}\label{1.43}
\int_{-1}^1 f(x) \,d_q x = \frac{1}{c} \int_{-c}^c f\Big( \frac{x}{c} \Big) \, d_q x.
\end{equation}

\subsection{Flajolet-Vienot theory}

To evaluate the spectral moments, we employ a combinatorial approach developed by Flajolet and Viennot \cite{Fl80, Vi00}, which connects the moments of orthogonal polynomials to weighted lattice paths, see also the review \cite{CKS16}.  
This perspective is rooted in algebraic combinatorics, where the enumeration of lattice paths--such as Motzkin paths--provides a classical framework for interpreting various algebraic and analytic quantities.  
In this subsection, we briefly recall the relevant aspects of their theory.

Let $\mathcal{L}:\R[x]\rightarrow\R$ be a linear functional associated with the weighted Lebesgue measure $\omega(x)\,dx$, i.e. 
\begin{equation*}
    \mathcal{L}f=\int_{\R}f(x)\omega(x)\, dx, \qquad f\in\R[x].
\end{equation*}
Let $\set{P_j}_{j\geq0}$ be the sequence of monic orthogonal polynomials with respect to $\mathcal{L}$, namely,
\begin{equation}
\mathcal{L}( P_j(x) P_k(x) ) = \mathcal{L}( P_j(x) P_j(x) ) \, \delta_{j,k}. 
\end{equation}
Then it is known that $\set{P_{j}}_{j\geq0}$ satisfies the three-term recurrence relation of the form: 
\begin{equation} \label{def of three term general}
    P_{n+1}(x)=(x-b_{n})P_{n}(x)-\lambda_{n}P_{n-1}(x).
\end{equation} 

Next, we recall the class of lattice paths used in the Flajolet--Viennot theory. A lattice path $\omega$ is a finite sequence $\omega = (s_0, \dots, s_n)$, where each $s_j = (x_j, y_j)$ belongs to $\mathbb{N} \times \mathbb{N}$. A step $(s_j, s_{j+1})$ is called:
\begin{itemize}
    \item \emph{North-East} if $(x_{j+1}, y_{j+1}) = (x_j + 1, y_j + 1)$;
    \smallskip 
    \item \emph{East} if $(x_{j+1}, y_{j+1}) = (x_j + 1, y_j)$;
       \smallskip 
    \item \emph{South-East} if $(x_{j+1}, y_{j+1}) = (x_j + 1, y_j - 1)$.
\end{itemize}

We say that a step $(s_j, s_{j+1})$ is at height $k$ if $y_j = k$. 
A \emph{Motzkin path} is a path composed of North-East, East, and South-East steps that remain within the first quadrant.
Let $\Mot_{n,k,l}$ denote the set of Motzkin paths from the point $(0, k)$ to $(n, l)$. Given two sequences $b = \{ b_n \}_{n \geq 0}$ and $\lambda = \{ \lambda_n \}_{n \geq 0}$, and a Motzkin path $\omega \in \Mot_{n,k,l}$, we define its associated weight $\wt_{b,\lambda}(\omega)$ as the product of the weights of its individual steps, assigned as follows:
\begin{itemize}
    \item Each North-East step has weight $1$;
    \smallskip 
    \item An East step at height $k$ has weight $b_k$;
     \smallskip 
    \item A South-East step at height $k$ has weight $\lambda_k$.
\end{itemize}
Then the following identity, arising from the theory of Flajolet \cite{Fl80} and Viennot \cite{Vi00} admits the combinatorial interpretation of the spectral moments: 
\begin{equation} \label{def of Mot path counting gen}
 \frac{  \mathcal{L}(  x^p \,P_j(x)^2) }{  \mathcal{L}( P_j(x)^2 )  } =\sum_{\omega\in\mathrm{Mot}_{p,j,j}}\mathrm{wt}_{b,\lambda}(w).  
\end{equation}
See also \cite[Proposition 3.2]{BFO24} and the references therein. This identity shows that the spectral moments can be expressed as weighted sums over all Motzkin paths, where the weights are determined by the coefficients \( b_n \) and \( \lambda_n \) in the three-term recurrence relation \eqref{def of three term general}.

We remark that the right-hand side of \eqref{def of Mot path counting gen} serves as the partition function (normalisation constant) for defining a natural probability measure on the space of Motzkin paths. The statistical properties of random Motzkin paths and their connection to the KPZ universality class have been investigated in recent work~\cite{BKW25}.

\section{Combinatorics for spectral moments} \label{Section_spectral moments}

In this section, we show Theorem~\ref{Thm_spectral moments}. Note that by \eqref{m pj as a sum of mathfrak m} and \eqref{def of mathfrak m pj hat}, it suffices to show that  \begin{equation}
 \widehat{\mathfrak{m}}^{(a)}_{p,j} = \sum_{k=0}^{\floor{p/2}}\sum_{l=0}^k \Big(\frac{a+1}{\sqrt{1-q}}\Big)^{p-2k}(-a)^{k}q^{-l(p-l)+\frac{l(l-1)}{2}}\frac{[p]_{q}!}{[p-2l]_{q}!![l]_{q}!}\mathsf{H}(k-l,p-2k)q^{j(p-l)}\qbinom{j}{l} . \label{eqn:mpj}  
\end{equation}
On the other hand, it follows from \eqref{def of Mot path counting gen} that  
\begin{equation}  \label{eqn:m v0}
 \widehat{\mathfrak{m}}_{p,j}^{(a)} = \sum_{\omega\in\mathrm{Mot}_{p,j,j}}\mathrm{wt}_{b,\lambda}(\omega), 
\end{equation}
where the sequences $\set{b_n}_{n\geq0}$ and $\set{\lambda_n}_{n\geq0}$ are determined by 
\begin{equation} \label{def of three term coeff AlSalam}
    b_n=\frac{a+1}{\sqrt{1-q}}q^n, \qquad \lambda_{n}=-aq^{n-1}[n]_q,
\end{equation} 
see \eqref{def of three term AlSalam rescaling}. To analyse the right-hand side of \eqref{eqn:m v0}, we first decompose the set $\mathrm{Mot}_{p,j,j}$ according to the number of East steps. Let $\mathrm{Mot}_{p,j,j}^{k}$ denote the subset of Motzkin paths from $(0,j)$ to $(p,j)$ consisting of $k$ North-East and $k$ South-East steps, and $p - 2k$ East steps. Then by definition, we have 
\begin{equation}\label{eqn:m}
\widehat{\mathfrak{m}}_{p,j}^{(a)} = \sum_{k=0}^{\lfloor p/2 \rfloor} \sum_{\omega \in \mathrm{Mot}_{p,j,j}^{k}} \mathrm{wt}_{b,\lambda}(\omega).
\end{equation}
Combining the ingredients above, Theorem~\ref{Thm_spectral moments} is reduced to showing the following proposition.

\begin{prop} \label{Prop_spectral moments combinatorics}
For nonnegative integers $k$, $p$, and $j$, we have 
\begin{equation}\label{eqn:spectral moment via weight}
\sum_{\omega\in\mathrm{Mot}^{k}_{p,j,j}}\mathrm{wt}_{b,\lambda}(\omega) = \sum_{l=0}^k \Big(\frac{a+1}{\sqrt{1-q}}\Big)^{p-2k}(-a)^{k}q^{-l(p-l)+\frac{l(l-1)}{2}}\frac{[p]_{q}!}{[p-2l]_{q}!![l]_{q}!}\mathsf{H}(k-l,p-2k)q^{j(p-l)}\qbinom{j}{l} . 
\end{equation}
\end{prop}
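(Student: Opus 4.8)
The plan is to evaluate the weighted path sum on the left of \eqref{eqn:spectral moment via weight} directly, using the Flajolet--Viennot dictionary to trade weighted Motzkin paths for weighted matchings, and then to evaluate the resulting matching sum in closed form. First I would make the weight of a path explicit. For $\omega\in\mathrm{Mot}^{k}_{p,j,j}$ with East steps at heights $h_{1},\dots,h_{p-2k}$ and South-East steps at heights $g_{1},\dots,g_{k}$, the recurrence coefficients \eqref{def of three term coeff AlSalam} give
\[
\mathrm{wt}_{b,\lambda}(\omega)=\Big(\frac{a+1}{\sqrt{1-q}}\Big)^{p-2k}(-a)^{k}\,q^{\sum_{i}h_{i}+\sum_{m}(g_{m}-1)}\prod_{m=1}^{k}[g_{m}]_{q}.
\]
Since the prefactor $\big(\tfrac{a+1}{\sqrt{1-q}}\big)^{p-2k}(-a)^{k}$ is common to all of $\mathrm{Mot}^{k}_{p,j,j}$ and already matches the claim, it remains to evaluate
\[
S:=\sum_{\omega\in\mathrm{Mot}^{k}_{p,j,j}}q^{\sum_{i}h_{i}+\sum_{m}(g_{m}-1)}\prod_{m=1}^{k}[g_{m}]_{q}.
\]

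The key step is to interpret the product $\prod_{m}[g_{m}]_{q}$ combinatorially. Writing $[g]_{q}=1+q+\dots+q^{g-1}$, each factor records, with a crossing weight, the choice of pairing a South-East (closing) step at height $g$ with one of the $g$ arches open at that moment; these arches open either at a North-East step of $\omega$ or among the $j$ arches already present at the initial height $j$. Expanding the product therefore realises $S$ as a weighted enumeration of \emph{matchings}: the $2k$ North-East/South-East positions are grouped into $k$ arcs, the $p-2k$ East positions are fixed points, and $j$ external arcs emanate from the baseline. This is the double counting advertised in the introduction, and it is the device that makes the sum explicitly summable.

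I would then organise the matching sum according to the integer $l$ defined as the number of baseline arcs closed inside the path, equivalently the number of internal arcs that survive to the right end (so that $0\le l\le\min(j,k)$). The weighted choice of which $l$ of the $j$ baseline arcs are closed, together with the fact that the open baseline arcs raise the heights of the remaining steps, accounts for the factor $q^{j(p-l)}\qbinom{j}{l}$; the crossings among the surviving arcs and the arrangement of the $p$ steps produce $q^{-l(p-l)+l(l-1)/2}\,[p]_{q}!/\big([p-2l]_{q}!!\,[l]_{q}!\big)$; and the interleaving of the $p-2k$ East steps among the $k-l$ arcs that both open and close inside the path, weighted by their heights, is exactly encoded by $\mathsf{H}(k-l,p-2k)$ --- the nested summation $0\le j_{1}\le\dots\le j_{p-2k}\le k-l$ in \eqref{def of H} recording the levels at which the East steps sit relative to these arcs. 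Summing over $l$ then recovers the right-hand side of \eqref{eqn:spectral moment via weight}.

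The main obstacle I anticipate is this last identification: tracking the $q$-statistics (crossing numbers together with the additive height weights $q^{h_{i}}$ and $q^{g_{m}-1}$) through the bijection and showing that the contribution of the East steps collapses precisely to $\mathsf{H}(k-l,p-2k)$ with the stated double-factorial weights. I expect this to require a recursive decomposition --- inserting the East steps one height-level at a time, or an induction on $p-2k$ --- which simultaneously pins down the $q$-multinomial $[p]_{q}!/\big([p-2l]_{q}!!\,[l]_{q}!\big)$ and verifies the boundary values \eqref{def of H special cases}. Throughout, the special case $a=-1$ (where $p-2k=0$, so $\mathsf{H}(k-l,0)=1$ and the matchings become genuine perfect matchings) should serve to calibrate the weight bookkeeping.
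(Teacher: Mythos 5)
Your outline follows the same route as the paper: the product $\prod_m[g_m]_q$ is expanded by labelling each South-East step at height $g$ with an integer in $\{1,\dots,g\}$, the labelled paths are put in bijection with matchings carrying two kinds of singletons (the paper's ``generalised matchings'', in which the $p-2k$ East steps become verticals), and the sum is organised by the number $l$ of the $j$ baseline arches that get closed inside the path, with $q^{j(p-l)}\qbinom{j}{l}$ arising from the choice and placement of the $j-l$ baseline vertices that remain isolated. The architecture, the prefactor, and the role of $l$ all agree with the paper's proof.

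The genuine gap is the step you yourself flag, and it consists of two missing ingredients without which the argument does not close. First, although your explicit path weight $q^{\sum_i h_i+\sum_m(g_m-1)}\prod_m[g_m]_q$ is correct, you never identify the single statistic on matchings that absorbs all of it; the paper shows that a closer labelled $m$ at height $h$ must contribute $q^{(h-m)+2(m-1)}$, forcing the statistic $\mathrm{stat}(M)=\mathrm{cr}(M)+2\,\mathrm{ne}(M)$ (crossings plus \emph{twice} nestings, with isolated vertices treated as arcs to infinity). A plain ``crossing weight'' would give $\sum_{m=1}^{h}q^{h-m}=[h]_q$ and lose the prefactor $q^{h-1}$ needed to match $\lambda_h=-aq^{h-1}[h]_q$, so pinning down the factor $2$ on nestings is not optional. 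Second, the closed-form evaluation of the matching sum --- where $\mathsf{H}$ actually enters --- is asserted rather than derived. The paper proves
\begin{equation*}
\sum_{M\in\mathrm{Mat}_{a,b,c}}q^{\mathrm{stat}(M)}=\qbinom{a}{2b+c}\,[2b+c-1]_q!!\;\mathsf{H}(b,c)
\end{equation*}
by deleting the leftmost vertex according to its type (isolated, opener, or vertical), which yields the four-term recurrence $\alpha(a,b,c)=q^{2b+c}\alpha(a-1,b,c)+[a-1]_q\,\alpha(a-2,b-1,c)+\alpha(a-1,b,c-1)$, verified against a companion recurrence for $\mathsf{H}(b,c)$; only after this does the insertion of the $l$ surviving arcs (contributing $\prod_{m=1}^{l}q^{m-1}[p-l+m]_q$) and the projection onto the first $j$ vertices produce the stated $q$-multinomial and $q$-binomial factors. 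Your proposed ``inserting the East steps one height-level at a time'' is a different decomposition whose induction would still have to be supplied and shown to generate the double-factorial ratios in \eqref{def of H}; as written, the identification of $\mathsf{H}(k-l,p-2k)$ is the conclusion you need, not something you have established.
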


The rest of this section is devoted to proving this proposition. 

\subsection{Matching problem} Due to the complexity of direct computation in the Motzkin path framework, we introduce an alternative combinatorial interpretation of (\ref{eqn:m}), namely in terms of \textit{matching}, to make the calculation tractable.

For a positive integer $n$, consider a set $[n]=\set{1,2,\cdots,n}$. Each element $x\in [n]$ is called a \textit{vertex}. A \textit{matching} on $[n]$ is a set partition of $[n]$ such that each part has size $1$ or $2$. The singleton parts are called \textit{isolated vertices} and the parts consisting of $2$ elements are called \textit{arcs}. Given an arc $(a,b)$ with $a<b$, the vertex $a$ is called an opener, and the vertex $b$ is called a closer.

We extend the notion of matching by distinguishing two types of isolated vertices: isolated vertices and \textit{verticals}. The resulting object, in which isolated vertices are assigned types, will be referred to as a \textit{generalised matching}. Let us denote by ${\mathrm{Mat}}_{a,b,c}$ the set of generalised matchings on $[a]$ consisting of $b$ arcs and $c$ vertical lines. A generalised matching on $[n]$ has an pictorial representation as a diagram consists of $n$ vertices placed in a horizontal row from left to right, labeled $1$ through $n$, some of them are connected by arcs, and some of them have a vertical line on them.

\begin{figure}[h]
\centering
\begin{tikzpicture}[scale=0.8]
  \foreach \i in {0,2,4,6,8,10,12} {\fill (\i,0) circle (2pt);}
  \draw (0,0) to[out=60,in=120] (4,0);
  \draw (2,0) to[out=40,in=140] (10,0);
  \draw[] (8,0) -- +(0,2);
\end{tikzpicture}
\caption{Pictorial representation of generalised matching on $[7]$ with 2 arcs and 1 vertical}
\end{figure}
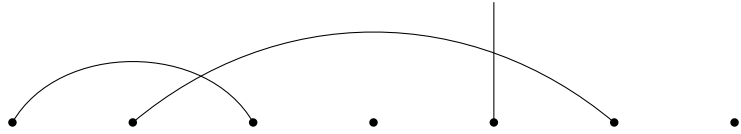

To proceed, let ${\mathrm{Mat}}_{a,b,c}^{>j}$ be the subset of ${\mathrm{Mat}}_{a,b,c}$ consisting of generalised matching of which the first $j$ vertices are either isolated or openers. On the other hand, we define a \textit{Al-Salam--Carlitz history} as a labeled Motzkin path where the South-East step of height $k$ is labeled with an integer in $[k]$. Then we proceed to construct a bijection between Al-Salam--Carlitz histories on $\mathrm{Mot}_{p,j,j}^{k}$ and generalised matchings in ${\mathrm{Mat}}_{p+j,k,p-2k}^{>j}$ in the following manner.
\begin{itemize}
    \item Start with $j$ vertices.
    \smallskip 
    \item Progressing through each step of the Motzkin path from $(0,j)$, a vertex is added.
      \smallskip 
    \item If the Al-Salam--Carlitz history advances with a South-East step labeled by $m$, we introduce a new arc connecting the newly added vertex and the $m-$th isolated vertex, counted from the left.
      \smallskip 
    \item If the Al-Salam--Carlitz history advances with a East step, we introduce a vertical on a newly added vertex.
      \smallskip 
    \item If the Al-Salam--Carlitz history advances with a North-East step, a vertex is added without any accompanying arc.
\end{itemize}
See Figure \ref{fig:comparison_models} for a pictorial illustration of each step.

\begin{figure}
\centering

\begin{minipage}{0.45\textwidth}
\centering
\begin{tikzpicture}[scale=0.7]
    \draw[step=1cm,gray!30,very thin] (0,0) grid (6,4);
    \draw[very thick, blue] (0,0) -- (0,3) ;
\end{tikzpicture}
\end{minipage}
\hspace{0.05\textwidth}
\begin{minipage}{0.45\textwidth}
\centering
\begin{tikzpicture}[scale=0.7]
    \foreach \i in {1,2,...,10}{\fill[black] (\i, 0) circle (0pt);}
    \foreach \i in {1,2,3}{\fill[blue] (\i, 0) circle (2pt);}
    \fill [black] (1,1.5) circle (0pt);
\end{tikzpicture}
\end{minipage}

\vspace{2mm}

\begin{minipage}{0.45\textwidth}
\centering
\begin{tikzpicture}[scale=0.7]
    \draw[step=1cm,gray!30,very thin] (0,0) grid (6,4);
    \draw[very thick] (0,0) -- (0,3);
    \draw[very thick, blue] (0,3) -- (1,2);
    \node[blue] at (1/4,9/4) {2};
\end{tikzpicture}
\end{minipage}
\hspace{0.05\textwidth}
\begin{minipage}{0.45\textwidth}
\centering
\begin{tikzpicture}[scale=0.7]
    \foreach \i in {1,2,...,10}{\fill[black] (\i, 0) circle (0pt);}
    \foreach \i in {1,2,3}{\fill[black] (\i, 0) circle (2pt);}
    \fill [black] (1,1.5) circle (0pt);
    \fill [blue] (4,0) circle (2pt);
    \draw [blue] (2,0) to[out=60,in=120] (4,0);
\end{tikzpicture}
\end{minipage}

\vspace{2mm}

\begin{minipage}{0.45\textwidth}
\centering
\begin{tikzpicture}[scale=0.7]
    \draw[step=1cm,gray!30,very thin] (0,0) grid (6,4);
    \draw[very thick] (0,0) -- (0,3) -- (1,2);
    \node at (1/4,9/4) {2};
    \draw[very thick, blue] (1,2) -- (2,2);
\end{tikzpicture}
\end{minipage}
\hspace{0.05\textwidth}
\begin{minipage}{0.45\textwidth}
\centering
\begin{tikzpicture}[scale=0.7]
    \foreach \i in {1,2,...,10}{\fill[black] (\i, 0) circle (0pt);}
    \foreach \i in {1,2,3,4}{\fill[black] (\i, 0) circle (2pt);}
    \fill [black] (1,1.5) circle (0pt);
    \fill[blue] (5,0) circle (2pt);
    \draw (2,0) to[out=60,in=120] (4,0);
    \draw[blue] (5,0) -- +(0,1.5);
\end{tikzpicture}
\end{minipage}

\vspace{2mm}

\begin{minipage}{0.45\textwidth}
\centering
\begin{tikzpicture}[scale=0.7]
    \draw[step=1cm,gray!30,very thin] (0,0) grid (6,4);
    \draw[very thick] (0,0) -- (0,3) -- (1,2) -- (2,2);
    \node at (1/4,9/4) {2};
    \draw[very thick, blue] (2,2) -- (3,3);
\end{tikzpicture}
\end{minipage}
\hspace{0.05\textwidth}
\begin{minipage}{0.45\textwidth}
\centering
\begin{tikzpicture}[scale=0.7]
    \foreach \i in {1,2,...,10}{\fill[black] (\i, 0) circle (0pt);}
    \foreach \i in {1,2,3,4,5}{\fill[black] (\i, 0) circle (2pt);}
    \fill [black] (1,1.5) circle (0pt);
    \fill[blue] (6,0) circle (2pt);
    \draw (2,0) to[out=60,in=120] (4,0);
    \draw (5,0) -- +(0,1.5);
\end{tikzpicture}
\end{minipage}

\vspace{2mm}

\begin{minipage}{0.45\textwidth}
\centering
\begin{tikzpicture}[scale=0.7]
    \draw[step=1cm,gray!30,very thin] (0,0) grid (6,4);
    \draw[very thick] (0,0) -- (0,3) -- (1,2) -- (2,2) -- (3,3);
    \node at (1/4,9/4) {2};
    \draw[very thick, blue] (3,3) -- (4,3);
\end{tikzpicture}
\end{minipage}
\hspace{0.05\textwidth}
\begin{minipage}{0.45\textwidth}
\centering
\begin{tikzpicture}[scale=0.7]
    \foreach \i in {1,2,...,10}{\fill[black] (\i, 0) circle (0pt);}
    \foreach \i in {1,2,3,4,5,6}{\fill[black] (\i, 0) circle (2pt);}
    \fill [black] (1,1.5) circle (0pt);
    \fill [blue] (7,0) circle (2pt);
    \draw (2,0) to[out=60,in=120] (4,0);
    \draw (5,0) -- +(0,1.5);
    \draw [blue] (7,0) -- +(0,1.5);
\end{tikzpicture}
\end{minipage}

\vspace{2mm}

\begin{minipage}{0.45\textwidth}
\centering
\begin{tikzpicture}[scale=0.7]
    \draw[step=1cm,gray!30,very thin] (0,0) grid (6,4);
    \draw[very thick] (0,0) -- (0,3) -- (1,2) -- (2,2) -- (3,3) --(4,3);
    \node at (1/4,9/4) {2};
    \draw[very thick, blue] (4,3) -- (5,4);
\end{tikzpicture}
\end{minipage}
\hspace{0.05\textwidth}
\begin{minipage}{0.45\textwidth}
\centering
\begin{tikzpicture}[scale=0.7]
    \foreach \i in {1,2,...,10}{\fill[black] (\i, 0) circle (0pt);}
    \foreach \i in {1,2,3,4,5,6,7}{\fill[black] (\i, 0) circle (2pt);}
    \fill [black] (1,1.5) circle (0pt);
    \fill [blue] (8,0) circle (2pt);
    \draw (2,0) to[out=60,in=120] (4,0);
    \draw (5,0) -- +(0,1.5);
    \draw (7,0) -- +(0,1.5);
\end{tikzpicture}
\end{minipage}

\begin{minipage}{0.45\textwidth}
\centering
\begin{tikzpicture}[scale=0.7]
    \draw[step=1cm,gray!30,very thin] (0,0) grid (6,4);
    \draw[very thick] (0,0) -- (0,3) -- (1,2) -- (2,2) -- (3,3) --(4,3) -- (5,4);
    \node at (1/4,9/4) {2};
    \draw[very thick, blue] (5,4) -- (6,3);
    \node[blue] at (21/4,13/4) {2};
\end{tikzpicture}
\end{minipage}
\hspace{0.05\textwidth}
\begin{minipage}{0.45\textwidth}
\centering
\begin{tikzpicture}[scale=0.7]
    \foreach \i in {1,2,...,10}{\fill[black] (\i, 0) circle (0pt);}
    \foreach \i in {1,2,3,4,5,6,7,8}{\fill[black] (\i, 0) circle (2pt);}
    \fill [black] (1,1.5) circle (0pt);
    \fill [blue] (9,0) circle (2pt);
    \draw (2,0) to[out=60,in=120] (4,0);
    \draw (5,0) -- +(0,1.5);
    \draw (7,0) -- +(0,1.5);
    \draw [blue] (3,0) to[out=40,in=140] (9,0);
\end{tikzpicture}
\end{minipage}

\caption{An Al-Salam--Carlitz history from $(0,3)$ to $(6,3)$ and its corresponding generalised matching}
\label{fig:comparison_models}
\end{figure}
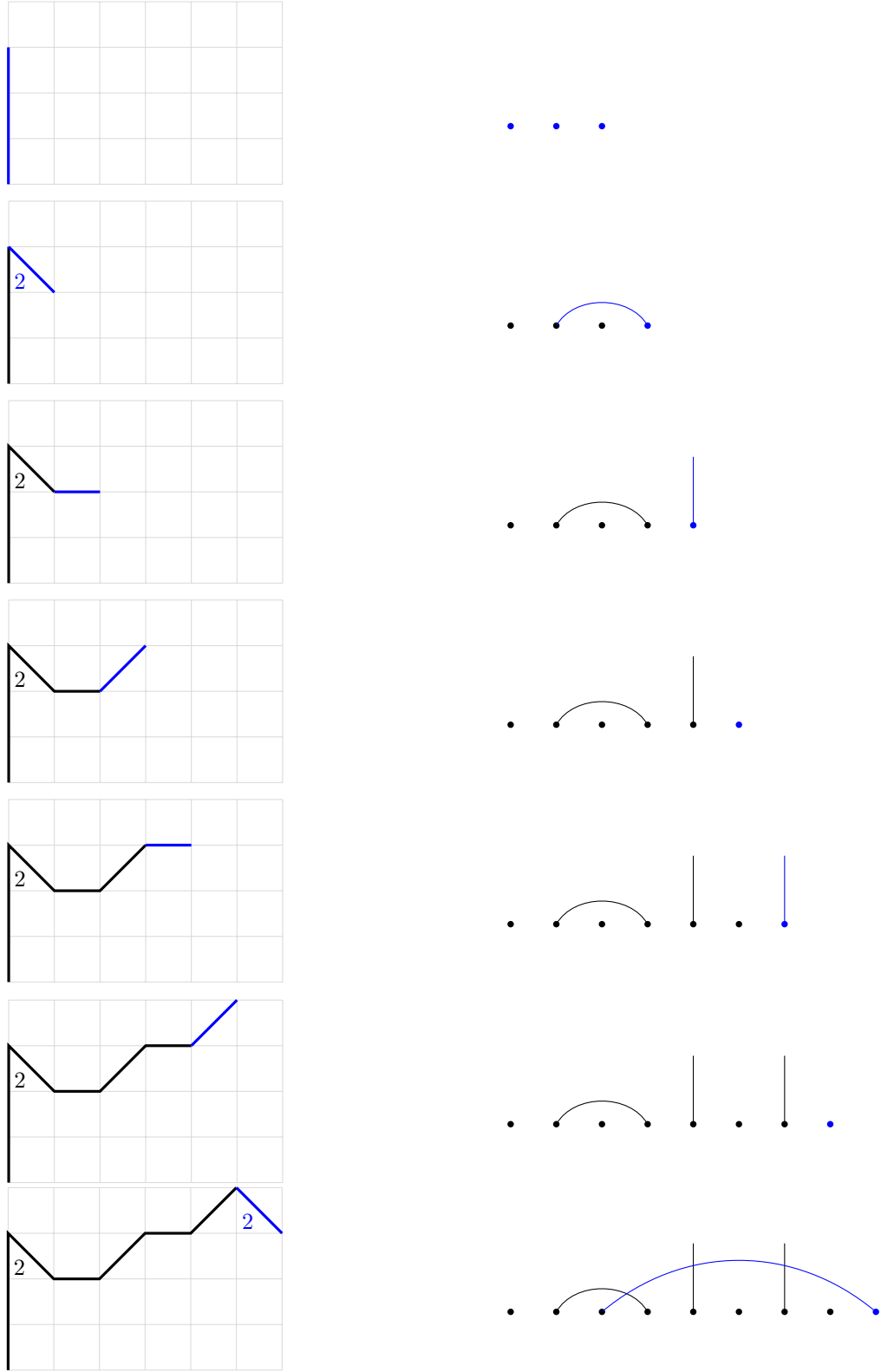

\medskip 
 
Now, we define some statistics for generalised matchings. 

\begin{defn}{Crossing}
A \textit{crossing} of a generalised matching $M$ is one of the following:
\begin{enumerate}
    \item a pair of arcs $(a,b)$ and $(c,d)$ with $a<c<b<d$,   \smallskip 
    \item a pair of an arc $(a,b)$ and an isolated vertex $c$ with $a<c<b$,   \smallskip 
    \item a pair of an arc $(a,b)$ and a vertical $c$ with $a<c<b$, or   \smallskip 
    \item a pair of an isolated vertex $a$ and a vertical $b$ with $a<b$.
\end{enumerate}
The total number of crossings in $M$ is denoted by $\mathrm{cr}(M)$. 
\end{defn}
Pictorially, the crossings are ``crossings'' made by arcs or verticals, after adding an additional vertex at ``infinity'' and connecting it with isolated vertices in the second and fourth cases.

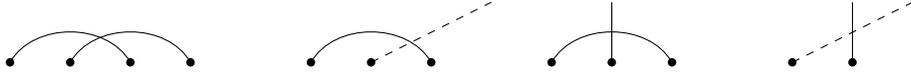
\begin{figure}[H]
\centering
\begin{tikzpicture}[scale=0.8]
  \foreach \i in {0,1,2,3,5,6,7,9,10,11,13,14} {\fill (\i,0) circle (2pt);}
  \draw (0,0) to [out=60,in=120] (2,0);
  \draw (1,0) to [out=60,in=120] (3,0);
  
  \draw (5,0) to [out=60,in=120] (7,0);
  \draw[dashed] (6,0) to (8,1);

  \draw (9,0) to [out=60,in=120] (11,0);
  \draw[] (10,0) -- +(0,1);

  \draw[] (14,0) -- +(0,1);
  \draw[dashed] (13,0) -- (15,1);
\end{tikzpicture}
\caption{Pictorial illustrations of four types of crossings}
\end{figure}

\begin{defn}{Nesting} 
A \textit{nesting} of a generalised matching $M$ is either
\begin{enumerate}
    \item a pair of arcs $(a,b)$ and $(c,d)$ with $a<c<d<b$, or
      \smallskip 
    \item a pair of an arc $(a,b)$ and an isolated vertex $c$ with $c<a<b$.
\end{enumerate}
The total number of nestings in $M$ is denoted by $\mathrm{ne}(M)$.
\end{defn}
Pictorially, the nestings are ``nestings'' made by two arcs, after adding an additional vertex at `infinity' and connecting it with isolated vertices in the second case.

\begin{figure}[H]
\centering
\begin{tikzpicture}[scale=0.8]
  \foreach \i in {0,1,2,3,5,6,7} {\fill (\i,0) circle (2pt);}
  \draw (0,0) to [out=60,in=120] (3,0);
  \draw (1,0) to [out=60,in=120] (2,0);
  
  \draw (6,0) to [out=60,in=120] (7,0);
  \draw[dashed] (5,0) to (8,1);

\end{tikzpicture}
\caption{Pictorial illustrations of two types of nestings}
\end{figure}
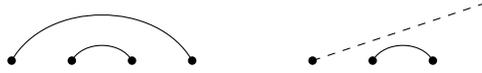

We now present an alternative enumeration of the left-hand side of \eqref{eqn:spectral moment via weight} by counting the crossings and nestings of generalised matchings. More precisely, we define a statistic on a generalised matching \( M \) by  
\begin{equation}\label{def of stat(M)}
    \mathrm{stat}(M):=\mathrm{cr}(M)+2\mathrm{ne}(M), 
\end{equation}
which will be used to enumerate the weight of Motzkin paths in the following lemma.

\begin{lem}
    For nonnegative integers $k$, $p$, and $j$, we have 
    \begin{equation}
    \sum_{\omega\in\mathrm{Mot}^{k}_{p,j,j}}\mathrm{wt}_{b,\lambda}(w)
    =
    \Big(\frac{a+1}{\sqrt{1-q}}\Big)^{p-2k}
    (-a)^{k}
    \sum_{M{\in}{\mathrm{Mat}}^{>j}_{p+j,k,p-2k}}q^{\mathrm{stat}(M)}.
    \end{equation}
\end{lem}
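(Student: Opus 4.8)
The plan is to establish the identity by a weight-preserving bijection, building on the correspondence between Al-Salam--Carlitz histories and generalised matchings already described in the excerpt. Recall that a Motzkin path $\omega \in \mathrm{Mot}^k_{p,j,j}$ carries weight $\mathrm{wt}_{b,\lambda}(\omega)$ given as a product of step weights: each North-East step contributes $1$, each East step at height $h$ contributes $b_h = \frac{a+1}{\sqrt{1-q}}q^h$, and each South-East step at height $h$ contributes $\lambda_h = -aq^{h-1}[h]_q$. First I would account for the prefactor $\big(\frac{a+1}{\sqrt{1-q}}\big)^{p-2k}(-a)^k$: since every path in $\mathrm{Mot}^k_{p,j,j}$ has exactly $p-2k$ East steps and exactly $k$ South-East steps, the factor $(a+1)/\sqrt{1-q}$ appears precisely $p-2k$ times (one per East step), and the factor $-a$ appears precisely $k$ times (one per South-East step), so these pull out uniformly from every summand. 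What remains inside each step weight is a power of $q$ together with the $q$-integer $[h]_q$ from the South-East steps, and the claim reduces to showing that the product of these residual $q$-factors over a path equals $q^{\mathrm{stat}(M)}$ for the matching $M$ associated to a suitably chosen history.

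Next I would pass from bare Motzkin paths to Al-Salam--Carlitz histories. The key observation is the standard device for handling $q$-weighted South-East steps: since a South-East step at height $h$ carries $q^{h-1}[h]_q = q^{h-1}(1 + q + \cdots + q^{h-1})$, summing over the $h$ possible integer labels $m \in [h]$ with the refined weight $q^{(\text{something involving } m)}$ realises $[h]_q$ as a genuine enumeration. Concretely, I would assign to a South-East step at height $h$ labelled $m$ a residual $q$-power that, when summed over $m \in [h]$, reproduces $q^{h-1}[h]_q$ up to the global $q$-power coming from the East steps. This is exactly what turns the path sum $\sum_{\omega} \mathrm{wt}_{b,\lambda}(\omega)$ into a sum over histories with purely monomial $q$-weights, and hence into a sum over the matchings in $\mathrm{Mat}^{>j}_{p+j,k,p-2k}$ via the bijection already constructed.

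The heart of the argument, and the step I expect to be the main obstacle, is verifying that under this bijection the total residual $q$-exponent of a history equals $\mathrm{stat}(M) = \mathrm{cr}(M) + 2\,\mathrm{ne}(M)$. Here I would argue step by step through the path, tracking how each step contributes to the $q$-exponent and simultaneously how the newly added vertex (or arc) creates crossings and nestings with the structure built so far. The height $h$ of a step equals the number of currently open arcs, i.e. openers without their matching closers, which in the matching picture are exactly the vertices available to be crossed or nested by the incoming object; an East step (a vertical) at height $h$ crosses all $h$ open arcs, contributing $q^h$-type weight matching type-(3) crossings, while a South-East step labelled $m$ closes the $m$-th open arc, and the cost $q^{(\cdot)}$ in its label precisely records the crossings (type (1)) and nestings (type (1)) generated between the new arc and the other open arcs lying to its left or nested within it. I would handle the contributions of isolated vertices and verticals to types (2) and (4) by the ``vertex at infinity'' convention stated after the crossing/nesting definitions, which converts the interaction of isolated vertices with later verticals into honest crossings. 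The delicate bookkeeping is to confirm that nestings are counted with weight $2$ while crossings are counted with weight $1$: this asymmetry should emerge from the fact that a nesting of a new arc over an already-open arc contributes both a height-increment effect (felt when the inner arc later closes) and a direct label cost (felt at the moment of closing), so that each nesting is charged twice, whereas each crossing is charged once. I would present this as an induction on the path length $p$, peeling off the final step and invoking the inductive hypothesis on the truncated history, with the base case $p=0$ (the empty path, $k=0$, matching $j$ isolated vertices/openers with no arcs) giving $\mathrm{stat} = 0$ as required.
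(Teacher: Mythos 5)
Your plan follows the paper's proof almost exactly: pull the factors $\frac{a+1}{\sqrt{1-q}}$ and $(-a)$ out of the $p-2k$ East and $k$ South-East steps, pass to Al-Salam--Carlitz histories so that the labelled paths biject with ${\mathrm{Mat}}^{>j}_{p+j,k,p-2k}$, and match the residual $q$-power of each step against its incremental contribution to $\mathrm{stat}(M)$. The outline is sound, but the one place you defer --- the exponent ``something involving $m$'' for a labelled South-East step --- is the crux, and the heuristic you offer for why nestings are charged twice (a ``height-increment effect'' plus a ``direct label cost'') is not how the bookkeeping actually works. In the correct accounting, when the South-East step at height $h$ labelled $m$ creates the arc $\alpha=(i_m,x)$, each of the $h-m$ currently-isolated vertices strictly between $i_m$ and $x$ eventually yields exactly one crossing with $\alpha$ (of type (1) or (2) according to whether it later becomes an opener or stays isolated), and each of the $m-1$ currently-isolated vertices to the left of $i_m$ yields exactly one nesting with $\alpha$ (again of type (1) or (2)); crossings of $\alpha$ with verticals underneath it and nestings with arcs inside it must \emph{not} be counted here, since they were already charged to earlier steps as contributions of the then-isolated vertex $i_m$. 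Hence the new contribution is $(h-m)+2(m-1)$, and
\begin{equation*}
\sum_{m=1}^{h} q^{(h-m)+2(m-1)} = q^{h-1}[h]_q,
\end{equation*}
which is exactly the residual weight $\lambda_h/(-a)$. The factor $2$ on nestings is forced by this identity (with weight $1$ one would get $h\,q^{h-1}\neq q^{h-1}[h]_q$), not by any double charging. A further small correction: the height of a step equals the number of \emph{currently isolated} vertices in the partial matching (future openers together with vertices that will remain isolated), not the number of open arcs; conflating the two obscures why type-(2) and type-(4) incidences are counted at the moment the later object arrives. With the exponent $(h-m)+2(m-1)$ and the no-double-counting convention made explicit, your induction on the path length closes and coincides with the paper's argument.
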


\begin{proof}
    For $w\in\mathrm{Mot}_{p,j,j}^{k}$, let $\mathrm{Mat}(w)$ be the set of generalised matchings corresponding to all possible Al-Salam--Carlitz histories on $w$. Then, due to the bijection between the Al-Salam--Carlitz histories on $\mathrm{Mot}_{p,j,j}^{k}$ and ${\mathrm{Mat}}^{>j}_{p+j,k,p-2k}$ described in Figure \ref{fig:comparison_models}, $\set{\mathrm{Mat}(w):w\in \mathrm {Mot}_{p,j,j}^{k}}$ is a set partition of $\mathrm{Mat}_{p+j,k,p-2k}^{>j}$. Thus, it suffices to establish that
    \begin{equation}\label{eqn:wtblambda(w)}
        \mathrm{wt}_{b,\lambda}(w)=\Big(\frac{a+1}{\sqrt{1-q}}\Big)^{p-2k}(-a)^{k}\sum_{M\in \mathrm{Mat}(w)}q^{\mathrm{stat}(M)}
    \end{equation}
    for each $w\in \mathrm{Mot}_{p,j,j}^{k}$. Recalling the construction of the bijection, observe that the number of isolated vertices changes by $+1$ for a North-East step, $0$ for an East step and $-1$ for a South-East step. Consequently, the height of each step indicates the number of isolated vertices in the generalised matching induced by previous steps. Based on this observation, we evaluate the contribution of each step to $q^{\mathrm{stat}(M)}$ according to the types of the step. To simplify, suppose that we have constructed the generalised matching $M_0$ so far and encountered the step of height $h$.
    \begin{enumerate}
        \item The North-East step has the weight $1$. On the generalised matching side, we add an isolated vertex at the right end, which contributes $0$ to the statistic. \smallskip 
        \item Suppose that we encounter the East step of height $h$. Then, on the generalised matching side, we add vertical which makes crossings with isolated vertices in $M_0$. Since the number of the isolated vertices is exactly $h$, this step contributes
        \begin{equation*}
            \frac{a+1}{\sqrt{1-q}}q^{h}
        \end{equation*}
        to the right-hand side of \eqref{eqn:wtblambda(w)}, which corresponds to the weight $b_h$. Here, note that the factor $\frac{a+1}{\sqrt{1-q}}$ does not depend on the height of the step. \smallskip 
        \item Finally, suppose that we encounter the South-East step labeled by $m$ with $1\leq m\leq h$. Then, on the generalised matching side, we make an arc $\alpha=(i_m, x)$ connecting the newly added vertex $x$ and the $m$-th isolated vertex $i_m$ in $M_0$. Then $\alpha$ forms crossing with one of the following: (1) the isolated vertices between $i_m$ and $x$, (2) vertical between $i_m$ and $x$, or (3) arcs whose opener is smaller than $i_m$ and closer lies between $i_m$ and $x$. Moreover, $\alpha$ also forms nestings with isolated vertices smaller than $i_m$, or arcs lying between $\alpha$. However, note that the statistics made with verticals or nestings are already counted in $\mathrm{stat}(M)$ as contributions of the isolated vertex $i_m$. Consequently, the newly added arc $\alpha$ contributes $(h-m)+2(m-1)$ to the statistic. Furthermore, there are $h$ possible label from $1$ to $h$. Therefore, this step contributes
        \begin{equation*}
            (-a) \sum_{m=1}^{h}q^{(h-m)+2(m-1)}=(-a)q^{h-1}[h]_{q}
        \end{equation*}
        to the right-hand side of \eqref{eqn:wtblambda(w)}, which corresponds to the weight $\lambda_h$. Here also note that the factor $(-a)$ does not depend on the height or the label of the step.
    \end{enumerate}
    For all cases, the weight of each step is recorded to the statistic, which validate our claim \eqref{eqn:wtblambda(w)}.
\end{proof}

\subsection{Proof of Proposition~\ref{Prop_spectral moments combinatorics}}

In this subsection, we complete the proof of Theorem~\ref{Thm_spectral moments} by establishing Proposition~\ref{Prop_spectral moments combinatorics}. We begin by deriving the closed formula for
\begin{equation}
    \alpha(a,b,c):=\sum_{M{\in}{\mathrm{Mat}}_{a,b,c}}q^{\mathrm{stat}(M)}
\end{equation}
as follows. Decompose ${\mathrm{Mat}}_{a,b,c}$ into
\begin{equation*}
    {\mathrm{Mat}}_{a,b,c}
    ={\mathrm{Mat}}^{(1)}_{a,b,c}
    \cup\Big(\bigcup_{x=2}^{a}{\mathrm{Mat}}^{(x)}_{a,b,c}\Big)
    \cup{\mathrm{Mat}}^{(v)}_{a,b,c} , 
\end{equation*}
where ${\mathrm{Mat}}^{(x)}_{a,b,c}$ denotes the set of the generalised matchings whose first vertex is 
\begin{enumerate}
    \item isolated if $x=1$,
    \smallskip 
    \item an opener of an arc whose closer is $x$ if $2\leq x\leq a$, and   \smallskip 
    \item a vertical line if $x=v$.
\end{enumerate}
Under this decomposition, we deduce the recurrence relation of $\alpha$ as follows.
\begin{itemize}
    \item If the first vertex is an isolated vertex, it contributes $2b+c$ to $\mathrm{stat}(M)$, and deleting the first vertex gives the generalised matching in ${\mathrm{Mat}}_{a-1,b,c}$. Thus we have
    \begin{equation*}
        \sum_{M\in{\mathrm{Mat}}^{(1)}_{a,b,c}}q^{\mathrm{stat}(M)}=q^{2b+c}\alpha(a-1,b,c). 
    \end{equation*}
    \item If the first vertex is connected to the vertex $x$ $(2\leq x\leq a)$ with an arc, then this arc $(1,x)$ contribute $x-2$ to $\mathrm{stat}(M)$, and deleting the arc $(1,x)$ gives the generalised matching in ${\mathrm{Mat}}_{a-2,b-1,c}$. Thus we have
    \begin{equation*}
        \sum_{M\in{\mathrm{Mat}}^{(x)}_{a,b,c}}q^{\mathrm{stat}(M)}=q^{x-2}\alpha(a-2,b-1,c).
    \end{equation*}
    By summing this over $2\le x\le a$, we obtain 
    \begin{equation*}
        \sum_{M\in\bigcup_{x=2}^{a}{\mathrm{Mat}}_{a,b,c}}q^{\mathrm{stat}(M)}
        =\sum_{x=2}^{a}q^{x-2}\alpha(a-2,b-1,c)
        =[a-1]_{q}\alpha(a-2,b-1,c).
    \end{equation*}
    \item If the first vertex is a vertical line, it does not affect $\mathrm{stat}(M)$, and deleting the first vertex gives the generalised matching in ${\mathrm{Mat}}^{(x)}_{a-1,b,c-1}$. Thus we have
    \begin{equation*}
        \sum_{M\in{\mathrm{Mat}}^{(v)}_{a,b,c}}q^{\mathrm{stat}(M)}=\alpha(a-1,b,c-1).
    \end{equation*}
\end{itemize}
Combining all of the above, we have the four-term recurrence
\begin{equation}\label{eqn:fourtermrecurrence}
    \alpha(a,b,c)=q^{2b+c}\alpha(a-1,b,c)+[a-1]_{q}\alpha(a-2,b-1,c)+\alpha(a-1,b,c-1)
\end{equation}
and it gives the explicit formula for $\alpha(a,b,c)$. Recall the expression $\mathsf{H}(b,c)$ given in \eqref{def of H}.
\begin{lem}\label{lem:alpha(a,b,c)}
    For $a\geq 2b+c$, we have
    \begin{equation}\label{eqn:alpha}
        \alpha(a,b,c)
        =\qbinom{a}{2b+c}[2b+c-1]_{q}!!\mathsf{H}(b,c).
    \end{equation}
\end{lem}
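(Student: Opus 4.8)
The plan is to establish the closed formula \eqref{eqn:alpha} by induction, using the four-term recurrence \eqref{eqn:fourtermrecurrence} as the driving relation. Since the recurrence reduces the parameter $a$ by either one or two while keeping $2b+c$ controlled, I would induct on $a$ (or equivalently on $2b+c$), treating the boundary cases first. The base cases are when $a = 2b+c$, where every vertex must be used by an arc or a vertical with no room for isolated vertices, so $\qbinom{a}{2b+c} = 1$ and the formula should reduce to $[2b+c-1]_q!!\,\mathsf{H}(b,c)$; I would also need to separately verify the degenerate cases $b = 0$ and $c = 0$ against the special-value definitions \eqref{def of H special cases}, since $\mathsf{H}$ is defined piecewise there.

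The core of the argument is to substitute the conjectured formula \eqref{eqn:alpha} for each of the three terms on the right-hand side of \eqref{eqn:fourtermrecurrence} and verify that the identity holds. Concretely, I would write
\begin{align*}
\alpha(a,b,c) &\stackrel{?}{=} q^{2b+c}\qbinom{a-1}{2b+c}[2b+c-1]_q!!\,\mathsf{H}(b,c) \\
&\quad + [a-1]_q\qbinom{a-2}{2b+c-2}[2b+c-3]_q!!\,\mathsf{H}(b-1,c) \\
&\quad + \qbinom{a-1}{2b+c-1}[2b+c-2]_q!!\,\mathsf{H}(b,c-1),
\end{align*}
and check this equals $\qbinom{a}{2b+c}[2b+c-1]_q!!\,\mathsf{H}(b,c)$. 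The natural strategy is to factor out the common double factorial and $q$-binomial structure. Since $[2b+c-1]_q!! = [2b+c-1]_q\,[2b+c-3]_q!!$ and similar reductions relate the double factorials at shifted arguments, I expect the three terms to organise into a $q$-Pascal-type identity for the $q$-binomial coefficients, with the $\mathsf{H}$-factors absorbing the mismatch between $\mathsf{H}(b,c)$, $\mathsf{H}(b-1,c)$, and $\mathsf{H}(b,c-1)$.

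The main obstacle will be reconciling the recursive behaviour of $\mathsf{H}(b,c)$ itself. The sum \eqref{def of H} does not obviously split according to the three terms above, so I anticipate needing an auxiliary recurrence for $\mathsf{H}$ analogous to \eqref{eqn:fourtermrecurrence}: peeling off the largest summation index $j_c$ (or the smallest, $j_1$) should relate $\mathsf{H}(b,c)$ to $\mathsf{H}(b-1,c)$ and $\mathsf{H}(b,c-1)$, and the $q$-factorial ratios $\frac{[2j_k+k-2]_q!!}{[2j_k+k-1]_q!!}$ appearing in the summand are precisely what should produce the $q^{2b+c}$ and $[a-1]_q$ prefactors after combining with the $q$-binomial identities. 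Deriving this companion recurrence for $\mathsf{H}$ and matching it term-by-term against the three contributions is the delicate bookkeeping step; once it is in hand, the induction closes by direct substitution and the standard $q$-Pascal rules $\qbinom{a}{m} = q^m\qbinom{a-1}{m} + \qbinom{a-1}{m-1} = \qbinom{a-1}{m} + q^{a-m}\qbinom{a-1}{m-1}$.
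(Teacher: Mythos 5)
Your plan is essentially the paper's proof: the paper likewise verifies that the closed form satisfies the four-term recurrence \eqref{eqn:fourtermrecurrence}, using exactly the companion recurrence you anticipate, namely $\mathsf{H}(b,c)=\mathsf{H}(b-1,c)+\frac{[2b+c-2]_{q}!!}{[2b+c-1]_{q}!!}\mathsf{H}(b,c-1)$, obtained by splitting the sum according to whether $j_c\leq b-1$ or $j_c=b$. The only cosmetic difference is that the paper moves the $q^{2b+c}\alpha(a-1,b,c)$ term to the left and uses $[a]_q-q^{2b+c}[a-2b-c]_q=[2b+c]_q$ rather than invoking a $q$-Pascal identity explicitly, which amounts to the same bookkeeping.
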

\begin{proof}
    We aim to validate that the right-hand side of \eqref{eqn:alpha} satisfies the equivalent equation of \eqref{eqn:fourtermrecurrence}
    \begin{equation}\label{eqn:alphafourtermequiv}
        \alpha(a,b,c)-q^{2b+c}\alpha(a-1,b,c)=[a-1]_{q}\alpha(a-2,b-1,c)+\alpha(a-1,b,c-1).
    \end{equation}
    Since $$[a]_{q}-q^{2b+c}[a-2b-c]_{q}=[2b+c]_{q},$$ the left-hand side of \eqref{eqn:alphafourtermequiv} becomes
    \begin{align*}
        \alpha(a,b,c)-q^{2b+c}\alpha(a-1,b,c)
        &=\frac{[a-1]_{q}!}{[2b+c-1]_{q}![a-2b-c]_{q}!}[2b+c-1]_{q}!!\mathsf{H}(b,c).
    \end{align*} 
    On the other hand, considering the case $j_{c}\leq b-1$ and $j_{c}=b$ separately, we have the recurrence relation of $\mathsf{H}(b,c)$
    \begin{align*}
        \mathsf{H}(b,c)
        &=\sum_{0\leq j_{1}\leq j_{2}\leq\cdots\leq j_{c}\leq b}\prod_{k=1}^{c}\frac{[2j_{k}+k-2]_{q}!!}{[2j_{k}+k-1]_{q}!!}\\
        &=\sum_{0\leq j_{1}\leq j_{2}\leq\cdots\leq j_{c}\leq b-1}\prod_{k=1}^{c}\frac{[2j_{k}+k-2]_{q}!!}{[2j_{k}+k-1]_{q}!!}
        +\sum_{0\leq j_{1}\leq j_{2}\leq\cdots\leq j_{c}= b}\prod_{k=1}^{c}\frac{[2j_{k}+k-2]_{q}!!}{[2j_{k}+k-1]_{q}!!}\\
        &=\mathsf{H}(b-1,c)+\frac{[2b+c-2]_{q}!!}{[2b+c-1]_{q}!!}\sum_{0\leq j_{1}\leq j_{2}\leq\cdots\leq j_{c-1}\leq b}\prod_{k=1}^{c-1}\frac{[2j_{k}+k-2]_{q}!!}{[2j_{k}+k-1]_{q}!!}\\
        &=\mathsf{H}(b-1,c)+\frac{[2b+c-2]_{q}!!}{[2b+c-1]_{q}!!} \mathsf{H}(b,c-1).
    \end{align*}
    Substituting \eqref{eqn:alpha} into the right-hand side of \eqref{eqn:alphafourtermequiv} gives
    \begin{align*}
        &\quad [a-1]_{q}\alpha(a-2,b-1,c)+\alpha(a-1,b,c-1)\\
        &=\frac{[a-1]_{q}!}{[2b+c-1]_{q}![a-2b-c]_{q}!}[2b+c-1]_{q}!!\Big(\mathsf{H}(b-1,c)+\frac{[2b+c-2]_{q}!!}{[2b+c-1]_{q}!!}\mathsf{H}(b,c-1) \Big)\\
        &=\frac{[a-1]_{q}!}{[2b+c-1]_{q}![a-2b-c]_{q}!}[2b+c-1]_{q}!!\mathsf{H}(b,c).
    \end{align*}
    In the last equality, we used the recurrence relation of $\mathsf H(b,c)$. As the both sides of \eqref{eqn:alphafourtermequiv} coincide, this completes the proof of the lemma.
\end{proof}

Next, for $i\leq j$, let ${\mathrm{Mat}}_{a,b,c}^{>j}(i)$ be the set of all generlised matchings in ${\mathrm{Mat}}_{a,b,c}^{>j}$ where exactly $i$ vertices among first $j$ vertices are openers. Equivalently, a generalised matching $M\in{\mathrm{Mat}}_{a,b,c}$ is contained in ${\mathrm{Mat}}_{a,b,c}^{>j}(i)$ if and only if the $i$ vertices among the first $j$ vertices of $M$ are openers, and other $j-i$ are isolated.

We proceed to evaluate the sum
\begin{equation}
    \sum_{M\in{\mathrm{Mat}}_{p+i,k,p-2k}^{>i}(i)}q^{\mathrm{stat}(M)}
\end{equation}
by introducing additional $i$ arcs to generalised matchings in $M\in{\mathrm{Mat}}_{p-i,k-i,p-2k}$ with $i$ openers are located at the left end of $M$. 

For this purpose, fix $M\in{\mathrm{Mat}}_{p-i,k-i,p-2k}$ and introduce two new vertices $o_1$ and $c_1$: $o_1$ at the left end of $M$ and $c_1$ at any other position. Then we introduce an arc connecting $o_1$ and $c_1$, which gives a generalised matching $M_1$ in ${\mathrm{Mat}}_{p-i+2,k-i+1,p-2k}^{>1}(1)$ as a result. In this step, there are $p-i+1$ possible locations of $c_1$, and these choices contribute a factor
\begin{equation*}
    1+q+\cdots +q^{p-i}=[p-i+1]_{q}
\end{equation*}
in total. We continue by adding two vertices: $o_2$ at the left end of $M_1$ and $c_2$ somewhere to the right of $o_1$. This gives the generalised matching $M_2$ in ${\mathrm{Mat}}_{p-i+4,k-i+2,p-2k}^{>2}(2)$ whose first two vertices are $o_2$ and $o_1$. Similarly to the case above, there are \( p - i + 2 \) possible positions for \( c_2 m \), and these choices contribute a factor 
\begin{equation*}
    q+q^2+\cdots+q^{p-i+2}=q[p-i+2]_{q}
\end{equation*}
in total. We repeat this procedure $i$ times to obtain a generalised matching $M_i$ in ${\mathrm{Mat}}_{p+i,k,p-2k}^{>i}(i)$. Combining these observations with Lemma \ref{lem:alpha(a,b,c)} gives that
\begin{align*}
    \sum_{M\in{\mathrm{Mat}}_{p+i,k,p-2k}^{>i}(i)}q^{\mathrm{stat}(M)}
    &=\prod_{m=1}^{i}q^{m-1}[p-i+m]_{q}\sum_{M\in{\mathrm{Mat}}_{p-i,k-i,p-2k}}q^{\mathrm{stat}(M)}\\
    &=q^{\frac{i(i-1)}{2}}\frac{[p]_{q}!}{[p-i]_{q}!}\qbinom{p-i}{p-2i}[p-2i-1]_{q}!!\mathsf{H}(k-i,p-2k).
\end{align*}
Finally, we consider a projection map
\begin{equation*}
    \mathcal{P}:{\mathrm{Mat}}_{p+j,k,p-2k}^{>j}(i)\rightarrow 
    {\mathrm{Mat}}_{p+i,k,p-2k}^{>i}(i)
\end{equation*}
defined by deleting first $j-i$ isolated vertices. Then, for a fixed generalised matching $M\in{\mathrm{Mat}}_{p+i,k,p-2k}^{>i}(i)$, there are $\binom{j}{i}$ choices to recover $\widetilde{M}\in P^{-1}(M)$ by introducing $j-i$ isolated vertices to $M$. Note that the contribution of newly added isolated vertex $v$ to $\mathrm{stat}(\widetilde{M})$ is $p+t$, where $t$ is the number of vertices between $v$ and $j+1$ which are not newly added. Then, the standard combinatorial interpretation of $q$-binomial coefficients gives that
\begin{equation*}
    \sum_{\tilde{M}\in\mathcal{P}^{-1}(M)}q^{\mathrm{stat}(\tilde{M})}
    =q^{\mathrm{stat}(M)+(j-i)(p-i)}\qbinom{j}{i}.
\end{equation*}
Therefore, we have
\begin{align*}
    \sum_{M\in{\mathrm{Mat}}_{p+j,k,p-2k}^{>j}(i)}q^{\mathrm{stat}(M)}
    &=q^{(j-i)(p-i)}\qbinom{j}{i}q^{\frac{i(i-1)}{2}}\frac{[p]_{q}!}{[p-i]_{q}!}\qbinom{p-i}{p-2i}[p-2i-1]_{q}!!\mathsf{H}(k-i,p-2k)\\
    &=q^{(j-i)(p-i)+\frac{i(i-1)}{2}}\qbinom{j}{i}\qbinom{p-i}{p-2i}\frac{[p]_{q}![p-2i-1]_{q}!!}{[p-i]_{q}!}\mathsf{H}(k-i,p-2k),
\end{align*}
which completes the proof.

\bigskip

\section{Analysis of scaled moments and densities} \label{Section_largeN asymptotic}

In this section, we perform asymptotic analysis and prove Theorems~\ref{Thm_genus expansion} and ~\ref{Thm_limiting density}.

\subsection{Proof of Theorem~\ref{Thm_genus expansion}}

We present the proof for the case where $p$ is even; the argument for odd $p$ proceeds analogously, with the substitution of $2p$ by $2p+1$. 

Notice first that by Theorem~\ref{Thm_spectral moments}, we have 
\begin{equation} \label{mN2p in terms of sum of FN}
   q^{p} m_{N,2p}^{(a)}=\sum_{l=0}^{p}F_{N}(l), \qquad F_N(l):=q^{p-l(2p-l)+\frac{l(l-1)}{2}}\sum_{k=0}^{p}\mathfrak{F}_{l}(k)\sum_{j=0}^{N-1}q^{j(2p-l)}\qbinom{j}{l},
\end{equation}
where 
\begin{equation} \label{def of mathfrak Fl(k)}
\mathfrak{F}_{l}(k)   :=(a+1)^{2p-2k}(-a)^{k}(1-q)^{k}\frac{[2p]_{q}!}{[2p-2l]_{q}!![l]_{q}!}\mathsf{H}(k-l,2p-2k). 
\end{equation} 
We now analyse the large-$N$ behaviour of the right-hand side of \eqref{mN2p in terms of sum of FN}. To this end, we reuse parts of the asymptotic analysis in \cite{BFO24}, which deals with the special case $a = -1$.

We first recall the definition of the Stirling numbers of the first kind \( s(n,k) \), which count the number of permutations of \( n \) elements with exactly \( k \) disjoint cycles. They can be expressed explicitly as 
\begin{equation} \label{def of Stirling number}
s(n,k) = (-1)^{n-k} \sum_{ 1 \le b_1 <  \dots < b_{n-k} \le n-1 } b_1 b_2 \dots b_{n-k}, \qquad (n > k \ge 1), 
\end{equation}
see \cite[Section~26.8]{NIST}.  
An alternative characterisation is via the generating function \cite[Eq.~(26.8.7)]{NIST}
\begin{equation}
\sum_{k=0}^n s(n,k)x^k = \frac{ \Gamma(x+1) }{ \Gamma(x-n+1) }.  
\end{equation} 
The following is given in \cite[Lemma 4.3]{BFO24}.

\begin{lem}
\label{lem:sumj}
Let $q$ be scaled according to \eqref{def of q scaling}. Then as $N\rightarrow\infty$, we have   
\begin{equation}
    \sum_{j=0}^{N-1}q^{j(2p-l)}\qbinom{j}{l}
    =\mathcal{C}_{l.0}\Big(\frac{N}{\lambda}\Big)^{l+1}+\mathcal{C}_{l,1}\Big(\frac{N}{\lambda}\Big)^{l}+\mathcal{C}_{l,2}\Big(\frac{N}{\lambda}\Big)^{l-1}+O(N^{l-2}).
\end{equation}
where
\begin{align*}
    \mathcal{C}_{l.0}
    &=\frac{(2p-l-1)!}{(2p)!}I_{1- \mathsf{s} }(l+1,2p-l),
    \\
    \mathcal{C}_{l,1}&=\frac{l(l+1)}{4}\frac{(2p-l-1)!}{(2p)!}I_{1-  \mathsf{s} }(l+1,2p-l)  -\frac{(l-1)}{2}\frac{(2p-l)!}{(2p)!}I_{1- \mathsf{s} }(l,2p-l+1)-\frac{1}{2}\frac{1}{l!} \mathsf{s}^{ 2p-l }(1- \mathsf{s} )^{l}, 
        \\
    \mathcal{C}_{l,2}&= \Big(-\frac{s(l+1,l-1)}{4}+\frac{l(1+l)(-4+l+9l^{2})}{144}\Big)\frac{(2p-l-1)!}{(2p)!}I_{1-\mathsf{s} }(l+1,2p-l) 
        \\
        &\quad -\frac{(l-1)(-2+7l+3l^{2})}{24}\frac{(2p-l)!}{(2p)!}I_{1-\mathsf{s} }(l,2p-l+1)
         \\
        &\quad + \frac{s(l,l-2) }{l(l-1)}\frac{(2p-l+1)!}{(2p)!}I_{1-\mathsf{s} }(l-1,2p-l+2) 
     +\frac{1}{12}\frac{1}{l!} \mathsf{s}^{ 2p-l }(1- \mathsf{s} )^{l}\Big(-\frac{3l^2+l+4p}{2}+(3l^2-2l)\frac{  \mathsf{s} }{1- \mathsf{s} }\Big).
\end{align*}
Here $\mathsf{s}=e^{-\lambda}.$
\end{lem}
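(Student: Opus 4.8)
The plan is to establish the asymptotic expansion of the sum
\[
S_N(l) := \sum_{j=0}^{N-1}q^{j(2p-l)}\qbinom{j}{l}
\]
by first rewriting the $q$-binomial coefficient in a form amenable to Euler--Maclaurin analysis, and then extracting the three leading terms. First I would substitute $q = e^{-\lambda/N}$ and expand $\qbinom{j}{l}$ using its product formula $\qbinom{j}{l} = \prod_{i=1}^{l}\frac{1-q^{j-i+1}}{1-q^i}$. Setting $x = j/N \in [0,1)$, each factor $1-q^{j-i+1} = 1-e^{-\lambda(j-i+1)/N}$ converges, as $N\to\infty$, to $1-e^{-\lambda x}$, while the denominator $1-q^i = 1 - e^{-\lambda i/N} \sim \lambda i/N$ produces the factor $(N/\lambda)^l/l!$ that accounts for the overall scale $N^{l+1}$ in the leading term. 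Thus the summand behaves like $\frac{1}{l!}(N/\lambda)^l\, e^{-\lambda x(2p-l)}(1-e^{-\lambda x})^l$ to leading order, and the Riemann-sum approximation $\frac{1}{N}\sum_j \approx \int_0^1 dx$ converts $S_N(l)$ into an integral.

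The key step is to recognise the leading integral as a regularised incomplete beta function. After the substitution $\mathsf{s} = e^{-\lambda}$ and $t = 1 - e^{-\lambda x}$ (so that $e^{-\lambda x} = 1-t$ and $dt = \lambda e^{-\lambda x}\,dx$, with $x$ ranging over $[0,1)$ corresponding to $t \in [0, 1-\mathsf{s})$), the leading term becomes
\[
\frac{1}{l!}\Big(\frac{N}{\lambda}\Big)^{l+1}\int_0^{1-\mathsf{s}} t^{l}(1-t)^{2p-l-1}\,dt,
\]
which, by the definition \eqref{def of beta ftn}, is precisely $\frac{(2p-l-1)!}{(2p)!}I_{1-\mathsf{s}}(l+1,2p-l)$ times $(N/\lambda)^{l+1}$, matching $\mathcal{C}_{l,0}$. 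This confirms the leading coefficient and pins down the change of variables that organises the whole computation.

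The subleading coefficients $\mathcal{C}_{l,1}$ and $\mathcal{C}_{l,2}$ require the full Euler--Maclaurin expansion rather than the bare Riemann sum. I would write $\log\qbinom{j}{l}$ and $\log q^{j(2p-l)}$ as functions of $x=j/N$, expand them to second order in $1/N$ using $1-e^{-\lambda i/N} = \frac{\lambda i}{N}\big(1 - \frac{\lambda i}{2N} + \frac{\lambda^2 i^2}{12 N^2} + \cdots\big)$, and collect the $O(1/N)$ and $O(1/N^2)$ corrections to the summand; these generate additional powers of $t$ inside the integral, producing the shifted beta functions $I_{1-\mathsf{s}}(l,2p-l+1)$ and $I_{1-\mathsf{s}}(l-1,2p-l+2)$ seen in $\mathcal{C}_{l,1},\mathcal{C}_{l,2}$. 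The Euler--Maclaurin boundary and derivative corrections contribute the endpoint terms proportional to $\mathsf{s}^{2p-l}(1-\mathsf{s})^{l}$, and the appearance of the Stirling numbers $s(l+1,l-1)$ and $s(l,l-2)$ traces back to expanding the product $\prod_i(1-e^{-\lambda i/N})^{-1}$, whose logarithm involves power sums $\sum_i i^m$ that are polynomials in $l$ expressible via \eqref{def of Stirling number}.

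The main obstacle will be bookkeeping in the second-order term: tracking how the $O(1/N^2)$ corrections to the summand, the Euler--Maclaurin correction terms (involving first and higher derivatives of the summand at the endpoints), and the three distinct beta-function contributions combine into the stated $\mathcal{C}_{l,2}$. In particular, correctly identifying the coefficients $\frac{l(1+l)(-4+l+9l^2)}{144}$ and $\frac{(l-1)(-2+7l+3l^2)}{24}$ demands careful Taylor expansion of the Bernoulli-type corrections together with the relation between power sums and Stirling numbers. Since \cite[Lemma 4.3]{BFO24} already records this result in the case $a=-1$ (where the same sum appears), the cleanest route is to invoke that lemma directly, as the statement is independent of $a$; the work above then serves to verify its applicability in the present notation rather than to rederive it from scratch.
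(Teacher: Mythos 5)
Your proposal is correct and ends up at the same place as the paper: the paper offers no independent proof of this lemma, stating only that it is given in \cite[Lemma 4.3]{BFO24}, and since the sum $\sum_{j=0}^{N-1}q^{j(2p-l)}\qbinom{j}{l}$ does not involve $a$, invoking that lemma directly is exactly what is done. Your preliminary Riemann-sum/Euler--Maclaurin sketch is a sound outline of how the cited result is actually derived, and your verification of $\mathcal{C}_{l,0}$ via the substitution $t=1-e^{-\lambda x}$ checks out.
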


Next, we examine the asymptotic behaviour of $\mathfrak{F}_{l}(k)$ as defined in \eqref{def of mathfrak Fl(k)}.  
Note that $\mathfrak{F}_{l}(k)$ takes nonzero values only for $l \leq k \leq p$, due to the term $\mathsf{H}(k - l, 2p - 2k)$.  
We begin by deriving the following asymptotic expansion.

\begin{lem}\label{lem:mfrakF1}
Let $q$ be scaled according to \eqref{def of q scaling}. Let $l \le k \le p$. Then as $N\rightarrow\infty$, we have
\begin{align}
\begin{split}
&\quad (1-q)^{k}\frac{[2p]_{q}!}{[2p-2l]_{q}!![l]_{q}!}
\\
&=\Big(\frac{\lambda}{N}\Big)^{k}
    \frac{(2p)!}{(2p-2l)!!l!}
    \Big(1+\frac{3l^2-4lp-2p^2-l+2p-2k}{4}\frac{\lambda}{N}
    +\mathfrak{a}'(p,l,k)\frac{\lambda^2}{N^2}
    +O(N^{-3})\Big), 
\end{split}
\end{align}
where 
\begin{equation*}
    \mathfrak{a}'(p,l,k)=\mathfrak{a}(p,l)+\frac{(p-k)(3p-3k-1+9l^2-12lp-6p^2-3l)}{24}
\end{equation*}
and 
\begin{align*}
    \mathfrak{a}(p,l) &=\frac{s(2p+1,2p-1)}{4}-s(p-l+1,p-l-1)-\frac{s(l+1,l-1)}{4}
    \\
    &\quad +\frac{1}{144}(12l-33l^2-30l^3+63l^4-8p+60lp+48l^2p-180l^3p+61p^2+126l^2p^2-4p^3-36p^4).
\end{align*} 
\end{lem}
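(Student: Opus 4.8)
The plan is to reduce the claimed asymptotics to the Taylor expansion of a single smooth function. Write $t = \lambda/N$ and introduce $g(t) := (1-e^{-t})/t$, so that every $q$-integer factorises as $[n]_q = \frac{1-q^n}{1-q} = n\,\frac{g(nt)}{g(t)}$. Substituting this into $(1-q)^k \frac{[2p]_q!}{[2p-2l]_q!!\,[l]_q!}$ and using $1-q = t\,g(t)$, the integer parts collect into the prefactor $(\lambda/N)^k \frac{(2p)!}{(2p-2l)!!\,l!}$, while all the $g$-factors assemble into the single quantity
\begin{equation*}
G(t) := g(t)^{k-p}\,\frac{\prod_{n=1}^{2p} g(nt)}{\prod_{m=1}^{p-l} g(2mt)\,\prod_{r=1}^{l} g(rt)},
\end{equation*}
where the exponent $k-p$ combines the $k$ powers of $g(t)$ coming from $(1-q)^k=(t\,g(t))^k$ with the net $-p$ produced by the factorial ratio (i.e.\ $2p$ factors $g(t)^{-1}$ in the numerator against $p$ in the denominator). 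It then suffices to show $G(t) = 1 + c_1 t + c_2 t^2 + O(t^3)$ with $c_1 = \tfrac14(3l^2-4lp-2p^2-l+2p-2k)$ and $c_2 = \mathfrak a'(p,l,k)$.

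To expand $G$, I would pass to logarithms and use the Bernoulli-number expansion $\log g(t) = -\tfrac{t}{2} + \tfrac{t^2}{24} + O(t^4)$, which follows from $\frac{d}{dt}\log g(t) = \frac{1}{e^t-1} - \frac1t$ together with the vanishing of the odd Bernoulli numbers beyond $B_1$. Consequently $\log G(t)$ is, to order $t^2$, a linear combination of the weighted power sums $\sum n^j$ ($j=1,2$) over the three index ranges $\{1,\dots,2p\}$, $\{2,4,\dots,2(p-l)\}$, $\{1,\dots,l\}$, with the scalar $k-p$ attached to $\log g(t)$. The coefficient of $t$ reduces to $-\tfrac12\big[(k-p)+\sum_{n=1}^{2p}n - \sum_{m=1}^{p-l}2m - \sum_{r=1}^{l}r\big]$, which one checks equals $c_1$.

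For the second order I would compute $c_2 = d_2 + \tfrac12 c_1^2$, where $d_2$ is the coefficient of $t^2$ in $\log G$ (involving the quadratic power sums $\sum n^2$) and the term $\tfrac12 c_1^2$ arises from exponentiating. The key repackaging step is to recognise that the products of power sums produced here are precisely the second elementary symmetric polynomials $e_2(1,\dots,M)$ over the ranges $M\in\{2p,\,p-l,\,l\}$, which equal the Stirling numbers $s(2p+1,2p-1)$, $s(p-l+1,p-l-1)$, $s(l+1,l-1)$ by \eqref{def of Stirling number}; this is how the $k$-independent part $\mathfrak a(p,l)$ acquires its Stirling-number form. The $k$-dependence, carried solely by the factor $g(t)^{k-p}$, separates cleanly and yields the correction $\frac{(p-k)(\cdots)}{24}$, giving $\mathfrak a'(p,l,k)$. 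The main obstacle is purely organisational: assembling the degree-four polynomial in $(p,l)$ coming from $\tfrac12 c_1^2$ and $d_2$ and verifying that it matches the stated $\mathfrak a(p,l)$ after the Stirling-number substitution. A useful internal consistency check, which also guides the bookkeeping, is that the $(k-p)$-linear and $(k-p)$-quadratic contributions must combine into the single factor $(p-k)$ appearing in $\mathfrak a'$; this constraint, together with the first-order identity for $c_1$ already verified above, strongly pins down the final form of the answer.
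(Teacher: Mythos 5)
Your proposal is correct in substance but takes a genuinely different route from the paper. The paper's proof is a two-line reduction: it cites \cite[Lemma 4.1]{BFO24} for the case $k=p$ and then accounts for the remaining factor $(1-q)^{k-p}$, whose expansion is what produces the $-2k$ (rather than $-2p$) in the first-order coefficient and the $(p-k)$-correction turning $\mathfrak a$ into $\mathfrak a'$. You instead give a self-contained derivation: factorising each $q$-integer as $[n]_q=n\,g(nt)/g(t)$ with $g(t)=(1-e^{-t})/t$, collecting everything into $G(t)=g(t)^{k-p}\prod g(nt)\big/\bigl(\prod g(2mt)\prod g(rt)\bigr)$, and expanding $\log G$ via $\log g(t)=-t/2+t^2/24+O(t^4)$. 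I verified that your exponent count ($k-p$ on $g(t)$) and your first-order coefficient are right, and that your scheme reproduces the $k$-dependence exactly: writing $c_1=c_1|_{k=p}+\tfrac{p-k}{2}$ and $d_2=d_2|_{k=p}+\tfrac{k-p}{24}$, one gets
\begin{equation*}
c_2-c_2|_{k=p}=\frac{k-p}{24}+\frac{p-k}{2}\,c_1|_{k=p}+\frac{(p-k)^2}{8}
=\frac{(p-k)\bigl(3p-3k-1+9l^2-12lp-6p^2-3l\bigr)}{24},
\end{equation*}
which is precisely the stated increment $\mathfrak a'-\mathfrak a$. Your approach buys independence from the external reference (it effectively reproves the $k=p$ case of \cite{BFO24} as well), at the cost of the quartic bookkeeping for $\mathfrak a(p,l)$; the paper buys brevity, though its displayed identity $(1-q)^k=(\lambda/N)^k+O(N^{-2k})$ does not by itself explain where the $k$-dependence of the subleading coefficients comes from.

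Two remarks on the part you defer. First, the Stirling-number repackaging is a presentational choice rather than a forced identity: $\tfrac12c_1^2$ also produces cross-range products $S_iS_j$ that belong to no single $e_2$, and the per-range combinations $\tfrac1{24}\sum j^2+\tfrac18(\sum j)^2=\tfrac16\sum j^2+\tfrac14 e_2$ do not all carry the signs appearing in $\mathfrak a(p,l)$; the leftover is simply absorbed into the explicit quartic polynomial. Second, if you carry the verification to completion you will find a mismatch with the printed formula: at $(p,l,k)=(1,0,1)$ the quantity is exactly $1-q=\tfrac{\lambda}{N}\bigl(1-\tfrac{\lambda}{2N}+\tfrac16\tfrac{\lambda^2}{N^2}+\cdots\bigr)$, so $\mathfrak a(1,0)=\tfrac16$, whereas the displayed expression evaluates to $\tfrac{85}{144}$. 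The discrepancy disappears (in this and every other small case I checked) if the term $61p^2$ in $\mathfrak a(p,l)$ is read as $6lp^2$; so your method is sound and would in fact detect this typo in the statement.
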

\begin{proof}
The case $k = p$ was established in \cite[Lemma 4.1]{BFO24}. The result for general $l \le k \le p$ then follows directly from the case $k=p$ and  
\begin{equation*}
    (1-q)^{k}=\Big(\frac{\lambda}{N}\Big)^{k}+O(N^{-2k}). 
\end{equation*}
\end{proof}

Next, we proceed to find the asymptotic behaviour of
\begin{equation}\label{eqn:H(k-l,2p-l)}
    \mathsf{H}(k-l,2p-l)=\sum_{0\leq j_1\leq j_2\leq\cdots j_{2p-2k}\leq k-l}\prod_{r=1}^{2p-2k}\frac{[2j_{r}+r-2]_{q}!!}{[2j_{r}+r-1]_{q}!!}
\end{equation}
according to \eqref{def of mathfrak Fl(k)}. We first provide the asymptotic of the summand on the right-hand side of \eqref{eqn:H(k-l,2p-l)}.

\begin{lem}\label{lem:mfrakF2}
    Let $q$ be scaled according to \eqref{def of q scaling}. Let $l\leq k\leq p$ and $0\leq j_1 \leq j_2 \leq \cdots \leq j_{2p-2k}\leq k-l$ be given. Then as $N\rightarrow\infty$ we have
    \begin{equation}\label{Hsummandasymp}
        \prod_{r=1}^{2p-2k}\frac{[2j_{r}+r-2]_{q}!!}{[2j_{r}+r-1]_{q}!!}
        =\prod_{r=1}^{2p-2k}\frac{(2j_r +r-2)!!}{(2j_r+r-1)!!}
        \Big(1+\mathcal{D}_{k,1}\frac{\lambda}{N}+\mathcal{D}_{k,2}\frac{\lambda^2}{N^{2}}+O(N^{-3})\Big), 
    \end{equation}
    where
    \begin{align*}
        \mathcal{D}_{k,1}& =\frac{1}{2}\sum_{r=1}^{2p-2k}j_r + \frac{(p-k)(p-k+1)}{2}, 
        \\
        \mathcal{D}_{k,2}
        &= \sum_{1\leq r<s\leq p-k}\Big(\frac{j_{2r}+j_{2r-1}}{2}+r-1\Big) \Big(\frac{j_{2s}+j_{2s-1}}{2}+s-1\Big)
        \\
        &\quad+\sum_{r=1}^{p-k}\Big(\frac{(j_{2r}+r-1)(j_{2r-1}+r-1)}{4}
        +\frac{(j_{2r}+r-1)(j_{2r}+r-4)+(j_{2r-1}+r-1)(j_{2r-1}+r-2)}{24}\Big). 
    \end{align*}
\end{lem}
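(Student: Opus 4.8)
The plan is to pass to logarithms, expand each factor to second order in $x:=\lambda/N$, and then re-exponentiate; the whole computation rests on one elementary expansion. Writing $q=e^{-x}$ and $f(y):=(1-e^{-y})/y$, one has $[m]_q=m\,f(mx)/f(x)$, so that $\log[m]_q-\log m=g(mx)-g(x)$ with $g:=\log f$. Since $f(y)=e^{-y/2}\,\frac{\sinh(y/2)}{y/2}$, the function $g(y)=-\tfrac{y}{2}+\log\frac{\sinh(y/2)}{y/2}$ is a linear term plus an even function, whence $g(y)=-\tfrac{y}{2}+\tfrac{y^2}{24}+O(y^4)$ with no cubic term. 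Therefore, for each integer $m$ bounded uniformly in $N$,
\[
\log[m]_q-\log m=-\frac{m-1}{2}\,x+\frac{m^2-1}{24}\,x^2+O(x^4).
\]

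Next I set $M_r:=2j_r+r$ and write the $r$-th factor of the ratio in \eqref{Hsummandasymp} relative to its $q=1$ value as $R_r:=\big([M_r-2]_q!!\,(M_r-1)!!\big)/\big((M_r-2)!!\,[M_r-1]_q!!\big)$, so that
\[
\log R_r=\sum_{i\in S_{M_r-2}}\big(\log[i]_q-\log i\big)-\sum_{i\in S_{M_r-1}}\big(\log[i]_q-\log i\big),
\]
where $S_m=\{m,m-2,\dots\}$ is the step-$2$ progression ending at $1$ or $2$. Inserting the single-factor expansion and evaluating the arithmetic sums $\sum_{i\in S_m}(i-1)$ and quadratic sums $\sum_{i\in S_m}(i^2-1)$ over the interleaving even and odd index sets gives closed rational expressions for $[\log R_r]_{x^1}$ and $[\log R_r]_{x^2}$. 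A direct evaluation shows, for both $r=2t-1$ and $r=2t$, that $[\log R_r]_{x^1}=\tfrac{j_r}{2}+\tfrac{t-1}{2}$; note that all arguments $m\le 2p$ are bounded, so every error is $O(x^4)$ uniformly.

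Finally I re-exponentiate. Since $\sum_r\log R_r$ is the logarithm of the ratio on the left of \eqref{Hsummandasymp}, collecting powers of $x$ yields
\[
\prod_r R_r=1+\Big(\textstyle\sum_r[\log R_r]_{x^1}\Big)x+\Big(\textstyle\sum_r[\log R_r]_{x^2}+\tfrac12\big(\sum_r[\log R_r]_{x^1}\big)^2\Big)x^2+O(x^3),
\]
which identifies $\mathcal{D}_{k,1}=\sum_r[\log R_r]_{x^1}$ and $\mathcal{D}_{k,2}=\sum_r[\log R_r]_{x^2}+\tfrac12(\sum_r[\log R_r]_{x^1})^2$. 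Grouping the indices into pairs $(2t-1,2t)$ for $t=1,\dots,p-k$ is what produces the structure of the stated answer: the pair-sums $[\log R_{2t-1}]_{x^1}+[\log R_{2t}]_{x^1}=\tfrac{j_{2t-1}+j_{2t}}{2}+(t-1)$ are exactly the factors in the symmetric double sum in $\mathcal{D}_{k,2}$, the off-diagonal part of $\tfrac12(\sum_r[\log R_r]_{x^1})^2$ reproduces that double sum, and the diagonal squares merge with $\sum_r[\log R_r]_{x^2}$ to form the per-pair single sum; the first-order sum collapses to $\tfrac12\sum_r j_r$ plus a triangular constant, giving $\mathcal{D}_{k,1}$.

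The routine but delicate part—and the only real obstacle—is the second-order bookkeeping: one must sum $\tfrac{i^2-1}{24}$ over the two interleaving progressions, then correctly separate the genuine second-order logarithms from the cross-terms generated by exponentiating the first-order part, and reassemble them into the precise rational coefficients (with denominators $4$ and $24$) recorded in $\mathcal{D}_{k,2}$. Everything else is the elementary expansion above together with the vanishing of the cubic term in $g$, which is what secures the claimed $O(N^{-3})$ remainder.
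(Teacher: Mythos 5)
Your argument is essentially the paper's proof reorganized through logarithms: the paper expands each factor $\prod_j \tfrac{1-q^{2j}}{1-q^{2j+1}}$ and $\prod_j \tfrac{1-q^{2j-1}}{1-q^{2j}}$ directly to order $(\lambda/N)^2$ (obtaining the per-factor coefficients $\tfrac{n}{2}$, $\tfrac{n(n-3)}{24}$ and $\tfrac{n}{2}$, $\tfrac{n(n-1)}{24}$ with $n=j_r+\lceil r/2\rceil-1$) and then multiplies the expansions together, while you expand $\log[m]_q-\log m$ via $g(y)=-\tfrac{y}{2}+\tfrac{y^2}{24}+O(y^4)$ and re-exponentiate. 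I checked that your scheme reproduces exactly those per-factor coefficients, and your description of how the cross-terms of $\tfrac12(\sum_r[\log R_r]_{x^1})^2$ assemble into the double sum and the diagonal terms in $\mathcal{D}_{k,2}$ is structurally correct, so the approach goes through; note only that the second-order arithmetic is asserted rather than carried out, and that the vanishing of the cubic term of $g$ is not actually needed for an $O(N^{-3})$ remainder.

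One substantive caution: do not claim that your first-order sum ``gives $\mathcal{D}_{k,1}$'' as printed. Your own computation (and the paper's) yields
\begin{equation*}
\sum_{r}[\log R_r]_{x^1}=\frac12\sum_{r=1}^{2p-2k}j_r+\sum_{t=1}^{p-k}(t-1)=\frac12\sum_{r=1}^{2p-2k}j_r+\frac{(p-k)(p-k-1)}{2},
\end{equation*}
whereas the lemma displays $\tfrac{(p-k)(p-k+1)}{2}$. The displayed constant is a typo: for $p-k=1$, $j_1=j_2=0$ the product is identically $1$, so the first-order coefficient must vanish, and the paper's subsequent applications (the expansions of $\mathsf{H}(0,2p-2l)$ and $\mathsf{H}(1,2p-2l-2)$, where the constants $\tfrac{(p-l)(p-l-1)}{2}$ and $\tfrac{(p-l-2)(p-l-1)}{2}$ appear) are consistent with $\tfrac{(p-k)(p-k-1)}{2}$. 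Your method therefore produces the correct value; state it explicitly rather than matching it blindly to the printed formula.
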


\begin{proof}
   We treat the cases where \( r \) is even and where \( r \) is odd separately.
 If $r=2r'$ is even, we have
    \begin{equation*}
        \frac{[2j_{r}+r-2]_{q}!!}{[2j_{r}+r-1]_{q}!!}
        =\frac{[2j_{2r'}+2r'-2]_{q}!!}{[2j_{2r'}+2r'-1]_{q}!!}
        =\prod_{j=1}^{j_{2r'}+r'-1}\frac{1-q^{2j}}{1-q^{2j+1}}.
    \end{equation*}
    Using the Talyor expansion
    \begin{equation*}
        1-q^{m}=1-e^{-\frac{m\lambda}{N}}=\frac{m\lambda}{N}\sum_{r=0}^{\infty}\frac{(-1)^{r}}{(r+1)!}\Big(\frac{m\lambda}{N}\Big)^{r},
    \end{equation*}
    and \eqref{def of Stirling number}, we have
    \begin{equation*}
        \prod_{j=1}^{n}(1-q^{2j})
        =\Big(\frac{\lambda}{N}\Big)^{n}(2n)!!\bigg(1-\frac{n(n+1)}{2}\frac{\lambda}{N}+\Big(s(n+1,n-1)+\frac{n(n+1)(2n+1)}{9}\Big)\frac{\lambda^{2}}{N^{2}}+O(N^{-3})\bigg)
    \end{equation*}
    as $N\rightarrow\infty$. Then, using  
    \begin{equation*}
        s(n+1,n-1)=\frac{n(n-1)(n+1)(3n+2)}{24}, \qquad (n\geq1)
    \end{equation*} 
    and taking the Talyor series of the reciprocal, we obtain
    \begin{equation}\label{eqn:r even}
        \prod_{j=1}^{n}\frac{1-q^{2j}}{1-q^{2j+1}}
        =\frac{(2n)!!}{(2n+1)!!}\bigg(1+\frac{n}{2}\frac{\lambda}{N}+\frac{n(n-3)}{24}\frac{\lambda^{2}}{N^{2}}+O(N^{-3})\bigg)
    \end{equation}
    as $N\rightarrow\infty$. Similarly, we have 
    \begin{equation}\label{eqn:r odd}
        \prod_{j=1}^{n}\frac{1-q^{2j-1}}{1-q^{2j}}
        =\frac{(2n-1)!!}{(2n)!!}\bigg(1+\frac{n}{2}\frac{\lambda}{N}+\frac{n(n-1)}{24}\frac{\lambda^{2}}{N^{2}}+O(N^{-3})\bigg). 
    \end{equation}
    Substituting \eqref{eqn:r even} and \eqref{eqn:r odd} into
    \begin{align*}
        \prod_{r=1}^{2p-2k}\frac{[2j_{r}+r-2]_{q}!!}{[2j_r+r-1]_{q}!!}
        &=\prod_{r=1}^{p-k}\frac{[2j_{2r-1}+2r-3]_{q}!!}{[2j_{2r-1}+2r-2]_{q}!!}\frac{[2j_{2r}+2r-2]_{q}!!}{[2j_{2r}+2r-1]_{q}!!}\\
        &=\prod_{r=1}^{p-k}\Big(\prod_{j=1}^{j_{2r-1}+r-1}\frac{1-q^{2j-1}}{1-q^{2j}}\prod_{j=1}^{j_{2r}+r-1}\frac{1-q^{2j}}{1-q^{2j+1}}\Big),
    \end{align*}
    we conclude the desired asymptotic behaviour \eqref{Hsummandasymp}.
\end{proof}

By combining Lemmas \ref{lem:mfrakF1} and \ref{lem:mfrakF2}, we find that \( \mathfrak{F}_{l}(k) \) is of order \( N^{-k} \) for \( k \geq l \) in the large-\( N \) limit. On the other hand, Lemma \ref{lem:sumj} already shows that \( \sum_{j=0}^{N-1} q^{j(2p - l)} \) is of order \( N^{l+1} \). Therefore, to derive the large-\( N \) expansion in \eqref{eqn:genus expansion} up to order \( N^{-1} \), it suffices to analyse the asymptotic behaviour of \( \mathfrak{F}_{l}(k) \) for \( k = l \), \( l+1 \), and \( l+2 \). While the full evaluation of the sum in \eqref{eqn:H(k-l,2p-l)} is challenging in general, the proof requires only three specific asymptotic expressions: \( \mathsf{H}(0, 2p - 2l) \), \( \mathsf{H}(1, 2p - 2l - 2) \), and \( \mathsf{H}(2, 2p - 2l - 4) \).

\begin{lem}\label{lem:H asymptotic}
    Let $q$ be scaled according to \eqref{def of q scaling}. 
Let $\mathsf{H}(b,c)$ be given by \eqref{def of H}.  Then as $N\rightarrow\infty$ we have
    \begin{align}
        \begin{split}
        \mathsf{H}(0,2p-2l)
        &=\frac{1}{(2p-2l-1)!!}\bigg(1+\frac{(p-l)(p-l-1)}{2}\frac{\lambda}{N}\\
        &\quad+\frac{(p-l)(p-l-1)(-4+13l+9l^2-13p-18lp+9p^2)}{72}\frac{\lambda^2}{N^2}+O(N^{-3})\bigg)\label{eqn:H(0,2p-2l)},
        \end{split}
        \\
        \mathsf{H}(1,2p-2l-2)
        &=\frac{p-l}{(2p-2l-3)!!}\bigg(1+\frac{(p-l-1)(3p-3l-4)}{6}\frac{\lambda}{N}+O(N^{-2})\bigg), 
    \label{eqn:H(1,2p-2l-2)}
    \\
    \label{eqn:H(2,2p-2l-4)}
        \mathsf{H}(2,2p-2l-4)& =\frac{(p-l)(p-l-1)}{2(2p-2l-5)!!}+O(N^{-1}).
    \end{align} 
\end{lem}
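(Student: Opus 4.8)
The plan is to treat the three quantities by first reducing each to a genuine closed form---a ratio of $q$-double-factorials times a short polynomial in $q$-integers---and only then expanding in $\lambda/N$ exactly as in the proof of Lemma~\ref{lem:mfrakF2}. For $\mathsf{H}(0,2p-2l)$ no reduction is needed: the special-case definition \eqref{def of H special cases} gives immediately $\mathsf{H}(0,2p-2l)=1/[2p-2l-1]_q!!$, a single explicit product. For $\mathsf{H}(1,\cdot)$ and $\mathsf{H}(2,\cdot)$ the key observation is that a non-decreasing sequence $0\le j_1\le\cdots\le j_c$ with values in $\{0,1\}$ (resp.\ $\{0,1,2\}$) is completely determined by one (resp.\ two) threshold indices, so the summand factorises into blocks that telescope against the sequence $D_m:=[m]_q!!$.

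Concretely, writing $D_m=[m]_q!!$ and considering the term of $\mathsf{H}(1,c)$ with $j_1=\cdots=j_t=0$, $j_{t+1}=\cdots=j_c=1$, the two blocks telescope to $\prod_{k=1}^{t}D_{k-2}/D_{k-1}=D_{-1}/D_{t-1}=1/[t-1]_q!!$ and $\prod_{k=t+1}^{c}D_{k}/D_{k+1}=D_{t+1}/D_{c+1}=[t+1]_q!!/[c+1]_q!!$, and their product collapses (using $[t+1]_q!!=[t+1]_q[t-1]_q!!$) to $[t+1]_q/[c+1]_q!!$. Summing over the threshold $t$ yields
\begin{equation*}
\mathsf{H}(1,c)=\frac{1}{[c+1]_q!!}\sum_{s=1}^{c+1}[s]_q .
\end{equation*}
The identical bookkeeping with two thresholds $0\le t_1\le t_2\le c$ gives the product $[t_1+1]_q[t_2+3]_q/[c+3]_q!!$, whence
\begin{equation*}
\mathsf{H}(2,c)=\frac{1}{[c+3]_q!!}\sum_{0\le t_1\le t_2\le c}[t_1+1]_q\,[t_2+3]_q .
\end{equation*}
Before proceeding I would sanity-check both formulas against $\mathsf{H}(b,0)=1$ and one small value of $c$.

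For the asymptotics I would use the single expansion $\log(1-q^m)=\log(m\lambda/N)-\tfrac{m\lambda}{2N}+\tfrac{m^2\lambda^2}{24N^2}+O(N^{-3})$ valid for $q=e^{-\lambda/N}$. Summing $\log[2i-1]_q$ over $1\le i\le p-l$ gives $\log\mathsf{H}(0,2p-2l)$ to order $N^{-2}$: the first-order term is $\tfrac12(p-l)(p-l-1)$, and the second-order term is controlled by $\sum_i(2i-1)^2$, after which exponentiation reinstates the square of the first-order term. Writing $M=p-l$, the $N^{-2}$ coefficient collapses to $\tfrac{1}{72}M(M-1)(9M^2-13M-4)$, and the identity $9M^2-13M-4=9p^2-18lp+9l^2-13p+13l-4$ recovers precisely the polynomial in \eqref{eqn:H(0,2p-2l)}. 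For $\mathsf{H}(1,\cdot)$ I would expand $\sum_{s=1}^{2p-2l-1}[s]_q$ to first order (so that $\sum_s s(s-1)$ enters) and multiply by the first-order expansion of $1/[2p-2l-1]_q!!$; the two pieces combine into the stated factor $\tfrac16(p-l-1)(3p-3l-4)$. For $\mathsf{H}(2,\cdot)$ only leading order is needed: sending $q\to1$ turns the double sum into $\tfrac18\,n(n+1)(n+2)(n+3)$ with $n=2p-2l-3$ via $\sum_{b=1}^{n}b(b+1)(b+2)=\tfrac14 n(n+1)(n+2)(n+3)$, and dividing by $(2p-2l-1)!!$ produces $(p-l)(p-l-1)/\big(2(2p-2l-5)!!\big)$.

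The conceptually clean steps are the threshold decompositions and the telescoping, which convert the nested sums into the explicit forms above. The main obstacle is the bookkeeping of the $N^{-2}$ term for $\mathsf{H}(0,2p-2l)$: one must correctly combine the genuine second-order contribution coming from $\sum_i(2i-1)^2$ with the exponentiated square of the first-order term, and then verify that the resulting polynomial in $M=p-l$ agrees with the $(l,p)$-expression in the statement. This is exactly where the Stirling-type simplifications used in Lemma~\ref{lem:mfrakF2} are required, and the single genuine consistency check is the algebraic collapse to $9M^2-13M-4$.
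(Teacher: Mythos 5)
Your proposal is correct, and I verified the key identities: the telescoping of the blocks does give $\mathsf{H}(1,c)=\frac{1}{[c+1]_q!!}\sum_{s=1}^{c+1}[s]_q$ and $\mathsf{H}(2,c)=\frac{1}{[c+3]_q!!}\sum_{0\le t_1\le t_2\le c}[t_1+1]_q[t_2+3]_q$, and the subsequent expansions reproduce all three stated asymptotics (in particular the $N^{-2}$ coefficient of \eqref{eqn:H(0,2p-2l)} does collapse to $\frac{1}{72}M(M-1)(9M^2-13M-4)$ with $M=p-l$, exactly as you claim). The route differs from the paper's in the order of operations for the $b=1,2$ cases: the paper first expands each summand of \eqref{def of H} in powers of $\lambda/N$ via Lemma~\ref{lem:mfrakF2} and only then evaluates the resulting classical sums, partitioning the non-decreasing sequences by the number of indices with $j_r=0$ (resp.\ $j_r=0,1,2$) and tracking the statistic $\sum_r j_r$ through the expansion; you instead sum first at the $q$-level, using the same threshold decomposition but exploiting that the $q$-double-factorial ratios telescope exactly, and expand the resulting closed form afterwards. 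Your version buys exact $q$-formulas for $\mathsf{H}(1,c)$ and $\mathsf{H}(2,c)$ as a byproduct and makes the first-order correction to \eqref{eqn:H(1,2p-2l-2)} a one-line computation from $\sum_s s(s-1)$, at the cost of having to justify the edge case $t=0$ of the identity $[t+1]_q!!=[t+1]_q[t-1]_q!!$ (which holds trivially since both blocks are then empty or start at $[1]_q!!$); the paper's version avoids any exact $q$-summation but requires carrying the summand asymptotics of Lemma~\ref{lem:mfrakF2} into the enumeration. The $b=0$ case is handled identically in both.
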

\begin{proof}
    Note that by \eqref{def of H special cases}, we have
    \begin{equation*}
        \mathsf{H}(0,2p-2l)=\frac{1}{[2p-2l-1]_{q}!!}.
    \end{equation*}
    Then \eqref{eqn:H(0,2p-2l)} follows from straightforward computations. 
    
    Next, we show  \eqref{eqn:H(1,2p-2l-2)}. 
    It follows from \eqref{Hsummandasymp} that
    \begin{align}
    \begin{split}
    \label{eqn:asymptotic of H(1,2p-2l-2)}
        \mathsf{H}(1,2p-2l-2)&=\sum_{0\leq j_{1}{\leq}\cdots{\leq}j_{2p-2l-2}\leq1}
        \prod_{r=1}^{2p-2l-2}\frac{(2j_{r}+r-2)!!}{(2j_{r}+r-1)!!}
        \\
        &\qquad \times \bigg(1+\Big(\frac{1}{2}\sum_{r=1}^{2p-2l-2}j_r +\frac{(p-l-2)(p-l-1)}{2}\Big)\frac{\lambda}{N}+O(N^{-2})\bigg).
    \end{split}
    \end{align} 
    Let $m$ be the number of indices $c$ such that $j_c=0$. Then we have
    \begin{align*}
        &\quad \sum_{0\leq j_{1}{\leq} \cdots{\leq}j_{2p-2l-2}\leq1}\prod_{r=1}^{2p-2k-2}\frac{(2j_{r}+r-2)!!}{(2j_{r}+r-1)!!}
        =\sum_{m=0}^{2p-2l-2}
        \Big(\prod_{r=1}^{m}\frac{(r-2)!!}{(r-1)!!}\Big)
        \Big(\prod_{r=m+1}^{2p-2l-2}\frac{r!!}{(r+1)!!}\Big)
        \\
        &=\sum_{m=0}^{2p-2l-2}
        \Big(\prod_{r=1}^{m}\frac{(r-2)!!}{(r-1)!!}\Big)
        \Big(\prod_{r=m+3}^{2p-2l}\frac{(r-2)!!}{(r-1)!!}\Big)= \sum_{m=0}^{2p-2l-2} (m+1)\prod_{r=1}^{2p-2l}\frac{(r-2)!!}{(r-1)!!} =  \sum_{m=0}^{2p-2l-2}  \frac{m+1}{(2p-2l-1)!!}. 
    \end{align*} 
    On the other hand, from the choice of the integer $m$, we also have
    \begin{equation*}
        \sum_{j=1}^{2p-2l-2}j_{r}=2p-2l-2-m. 
    \end{equation*}
    Combining all of the above, we obtain 
    \begin{equation*}
        \mathsf{H}(1,2p-2l-2)
        =\sum_{m=0}^{2p-2l-2}\frac{m+1}{(2p-2l-1)!!}\bigg(1+\Big(\frac{2p-2l-2-m}{2}+\frac{(p-l-2)(p-l-1)}{2}\Big)\frac{\lambda}{N}+O(N^{-2})\bigg),
    \end{equation*}
    which leads to \eqref{eqn:H(1,2p-2l-2)} after simplifications. 
    
   Finally, for \eqref{eqn:H(2,2p-2l-4)}, we only require the leading-order term. Thus, it suffices to establish the identity
    \begin{equation}
        \sum_{0\leq j_1\leq\cdots\leq j_{n}\leq 2}\prod_{r=1}^{n}\frac{(2j_r +r-2)!!}{(2j_r +r-1)!!}=\frac{n^4+10n^3+35n^2+50n+24}{8(n+3)!!}
    \end{equation}
    for $n\geq1$. This summation can be evaluated using a similar strategy to that used in the derivation of \eqref{eqn:H(1,2p-2l-2)}, by decomposing the sum according to the number of indices \( c \) for which \( j_c = 0 \), \( j_c = 1 \), or \( j_c = 2 \). This completes the proof. 
\end{proof}

We are now ready to show Theorem~\ref{Thm_genus expansion}. 

\begin{proof}[Proof of Theorem \ref{Thm_genus expansion}]
   By \eqref{mN2p in terms of sum of FN} and the observation following Lemma \ref{lem:mfrakF2}, the proof reduces to deriving the large $N$ asymptotic of
   \begin{equation*}
       \sum_{l=0}^{p}q^{p-l(2p-l)+\frac{l(l-1)}{2}}\Big(\mathfrak{F}_{l}(l)+\mathfrak{F}_{l}(l+1)+\mathfrak{F}_{l}(l+2) \Big)\sum_{j=0}^{N-1}q^{j(2p-l)}\qbinom{j}{l}.
   \end{equation*}
  By using Lemmas \ref{lem:mfrakF1} and \ref{lem:H asymptotic}, we obtain the following asymptotic expansions as $N \to \infty$: 
   \begin{align*}
       \mathfrak{F}_{l}(l)
       &=(a+1)^{2p-2l}\Big(-\frac{a\lambda}{N}\Big)^{l}\frac{(2p)!}{(2p-2l)!l!}\bigg(1+\frac{-1+5l^2-8lp}{4}\frac{\lambda}{N}\\
       &\quad +\frac{2l-21l^2-62l^3+225l^4+48lp+48l^2p-720l^3p+122p^2+84lp^2+576l^2p^2}{288}\frac{\lambda^2}{N^2}+O(N^{-3})\bigg), 
       \\
       \mathfrak{F}_{l}(l+1)&=(a+1)^{2p-2l+2} \Big(-\frac{a\lambda}{N}\Big)^{l+1}\frac{(2p)!}{(2p-2l-2)!l!} \Big(\frac{1}{2}+\frac{14+5l+15l^2-16p-30lp+6p^2}{24}\frac{\lambda}{N}+O(N^{-2})\Big), 
       \\
       \mathfrak{F}_{l}(l+2)&=(a+1)^{2p-2l+4} \Big(-\frac{a\lambda}{N}\Big)^{l+2}\frac{(2p)!}{(2p-2l-4)!l!}\Big(\frac{1}{8}+O(N^{-1})\Big),
   \end{align*}
   as $N \to \infty$. 
   Combining these expansions with Lemma \ref{lem:sumj}, the proof is completed after straightforward computation and simplification using the recurrence relation for the regularised incomplete beta function \cite[Eq.~(8.18.17)]{NIST}:
   \begin{equation}
       I_x(a,b)=I_x(a+1,b-1)+\frac{\Gamma(a+b)}{\Gamma(a+1)\Gamma(b)} x^a(1-x)^{b-1}.
   \end{equation} 
\end{proof}

\subsection{Proof of Theorem~\ref{Thm_limiting density}}
 
In this subsection, we show Theorem~\ref{Thm_limiting density}. 
We consider the Stieltjes transform
\begin{equation}
    G(y)\equiv G^{(a)}(y):=\int_{a}^{1} \frac{\rho^{(a)}(x)}{y-x} dx, \qquad y\in\C\backslash [a,1]
\end{equation}
of the limiting spectral density $\rho^{(a)}$. By definition, we have 
\begin{equation}
    G(y)=\sum_{p=0}^{\infty} \frac{1}{y^{p+1}}\int_{a}^{1} x^{p}{\rho^{(a)}}(x)dx=\sum_{p=0}^{\infty}\frac{1}{y^{p+1}}\mathcal{M}_{p,0}.
\end{equation}
Then it follows from \eqref{def of beta ftn} and \eqref{def of mathcal Mp0} that 
\begin{align*}
    G(y) &=\frac{1}{y}+\frac{1}{\lambda y}\sum_{p=1}^{\infty}\frac{1}{y^{p}}\sum_{l=0}^{\floor{p/2}}(a+1)^{p-2l}(-a)^{l}\binom{2l}{l}\binom{p}{2l}\int_{0}^{1-e^{-\lambda}}t^{l}(1-t)^{p-l-1}\, dt
    \\
    &=\frac{1}{y}+\frac{1}{\lambda y}
    \int_{0}^{1-e^{-\lambda}}\frac{1}{1-t}\sum_{l=0}^{\infty}\binom{2l}{l}\Big(\frac{-at(1-t)}{ y^{2}}\Big)^{l}\sum_{p=2l}^{\infty}\binom{p}{2l}\Big(\frac{(a+1)(1-t)}{y}\Big)^{p-2l}\, dt.
\end{align*}
For the inner summation, we apply the generalised binomial theorem:
\begin{equation*}
    \sum_{p=2l}^{\infty}\binom{p}{2l} z^{p-2l}= (1-z)^{-(2l+1)}, \qquad \abs{z}<1.
\end{equation*}
In our setting, the condition $| (a+1)(1-t)/y|<1$ is satisfied whenever
\begin{equation} \label{eqn:yrange0}
    \abs{y}> \abs{a+1}{e^{-\lambda}}.
\end{equation}
Therefore, for $y$ satisfying this range, we obtain
\begin{align*}
  G(y)= =\frac{1}{y}+\frac{1}{\lambda y}
    \int_{0}^{1-e^{-\lambda}}\frac{1}{1-t}\sum_{l=0}^{\infty}\binom{2l}{l}\Big(\frac{-at(1-t)}{ y^{2}}\Big)^{l} \Big(1-\frac{(a+1)(1-t)}{y}\Big)^{-(2l+1)}.  
\end{align*} 
Furthermore, by the generalised binomial theorem, we have the identity 
\begin{equation}
    \sum_{l=0}^{\infty}\binom{2l}{l}z^{l}=\frac{1}{\sqrt{1-4z}},\qquad \abs{z}<\frac14 . 
\end{equation}
In our context, this applies provided that 
\begin{equation} 
    \Big|{\frac{-at(1-t)}{(y-(a+1)(1-t)^{2}}}\Big|<\frac{1}{4}, \qquad t\in (0,1-e^{-\lambda})
\end{equation}
which is satisfied when $y$ lies in the range
\begin{equation}\label{eqn:yrange}
    y>
    \begin{cases}
        \mathsf u+\mathsf v &\lambda<\log(1-a), 
        \smallskip 
        \\    
        1&\lambda\geq\log(1-a), 
    \end{cases} \quad \textup{ and } \quad  y<
    \begin{cases}
        \mathsf u- \mathsf v&\lambda<\log(1-a)-\log(-a), 
          \smallskip 
        \\
        a&\lambda\geq \log(1-a)-\log(-a), 
    \end{cases} 
\end{equation} 
where $\mathsf{u}$ and $\mathsf v$ were defined in \eqref{def of u v(lambda) main}. 
Consequently, for all \( y \) in the range specified by \eqref{eqn:yrange0} and \eqref{eqn:yrange}, we obtain the simplified expression
\begin{equation}
    G(y) = \frac{1}{\lambda} \int_{0}^{1 - e^{-\lambda}} \frac{1}{1 - t} \cdot 
    \frac{1}{\sqrt{(y - (a+1)(1 - t))^2 + 4a t(1 - t)}} \, dt.
\end{equation}
 
To derive the density \( \rho^{(a)} \), fix \( x \in (a, 1) \) such that
\begin{equation}\label{eqn:x0x1 equation}
    (x-(a+1)(1-t))^{2}+4at(1-t)<0
\end{equation}
for
\begin{equation*}
    t\in(x_{0}-x_{1},\min\set{1-e^{-\lambda}, x_{0}+x_{1}}).
\end{equation*}  
Here, the constants $x_0$ and $x_1$ were defined in \eqref{eqn:def of x0x1}. Note that \( x_0 \pm x_1 \) are the roots of the quadratic expression on the left-hand side of \eqref{eqn:x0x1 equation}. Then, by the Stieltjes inversion formula, we obtain
\begin{equation}\label{eqn:integral rep of density}
\begin{split}
    {\rho}^{(a)}(x)
    &=\frac{1}{\pi}\lim_{\varepsilon\rightarrow0+} \im G(x-i\varepsilon)\\
    &=\frac{1}{\pi{\lambda}}\int_{x_{0}-x_{1}}^{\min\set{1-e^{-\lambda},x_{0}+x_{1}}}\frac{1}{1-t}\frac{1}{\sqrt{-(x-(a+1)(1-t))^2-4at(1-t)}} \, dt. 
\end{split}
\end{equation}
This integral can be simplified into a closed-form expression. To proceed, we first assume that $a\in[-1,0)$ and let
\begin{equation*}
    f(t):=\frac{1}{1-t}\frac{1}{\sqrt{-(x-(a+1)(1-t))^2-4at(1-t)}}\, dt. 
\end{equation*}
Then by \eqref{eqn:yrange}, it follows that
\begin{equation}\label{eqn:density}
    \begin{split}
        \rho^{(a)}(x)
        &=\frac{1}{\pi\lambda}\int_{x_{0}-x_{1}}^{1-e^{-\lambda}}f(t)dt\mathbbm{1}_{(\mathsf u- \mathsf v,\mathsf{u}+\mathsf{v})}\\
        &+\begin{cases}
            0 & \textup{if }\lambda \in (0, \log(1-a), \\
            \displaystyle\frac{1}{\pi\lambda} \int_{x_{0}-x_{1}}^{x_{0}+x_{1}}f(t)dt\mathbbm{1}_{(\mathsf u+ \mathsf v,1)} & \textup{if }\lambda \in ( \log(1-a), \log(1-a)-\log(-a)), \\
            \displaystyle\frac{1}{\pi\lambda}\int_{x_{0}-x_{1}}^{x_0+x_1}f(t)dt\mathbbm{1}_{(a,\mathsf u-\mathsf v)\cup(\mathsf u+ \mathsf v,1)} & \textup{if }\lambda \in ( \log(1-a)-\log(-a),\infty). 
        \end{cases}
    \end{split}
\end{equation} 
Note that $f(t)$ is of the form
\begin{equation*}
    f(t)=\frac{1}{1-a}\frac{1}{1-t}\frac{1}{\sqrt{(t-(x_0-x_1))(x_0+x_1-t)}}.
\end{equation*}
By using  
\begin{equation*}
    \int_{\alpha}^{\beta}\frac{1}{1-t}\frac{dt}{\sqrt{(t-\alpha)(\beta-t)}}=\pi\abs{(1-\alpha)(1-\beta)}^{-1/2},
\end{equation*}
we obtain
\begin{equation}\label{eqn:densityintegral1}
    \frac{1}{\pi\lambda}\int_{x_0-x_1}^{x_0+x_1}\frac{1}{1-t}\frac{dt }{\sqrt{(-(x-(a+1)(1-t))^2-4at(1-t)}}=\frac{1}{\lambda\abs{x}}.
\end{equation}
Similarly, by using
\begin{equation*}
    \int_{\alpha}^{\gamma}\frac{1}{1-t}\frac{dt}{\sqrt{(t-\alpha)(\beta-t)}}=\int_{\theta_{\gamma}}^{\pi}\frac{2 d\theta}{2-({\alpha+\beta})-({\beta-\alpha})\cos\theta},\qquad (\alpha<\gamma<\beta), 
\end{equation*}
where $\cos\theta_{\gamma}=(2\gamma-\alpha-\beta)/(\beta-\alpha)$, we obtain
\begin{equation}\label{eqn:densityintegral2}
    \frac{1}{\pi\lambda}\int_{x_0-x_1}^{1-e^{-\lambda}}\frac{1}{1-t}\frac{dt }{\sqrt{(-(x-(a+1)(1-t))^2-4at(1-t)}}
    = \frac{2}{\pi\lambda\abs{x}}\arctan{\sqrt{\frac{1-x_0-x_1}{1-x_0+x_1}\frac{1-e^{-\lambda}-x_0+x_1}{x_0+x_1-1+e^{-\lambda}}}}.
\end{equation}
Substituting \eqref{eqn:densityintegral1} and \eqref{eqn:densityintegral2} into \eqref{eqn:density} yields the explicit formula for the density function given in \eqref{def of limiting density}. Since the leading-order term $\rho^{(a)}$ is non-negative and integrates to one, it defines a probability density and can thus be integrated against any continuous test function.
For $a<-1$,  note that the symmetry \eqref{eqn:symmetry of spectral moment} implies a corresponding symmetry for the Stieltjes transform:
\begin{equation}
    G^{(1/a)}(y)=-\frac{1}{a}G^{(a)}(y).
\end{equation}
which in turn leads to the symmetry of the limiting density stated in \eqref{symmetry of rho a}. This completes the proof. \qed

\subsection{Proof of Proposition \ref{prop:Asymptotic zero distribution}}
In this subsection, we prove Proposition~\ref{prop:Asymptotic zero distribution} by exploiting the relation between the recurrence coefficients of the Al-Salam--Carlitz polynomials and the asymptotic distribution of their zeros. 

To apply the general result of Kuijlaars and Van Assche~\cite{KA99}, we work with the corresponding \emph{orthonormal} polynomials.
More precisely, in view of the orthogonality relation~\eqref{def of orthogonality AlSalam}, the orthonormal polynomials \( p_n^{(a)}(x) \equiv p_n^{(a)}(x;q) \) associated with the Al-Salam--Carlitz weight~\eqref{def of AlSalam weight} are given by
\begin{equation}
    p_{n}^{(a)}(x)=\Big((-a)^{n}(q;q)_{n}(1-q)q^{\frac{n(n-1)}{2}}\Big)^{-1/2}U_{n}^{(a)}(x).
\end{equation}
By~\eqref{def of three term AlSalam}, these orthonormal polynomials satisfy the three-term recurrence relation
\begin{equation} \label{eqn:orthonormal alsalam threeterm}
    x p_n^{(a)}(x) = a_{n+1} p_{n+1}^{(a)}(x) + b_n p_n^{(a)}(x) + a_n p_{n-1}^{(a)}(x),
\end{equation}
with recurrence coefficients
\begin{equation}
    a_n := \sqrt{ -a (1 - q^n) q^{n-1} }, \qquad b_n := (a + 1) q^n.
\end{equation}

With \( q \) scaled according to~\eqref{def of q scaling}, the recurrence coefficients admit the following limits:
\begin{equation}
  \lim_{n\rightarrow \infty} a_{n} \Big|_{q= e^{- \lambda/n }} = \sqrt{ -a e^{-\lambda} (1 - e^{-\lambda}) }, 
  \qquad 
  \lim_{n\rightarrow \infty} b_{n} \Big|_{q= e^{- \lambda/n }} = (a + 1) e^{-\lambda}.
\end{equation}
Define
\begin{equation} \label{def of alphabeta}
    \alpha(\lambda) := (a+1)e^{-\lambda} - 2 \sqrt{ -a e^{-\lambda} (1 - e^{-\lambda}) }, 
    \qquad 
    \beta(\lambda) := (a+1)e^{-\lambda} + 2 \sqrt{ -a e^{-\lambda} (1 - e^{-\lambda}) }.
\end{equation}
In terms of the parameters \( \mathsf{u} \) and \( \mathsf{v} \) introduced in~\eqref{def of u v(lambda) main}, we have 
\[
\alpha(\lambda) = \mathsf{u} - \mathsf{v}, \qquad \beta(\lambda) = \mathsf{u} + \mathsf{v}.
\]
Furthermore, we define a measure \( \omega_{[\alpha,\beta]} \) supported on the interval \( [\alpha, \beta] \) with density
\begin{equation}
    \frac{d\omega_{\alpha,\beta}}{dx} =
    \begin{cases}
        \displaystyle\frac{1}{\pi \sqrt{(\beta - x)(x - \alpha)}} & \text{if } x \in (\alpha, \beta),\smallskip \\
        0 & \text{otherwise.}
    \end{cases}
\end{equation}

Recall that \( \nu_n \) denotes the empirical zero distribution of the Al-Salam--Carlitz polynomials, as defined in~\eqref{def of ESD for AlSalam}. Then, applying the general result of~\cite[Theorem~1.4]{KA99}, we obtain in our setting
\begin{equation}
    \nu := \lim_{n \to \infty} \nu_n = \frac{1}{\lambda} \int_0^\lambda \omega_{[\alpha(s), \beta(s)]} \, ds.
\end{equation}
It follows that the support of the limiting zero distribution \( \nu \) is given by
\begin{equation}
    \Big[ \inf_{0 < s < \lambda} \alpha(s), \sup_{0 < s < \lambda} \beta(s) \Big],
\end{equation}
see also~\cite[Eq.~(1.10)]{KA99}. One can directly verify that this agrees with the expression in~\eqref{def of support}.
Moreover, the density function of the limiting zero distribution is given by
\begin{equation}
    x \mapsto \frac{1}{\pi \lambda} \int_{\lambda_{-}(x)}^{\min\{ \lambda, \lambda_{+}(x) \}} \frac{ds}{\sqrt{ (\beta(s) - x)(x - \alpha(s)) }},
\end{equation} 
where \( \lambda_{\pm}(x) \) denote the endpoints of the interval
\[
\{ s > 0 : \alpha(s) \leq x \leq \beta(s) \}.
\]
This expression agrees with the integral representation on the right-hand side of~\eqref{eqn:integral rep of density}, under the change of variables \( t = 1 - e^{-s} \), which recovers the limiting spectral density in~\eqref{def of limiting density}. \qed

\bibliographystyle{abbrv}

\end{document}